\newtheorem{theorem}{Theorem}
\newtheorem{lemma}[theorem]{Lemma}
\newtheorem{corollary}[theorem]{Corollary}
\newcommand{\bR}{\mathbb{R}}
\newcommand{\bN}{\mathbb{N}}
\newcommand{\bQ}{\mathbb{Q}}
\newcommand{\setA}{\mathscr{A}}
\newcommand{\setB}{\mathscr{B}}
\newcommand{\setC}{\mathscr{C}}
\newcommand{\setD}{\mathscr{D}}
\newcommand{\setE}{\mathscr{E}}
\newcommand{\setR}{\mathscr{R}}
\newcommand{\1}{\mathbf{1}}
\newcommand{\ent}{H}
\newcommand{\arena}{\mathbf{A}}
\newcommand{\forest}{\mathbf{F}}
\newcommand{\pref}{{\mathop{\mathtt{pref}}}}
\newcommand{\meanp}{\mathop{\mathtt{mp}}}
\newcommand{\minimax}{\mathop{\mathtt{mm}}}
\newcommand{\conv}{\mathop{\mathtt{conv}}}
\newcommand{\transpose}{\mathsmaller{\mathsf{T}}}
\let\ge\geqslant
\let\le\leqslant
\let\geq\geqslant
\let\leq\leqslant
\newcommand{\MID}[0]{\;\middle|\;} 
\newcommand{\Mid}[0]{\;\big|\;}    
\title{Entropy Games  and Matrix Multiplication Games\footnote{The support of Agence Nationale de la Recherche under the project EQINOCS (ANR-11-BS02-004) is gratefully acknowledged.
The results of Section~\ref{sec:koz} were obtained at the Institute for Information Transmission Problems,  Russian Academy of Science, by V.~Kozyakin at the expense of the Russian Science Foundation  (project 14-50-00150).}}
\author[1]{Eugene Asarin}
\author[2]{Julien Cervelle}
\author[1]{Aldric Degorre}
\author[2]{C\u at\u alin Dima}
\author[1]{Florian Horn}
\author[3]{Victor Kozyakin}
\affil[1]{LIAFA, University Paris Diderot and CNRS, France}
\affil[2]{LACL, University Paris-Est Cr\'eteil, France}
\affil[3]{IITP, Russian Academy of Science, Russia}
\begin{document}
\maketitle
\begin{abstract} Two intimately related  new classes of games are introduced and studied: entropy games (EGs) and matrix multiplication games (MMGs).
An EG is played on a finite arena by two-and-a-half players: Despot, Tribune and the non-deterministic People.  Despot wants to make the set of possible People's behaviors as small as possible, while Tribune wants to make it  as large as possible.
An MMG  is played by two players that alternately write matrices from some predefined finite sets. One wants to maximize the growth rate of the product, and the other to minimize it. We show that in general MMGs are undecidable in quite a strong sense.
On the positive side,  EGs correspond to a subclass of MMGs,  and we  prove that such MMGs and EGs are determined, and that the optimal strategies are simple. The complexity of solving such games is in $\NP\cap\coNP$.
\end{abstract}
\section{Introduction}
In recent years, some of us have been  working on a new non-probabilistic quantitative approach
to classical models in computer science based on the notion of language entropy (growth rate).
This approach has produced new insights about timed automata and languages \cite{AsaBD15:IC}
as well as temporal logics \cite{AsaBDDM14:CSL}.
In this article, we apply it to game theory and obtain a new natural class of games that we call \emph{entropy games} (EGs).
Such a game is played on  a finite arena in a turn-based way, in infinite time, by two-and-a-half\footnote{Although this term is mostly used for stochastic games, it is also an appropriate description of EGs.} players:
\emph{Despot}, \emph{Tribune} and the non-deterministic \emph{People}.
Whenever Despot and Tribune decide on their strategies $\sigma$ and $\tau$,
it leaves a set $L(\sigma,\tau)$ (an $\omega$-language) of possible behaviors of People.
Despot wants $L(\sigma,\tau)$ to be as small as possible, while Tribune wants to make
this language as large as possible.
Formally the payoff of the game is the entropy  of $L(\sigma,\tau)$,
with Despot minimizing and Tribune maximizing this value. 

Potentially these games can be used to model hidden channel capacity problems in computer security, 
where the aim of the security policy (Despot) is to minimize the information flow whatever the environment (Tribune) does. 
EGs can also be rephrased in terms of population dynamics, where one player aims to maximize the population growth rate, 
while the other minimizes it; applications of this setting to medicine, ecology, and computer security (virus propagation) are still to be explored. 
On the theoretical side, well-known mean-payoff games on finite graphs can be seen as a subclass of our EGs.  
However the purpose of this paper is to explore the theoretical setting of EGs, we therefore leave 
applications and identification of relevant subclasses of EGs for further work.

The second class of objects  studied is  that of \emph{matrix multiplication games} (MMGs), 
which came naturally when analyzing EGs and is, in our opinion, novel and interesting on its own. 
In such a game,  two players, Adam and Eve, each possess a set of matrices,  $\setA$ and $\setE$, respectively. 
The game is  played in a turn-based way, in infinite time. At every turn, the player writes a matrix from his or her set. 
Adam wants the norm of the product of matrices $A_1E_1A_2E_2\dots$ obtained to be as small  as possible (in the limit), 
while Eve wants it to be as large as possible. Formally, the payoff is the growth rate of the norm of the product.

The main interest of MMGs comes from the observation that, in the case when one of the two players is trivial (i.e.~his or her set contains only the identity matrix), the game turns into the classical, important,  and difficult, problem of computing the joint spectral radius or the joint spectral subradius of a set of matrices, see \cite{Blondel-TsitsiklisB97,Jungers:09}. Thus, MMGs is a game (or alternating) generalization of this problem. It is thus unsurprising that, in the general case, MMGs are even more difficult to analyze. We prove that several natural problems for MMGs are undecidable, in particular it is impossible to distinguish between games with value 0 and 1 (and thus it is impossible to approximate the value of an MMG).

Fortunately, MMGs have tractable subclasses. We reduce EGs to a particular subclass of MMGs (referred to as IMMGs), when the sets $\setA$ and $\setE$ are so-called \emph{independent row uncertainty sets} of non-negative matrices \cite{BN:SIAMJMAA09}, and show that for this class the game can be solved: it is determined, and for each player the optimal strategy is to write one and the same matrix at every turn. This result is based on a new, quite technical, \emph{minimax} theorem on the spectral radius of products of the type $AB$ where both $A$ and $B$ belong to sets of matrices with independent row uncertainties. We deduce that  EGs are determined, and that the optimal strategies for Despot and Tribune are positional. A careful complexity analysis of the games considered (EGs and IMMGs) allows to  prove that comparing their value to a rational constant can be done with complexity $\NP\cap\coNP$.

The article is structured as follows.
In Section~\ref{sec:prelim} we recall useful notions from linear algebra and language theory.
In Section~\ref{sec:2games} we formally define the two games and show how they are related, we also prove undecidability of general MMGs.
In Section~\ref{sec:koz} we prove the key technical minimax theorem for matrices.
In Section~\ref{sec:solving} we prove the main properties of EGs and IMMGs: determinacy, existence of simple strategies and complexity bounds.
In Section~\ref{sec:related} we  relate the EGs studied here to classical mean-payoff games
and a new kind of population games.
We conclude with a discussion on the perspectives. 
The Appendix contains proofs of all lemmas.

\section{Preliminaries}\label{sec:prelim}
\subsection{Some Linear Algebra}
Given two vectors $x, y \in \bR^{N}$, we write $x \ge y$, if
$x_i \ge y_i $ for each $1\le i\le N$.
Similar notation will be applied to matrices.
We denote by $\|\cdot\|$ the $1$-norm of vectors and matrices.
Note that, for non-negative vectors and matrices, $\|x\| = \sum_i x_i$.

Let $A$ be an  $(N\times N)$-matrix.
Its \emph{spectral radius} is defined as the maximal modulus of its eigenvalues and denoted by $\rho(A)$.
It characterizes the growth rate of $A^n$ for $n\to\infty$: according to Gelfand's formula
$
\rho(A)=\lim_{n\to\infty}\|A^n\|^{1/n}.
$
The spectral radius depends continuously on the matrix, and is monotone for non-negative matrices \cite[Corollary~8.1.19]{HJ2:e}:
$\rho(A)\le\rho(B)$ when $0\le A\le B$.
If $A> 0$, i.e.~all the elements of $A$ are positive, then by
the Perron-Frobenius theorem, the number $\rho(A)$ is a simple eigenvalue of the
matrix $A$, and all the other eigenvalues of $A$ are strictly less than
$\rho(A)$ in modulus. The eigenvector $v =(v_{1},v_{2},\ldots,v_{N})^{\transpose}$
corresponding to the eigenvalue $\rho(A)$ (normalized, for example, by the
equation $\sum v_i = 1$) is uniquely determined and
positive.

Following \cite{BN:SIAMJMAA09}, given $N$ sets of $M$-dimensional rows $\setA_{i}$ we define the \emph{IRU-set}
(independent row uncertainty set)  $\setA$ of $(N\times M)$-matrices
that
consists of all matrices of the form $A = (a_{ij})_{\stackrel{1\le i \le N}{\scriptscriptstyle 1\le j \le M}}$
wherein each of the rows $a_{i} = [a_{i1}, a_{i2}, \ldots, a_{iM}]$ belongs
to the respective  $\setA_{i}$.
We will need several simple properties of IRU-sets.
\begin{lemma}\label{lem:iru}
For an IRU-set $\setA$  formed by sets of rows
$\setA_{1},\setA_{2},\ldots,\setA_{N}$ the following holds:
\begin{enumerate}[\rm(i)]
\item	for any matrix $B$ the set $\setA B=\{AB\Mid A\in\setA\}$ is IRU as well;
\item the convex hull $\conv(\setA)$
is the IRU-set formed by the row sets $\conv(\setA_{1}), \ldots, \conv(\setA_{N})$;
\item the set $\setA$ is
 compact if and only if so are all the row  sets $\setA_{1}, \setA_{2},\ldots, \setA_{N}$.
\end{enumerate}
\end{lemma}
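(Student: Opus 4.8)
The unifying idea is that belonging to an IRU-set is a conjunction of $N$ independent per-row conditions, so any operation that acts row by row is compatible with the IRU structure. I will treat the three items in turn. For \textrm{(i)}, the key observation is that the $i$-th row of $AB$ is $a_i B$, where $a_i$ denotes the $i$-th row of $A$. Hence $\setA B$ is contained in the IRU-set $\setB$ built from the row sets $\setA_i B := \{a B \MID a \in \setA_i\}$, and conversely every matrix of $\setB$ has $i$-th row of the form $a_i B$ with $a_i \in \setA_i$; reassembling the rows $a_1,\dots,a_N$ into a single matrix $A \in \setA$ — legitimate precisely because the rows of an IRU-set range independently — exhibits that matrix as $AB$. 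So $\setA B = \setB$ is IRU.

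For \textrm{(ii)}, write $\setC$ for the IRU-set formed by $\conv(\setA_1),\dots,\conv(\setA_N)$. The inclusion $\conv(\setA)\subseteq\setC$ is immediate, since a finite convex combination $\sum_k \lambda_k A^{(k)}$ of matrices of $\setA$ has $i$-th row $\sum_k \lambda_k a_i^{(k)}\in\conv(\setA_i)$. For the reverse inclusion, given $C\in\setC$ with rows $c_1,\dots,c_N$, I would, for each $i$, fix a finitely supported probability distribution $\mu_i$ on $\setA_i$ with barycenter $c_i$ (available because $c_i\in\conv(\setA_i)$), and then form the product distribution $\mu = \mu_1\otimes\cdots\otimes\mu_N$ on $\setA_1\times\cdots\times\setA_N$. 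Identifying a tuple of rows with the corresponding matrix turns $\mu$ into a finitely supported distribution on $\setA$ whose barycenter is the matrix with $i$-th row equal to the barycenter of the $i$-th marginal, i.e.\ $c_i$; thus $C$ is a convex combination of matrices of $\setA$, so $C\in\conv(\setA)$.

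For \textrm{(iii)}, assembling rows into a matrix is a bijection between $\setA_1\times\cdots\times\setA_N$ and $\setA$, and it is a homeomorphism because it merely rearranges the finitely many real coordinates. In finite-dimensional Euclidean space a finite product of nonempty sets is compact iff each factor is, so $\setA$ is compact iff every $\setA_i$ is; if some $\setA_i$ is empty the equivalence holds trivially since then $\setA$ is empty. The only step with real content is the reverse inclusion in \textrm{(ii)}: the product-measure (barycenter) argument is the clean way to avoid the bookkeeping of merging $N$ a priori unrelated convex representations of the rows into a common one; items \textrm{(i)} and \textrm{(iii)}, and the easy inclusion in \textrm{(ii)}, are direct unwindings of the definitions.
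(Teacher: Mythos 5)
Your proof is correct and follows essentially the same route as the paper's: item (i) by identifying the $i$-th row of $AB$ as $a_iB$, item (iii) via the row-assembly homeomorphism with $\setA_1\times\cdots\times\setA_N$, and the reverse inclusion in (ii) by exactly the product-of-convex-combinations computation that the paper writes out as an explicit multi-index sum $\sum_{j_1}\cdots\sum_{j_N}\bigl(\prod_l \lambda^l_{j_l}\bigr)$, which your product-measure/barycenter phrasing merely repackages.
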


\subsection{Joint Spectral Radius and Subradius}

The \emph{joint spectral radius} \cite{RotaStr:IM60,DaubLag:LAA92,DaubLag:LAA01}
of a bounded set $\setA$ of $(N\times N)$-matrices  characterizes the maximal growth rate
of products of $n$ matrices from the set and admits the following equivalent definitions
(where the identity between the upper and the lower formulas constitutes the famous Berger-Wang Theorem~\cite{BerWang:LAA92}):
\begin{multline}\label{E-GSRad}
\hat{\rho}(\setA)=\lim_{n\to\infty}\sup\left\{\|A_{1}\cdots A_{n}\|^{1/n}\MID A_{i}\in\setA\right\}
=\inf_{n\ge1}\sup\left\{\|A_{1}\cdots A_{n}\|^{1/n}\MID A_{i}\in\setA\right\}\\
=\lim_{n\to\infty}\sup\left\{\rho(A_{1}\cdots A_{n})^{1/n}\MID A_{i}\in\setA\right\}
=\sup_{n\ge1}\sup\left\{\rho(A_{1}\cdots A_{n})^{1/n}\MID A_{i}\in\setA\right\}.
\end{multline}
For a compact (closed and bounded) set  $\setA$, the suprema in
\eqref{E-GSRad} may be replaced by maxima.

The \emph{joint spectral subradius} \cite{Gurv:LAA95}, or \emph{lower spectral radius}, corresponds to the minimal growth rate of products of matrices:
\begin{multline*}\label{E-LSRad0}
\check{\rho}(\setA)=
\lim_{n\to\infty}\inf\left\{\|A_{1}\cdots A_{n}\|^{1/n}\MID A_{i}\in\setA\right\}
=\inf_{n\ge1}\inf\left\{\|A_{1}\cdots A_{n}\|^{1/n}\MID A_{i}\in\setA\right\}
\\
=\lim_{n\to\infty}\inf\left\{\rho(A_{1}\cdots A_{n})^{1/n}\MID A_{i}\in\setA\right\}
=\inf_{n\ge1}\inf\left\{\rho(A_{1}\cdots A_{n})^{1/n}\MID A_{i}\in\setA\right\}.
\end{multline*}
The equivalence of the characterizations based on norms and on spectral radii is established in~\cite[Theorem~B1]{Gurv:LAA95} for finite sets $\setA$,
and in~\cite[Lemma~1.12]{Theys:PhD05} and~\cite[Theorem~1]{Czornik:LAA05} for arbitrary sets $\setA$.
Calculating the joint and lower spectral radii is a challenging problem,
and only in exceptional cases these characteristics may be found explicitly, see, e.g., \cite{Jungers:09,Koz:IITP13} and the bibliography therein.
%
The case of compact IRU-sets of non-negative matrices is such an exception, for which $\hat{\rho}$ and $\check{\rho}$  admit a simple characterization: as stated in \cite[Theorem~2]{NesPro:SIAMJMAA13}, for such a set  $\setA$ the following equalities hold:
\begin{equation}\label{E:JSR-LSR}
\hat{\rho}(\setA)= \max_{A \in \setA} \rho(A),\quad
\check{\rho}(\setA)=\min_{A \in \setA} \rho(A).
\end{equation}
Compact IRU-sets of non-negative matrices  and their convex hulls have another useful property:
as is shown in \cite[Corollary~1]{NesPro:SIAMJMAA13},
\begin{equation}\label{E:minmax-conv}
\max_{A \in \setA} \rho(A)=\max_{A \in \conv(\setA)} \rho(A),\quad
\min_{A \in \setA} \rho(A)=\min_{A \in \conv(\setA)} \rho(A),
\end{equation}
and hence $\hat{\rho}(\setA)=\hat{\rho}(\conv(\setA))$,  $\check{\rho}(\setA)=\check{\rho}(\conv(\setA))$.

\subsection{Entropy of an $\omega$-Language}
The notion of entropy of a language and methods for computing it in the case of regular languages were introduced in \cite{Chomsky58}
for finite words and in \cite{staigerentropy} for infinite ones. We will use the latter definition.
The entropy of an $\omega$-language $L \subseteq \Sigma^{\omega}$ is defined as
\[
\ent(L) = \limsup_{n \to \infty} \frac{\log |\pref_n(L)|}{n}
\]
(all the logarithms here are in base $2$), where $\pref_n(L)$ is the set of prefixes of length $n$ of infinite words in  $L$.
Intuitively, $\ent(L)$ is
the information content (``bandwidth''), measured in  bits per symbol, in typical words of the language.
In particular,  $\ent(\Sigma^{\omega}) = \log|\Sigma|$.

For a regular  $L \subseteq \Sigma^\omega$ accepted by a given B\"uchi automaton,
its entropy can be effectively computed as follows:
compute the (finite) automaton  recognizing $\pref(L)$, determinize it, and compute the entropy as the logarithm of the spectral radius of the adjacency matrix of the automaton obtained.

\section{The Two Games}\label{sec:2games}

\subsection{Entropy Games}

Consider the arena  $(D,T,\Sigma,\Delta)$ where $D$ and $T$ are disjoint finite sets of vertices (of two players),
$\Sigma$ a finite alphabet of actions and  $\Delta\subseteq  T\times \Sigma\times D \cup  D\times \Sigma\times T$  is a transition relation.
Given such an arena, we define  a game with two-and-a-half players: Despot, Tribune and People. The latter plays non-deterministically and  counts for half a player.
People chooses the initial state in $D$.
When the game is in a state $d$ of $D$, Despot plays an action $a\in  \Sigma$ and the game changes  to  some $t\in T$ (chosen by People) such that $(d,a,t)\in \Delta$.
Then, Tribune plays an action $b\in  \Sigma$ and the game changes its state to $d'\in D$, again chosen by People and such that $(t,b,d')\in \Delta$.
It is again Despot's turn.
The players must not block the game: they always choose an action that has
a corresponding transition  $(d,a,\cdot)\in \Delta$, or  $(t,b,\cdot)\in\Delta$, respectively.
We assume that the arena is non-blocking: at every state there is at least one such transition.
Figure~\ref{fig:EG-example} shows an example of such an arena, which we will use as a running example in this paper.
\newcommand{\hyperedge}[7][black]{
      \coordinate (mid#3#4) at (barycentric cs:#3=1,#4=1);
      \coordinate (first#2#3#4) at (barycentric cs:#2=1,mid#3#4=1) (first#2#3#4) node[#7,color=#1] {#5};
      \coordinate (control#2#3#4) at (barycentric cs:#2=1,mid#3#4=2);
     \draw[relative,color=#1]
  (#2)  edge[out=#6,in=0] (first#2#3#4);
  \draw[->,out=#6,in=#6+180,color=#1] (first#2#3#4)  ..controls(control#2#3#4).. (#3);
 \draw[->,out=#6,in=#6+180,color=#1] (first#2#3#4) ..controls(control#2#3#4)..  (#4);
}

\newcommand{\hyperedgea}[6][black]{\hyperedge[#1]{#2}{#3}{#4}{#5}{#6}{above}}
\newcommand{\hyperedgeb}[6][black]{\hyperedge[#1]{#2}{#3}{#4}{#5}{#6}{below}}

\newcommand{\edgea}[5][black]{
     \draw[color=#1]   (#2)  edge[->,bend left=#5] node[above] {#4} (#3);
}
\newcommand{\edgeb}[5][black]{
     \draw[color=#1]   (#2)  edge[->,bend right=#5] node[below] {#4} (#3);
}
\newcommand{\edgeab}[5][black]{
     \draw[color=#1]   (#2)  edge[->,bend left=#5] node[below] {#4} (#3);
}

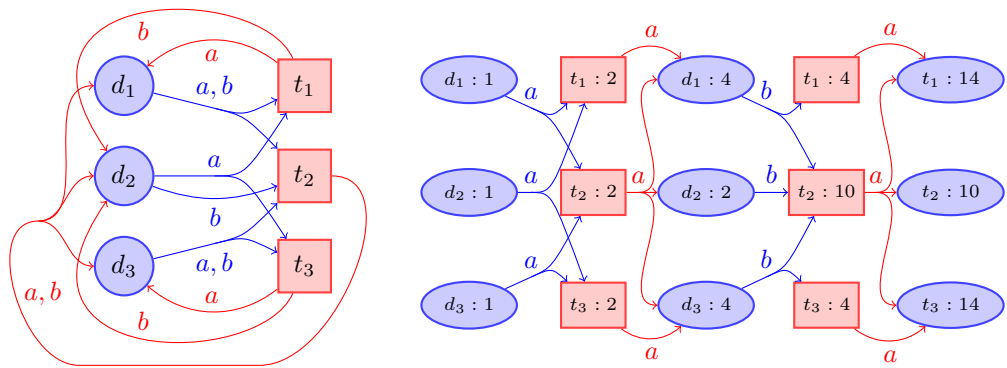
\begin{figure}[htbp!]
\begin{center}
\begin{tikzpicture}[scale=.6]
\tikzstyle{despot}=[circle,thick,draw=blue!75,fill=blue!20,minimum size=5mm]
\tikzstyle{tribune}=[rectangle,thick,draw=red!75,fill=red!20,minimum size=7mm]
\tikzstyle{people}=[regular polygon, regular polygon sides=3,fill=black!25]
\tikzstyle{desptrans}=[draw=blue!75, bend right=50]

\node[despot] (D0) at (-4,1) {$d_3$};
\node[despot] (D1) at (-4,3) {$d_2$};
\node[despot] (D2) at (-4,5) {$d_1$};
\node[tribune] (T0) at (0,1) {$t_3$};
\node[tribune] (T1) at (0,3) {$t_2$};
\node[tribune] (T2) at (0,5) {$t_1$};

\hyperedgeb[blue]{D0}{T0}{T1}{$a,b$}{0};
\hyperedgea[blue]{D1}{T0}{T2}{$a$}{0};
\hyperedgea[blue]{D2}{T1}{T2}{$a,b$}{0};
\edgeab[blue]{D1}{T1}{$b$}{-20};
\edgeb[red]{T2}{D2}{$a$}{40};
\edgea[red]{T0}{D0}{$a$}{40};
\draw[->,color=red] (T2)  ..controls (-1, 7.3) and (-7, 7.3) .. node[below] {$b$} (D1);
\draw[->,color=red] (T0)  ..controls (-1,-1.3) and (-7,-1.3) .. node[above] {$b$} (D1);
\draw[color=red]  (T1)  edge[out=0,in=0] (-1,-1.2)
 (-1,-1.2) edge[out=180,in=0] (-5,-1.2)
 (-5,-1.2)  edge[out=180,in=180]  node[right] {$a,b$} (-5.9,2)
 (-5.9,2) edge[->,out=0, in=180] (D0)
    edge[->,out=0, in=180] (D1)
    edge[->,out=0, in=180] (D2);
\end{tikzpicture}
\begin{tikzpicture}[scale=.5]
\tikzstyle{despot}=[font=\scriptsize,ellipse,thick,draw=blue!75,fill=blue!20,minimum size=3mm]
\tikzstyle{tribune}=[font=\scriptsize,rectangle,thick,draw=red!75,fill=red!20,minimum size=6mm]
\tikzstyle{people}=[regular polygon, regular polygon sides=3,fill=black!25]
\tikzstyle{desptrans}=[draw=blue!75, bend right=50]


\node[despot] (D0) at (-12,0) {$d_3: 1$};
\node[despot] (D1) at (-12,3) {$d_2: 1$};
\node[despot] (D2) at (-12,6) {$d_1: 1$};

\node[tribune] (T0a) at (-8.7,0) {$t_3: 2$};
\node[tribune] (T1a) at (-8.7,3) {$t_2: 2$};
\node[tribune] (T2a) at (-8.7,6) {$t_1: 2$};
\hyperedgea[blue]{D0}{T0a}{T1a}{$a$}{0};
\hyperedgea[blue]{D1}{T0a}{T2a}{$a$}{0};
\hyperedgea[blue]{D2}{T1a}{T2a}{$a$}{0};

\node[despot] (D0aa) at (-5.7,0) {$d_3: 4$};
\node[despot] (D1aa) at (-5.7,3) {$d_2: 2$};
\node[despot] (D2aa) at (-5.7,6) {$d_1: 4$};
\edgea[red]{T2a}{D2aa}{$a$}{40};
\edgeb[red]{T0a}{D0aa}{$a$}{40};
\coordinate (forkaa) at (barycentric cs:T1a=3,D0aa=1,D2aa=1);
\draw[color=red]  (T1a) edge  (forkaa) (forkaa) node[above] {$a$}
  (forkaa)
  edge[->,out=0, in=180] (D0aa)
  edge[->,out=0, in=180] (D1aa)
  edge[->,out=0, in=180] (D2aa);

\node[tribune] (T0aab) at (-2.5,0) {$t_3: 4$};
\node[tribune] (T1aab) at (-2.5,3) {$t_2: 10$};
\node[tribune] (T2aab) at (-2.5,6) {$t_1: 4$};
\hyperedgea[blue]{D0aa}{T0aab}{T1aab}{$b$}{0};
\hyperedgea[blue]{D2aa}{T1aab}{T2aab}{$b$}{0};
\edgea[blue]{D1aa}{T1aab}{$b$}{0};

\node[despot] (D0aaba) at (0.8,0) {$t_3: 14$};
\node[despot] (D1aaba) at (0.8,3) {$t_2: 10$};
\node[despot] (D2aaba) at (0.8,6) {$t_1: 14$};
\edgea[red]{T2aab}{D2aaba}{$a$}{40};
\edgeb[red]{T0aab}{D0aaba}{$a$}{40};
\coordinate (forkaaba) at (barycentric cs:T1aab=3,D0aaba=1,D2aaba=1);
\draw[color=red]  (T1aab) edge  (forkaaba) (forkaaba) node[above] {$a$}
  (forkaaba)
  edge[->,out=0, in=180] (D0aaba)
  edge[->,out=0, in=180] (D1aaba)
  edge[->,out=0, in=180] (D2aaba);

\end{tikzpicture}
\end{center}
\caption{\emph{Left.} Arena of our running example of an entropy game. Circles are states of the Despot while squares are states of the Tribune. At each move, the player has to choose between actions $a$ and $b$, the outcome of which may sometimes be non-deterministic (e.g. when Despot plays $a$ in state $d_2$, the next state may non-deterministically be either $t_1$ or $t_3$).
\emph{Right.} A finite play on this arena. Despot plays $ab$ (whatever his opponent does) while Tribune plays $aa$. We only give, for each step, the number of words that end up in each state controlled by the active player.}\label{fig:EG-example}
\end{figure}

A \emph{play} of the EG is a finite or infinite sequence $\pi\in (D\cdot \Sigma\cdot T\cdot \Sigma)^\infty$ compatible with the transition relation $\Delta$.
Note that four letters in a row correspond to one turn of the game.
A \emph{strategy} $\sigma$ for Despot  is a function $(D\cdot \Sigma\cdot T\cdot \Sigma)^*\cdot D\to \Sigma$ that,
given any finite play ending in a $D$ state, outputs an action taken by Despot.
The strategy is positional if it only depends on the current state of the game, i.e.~it can be expressed just as $\sigma (d)$.
A \emph{strategy} $\tau$ for Tribune is a function $(D\cdot \Sigma\cdot T\cdot \Sigma)^*\cdot D\cdot \Sigma\cdot T\to \Sigma$ which,
given any finite play ending in a $T$ state, outputs the action taken by Tribune.
The strategy is positional if it only depends on the current state of the game.
%
In a natural way we define plays compatible with a Despot's strategy $\sigma$, or with a Tribune's strategy $\tau$.
Then, given $\sigma$ and $\tau$, we have an $\omega$-language $L(\sigma,\tau)$ containing all the plays compatible with $\sigma$ and $\tau$.
In other words, $L(\sigma,\tau)$ is the set of runs that People can choose if 
Despot and Tribune commit themselves to $\sigma$ and $\tau$.
What makes EGs different from other games (parity/mean-payoff etc.) is that the payoff does not depend on a single run of the game, but on the  whole set of possible runs. More precisely, the payoff (the amount that Despot pays to Tribune) is defined as
$$
 P(\sigma,\tau)=\limsup_{n\to\infty}|\pref_{4n}(L(\sigma,\tau))|^{1/n},
$$
that is the growth rate (w.r.t.~the number of turns) of the number of plays available to the People under 
the strategies $\sigma$ and $\tau$. 
Note that the payoff is a monotone function of the entropy of $L(\sigma,\tau)$, indeed
 $$P(\sigma,\tau)=2^{4\ent(L(\sigma,\tau))},
 $$ 
 i.e.~Despot tries to diminish the entropy while Tribune aims to augment it.

\subsection{Matrix Multiplication Games}
Let $\setA$  be a set of $M\times N$-matrices and $\setE$ of $N\times M$-matrices.
The MMG between two players, Adam and Eve, is played  as follows: in turn, for every $i\in \bN$,
Adam writes a matrix  $A_i\in \setA$ and then Eve  writes a matrix  $E_i\in \setE$.
Formally, we define a \emph{play} as an infinite sequence $A_1E_1A_2E_2\dots A_iE_i\dots$ with $A_i\in\setA$ and $E_i\in \setE$.
A strategy for Adam is a function $\sigma:(\setA\cdot\setE)^*\to \setA$ that maps any finite history (which is a sequence of matrices)
to Adam's next move.
Similarly, a strategy for Eve is a mapping $\tau: (\setA\cdot\setE)^*\cdot\setA\to\setE$.
A strategy is called \emph{constant} if it does not depend on the history, i.e.~is given by just one matrix: $\sigma=A\in\setA$ or $\tau=E\in\setE$.
We define a play compatible with a strategy $\sigma$ (or  $\tau$) in a natural way.
Note that, given a strategy $\sigma$ for Adam and a strategy $\tau$ for Eve, there exists a unique play $\pi(\sigma,\tau)$  compatible with both of them.
The payoff of a play $\pi=A_1E_1A_2E_2\dots A_iE_i\dots$ (that is, the amount that Adam pays to Eve)
is the growth rate of the norm of the infinite product of matrices:
$$
P(\pi)=P(\sigma,\tau) = \limsup_{k\to\infty} \left\|\prod_{i=1}^k A_iE_i\right\|^{1/k}.
$$

\subsection{General Matrix Multiplication Games are Undecidable} 
The difficulty of general MMGs should be compared with results on the difficulty of JSR (joint spectral radius) computation. Thus, as proved in \citep[Thm 2]{BT:SCL00}, given a finite set  $\setE$ of non-negative matrices with rational elements, it is undecidable whether $\hat\rho(\setE)\leq 1$. The decidability status of the problem  $\hat\rho(\setE)<1$ is unknown. Finally, it is immediate from the characterization~\eqref{E-GSRad} that, given a precision $\varepsilon>0$, it is possible to compute $\varepsilon$-approximation of $\hat\rho(\setE)$ (in other words $\hat\rho(\setE)$ is computable as function of $\setE$ in the sense of computable analysis, see \cite{Weih2000}).

\begin{theorem}
Given a determined 	MMG with finite sets of non-negative matrices with rational elements and $\alpha\in \bQ_+$, the decision problem for its value $V\leq\alpha$ is undecidable.
\end{theorem}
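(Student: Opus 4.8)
The plan is to reduce the undecidable joint-spectral-radius problem of \citep[Thm 2]{BT:SCL00} to the value problem for determined MMGs by making one player trivial. Given a finite set $\setE$ of non-negative $N\times N$ matrices with rational entries, I would consider the MMG with $\setA=\{I_N\}$ (the identity matrix, which is non-negative and rational) and opponent set $\setE$, and take $\alpha=1$. Since Adam is forced to write $I_N$ at every turn, he has a unique strategy $\sigma_0$, so the game is trivially determined and its value is $V=\sup_\tau P(\sigma_0,\tau)$. Every play compatible with $\sigma_0$ has the form $I_NE_1I_NE_2\cdots$ with the $E_i\in\setE$ arbitrary, so $\prod_{i=1}^k A_iE_i=E_1\cdots E_k$ and, ranging over Eve's strategies,
$$
V=\sup_{(E_i)\in\setE^\omega}\ \limsup_{k\to\infty}\ \|E_1\cdots E_k\|^{1/k}.
$$

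The second step is to identify this quantity with $\hat\rho(\setE)$. The inequality $V\le\hat\rho(\setE)$ is immediate: with $S_k=\sup\{\|F_1\cdots F_k\|:F_i\in\setE\}$ one has $S_{k+l}\le S_kS_l$, hence $S_k^{1/k}\to\hat\rho(\setE)$ (this is the $\inf_{n\ge1}$ clause of \eqref{E-GSRad}), and $\|E_1\cdots E_k\|^{1/k}\le S_k^{1/k}$ for every play. For $V\ge\hat\rho(\setE)$ I would invoke the spectral-radius part of \eqref{E-GSRad} (the Berger--Wang identity): given $\varepsilon>0$ there are $n$ and $F_1,\dots,F_n\in\setE$ with $\rho(P)^{1/n}\ge\hat\rho(\setE)-\varepsilon$, where $P=F_1\cdots F_n$. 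Letting Eve play the $n$-periodic sequence $F_1,\dots,F_n,F_1,\dots,F_n,\dots$, the prefix of length $mn$ is exactly $P^m$, and $\|P^m\|^{1/(mn)}\to\rho(P)^{1/n}$ by Gelfand's formula, so this single play already witnesses $\limsup_k\|E_1\cdots E_k\|^{1/k}\ge\hat\rho(\setE)-\varepsilon$; letting $\varepsilon\to0$ gives the claim.

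Combining the two steps, the constructed MMG is determined with value $V=\hat\rho(\setE)$, so $V\le 1$ iff $\hat\rho(\setE)\le 1$, and the latter is undecidable over finite sets of non-negative rational matrices. (In fact the reduction works without computing $V$ exactly: the upper bound handles the case $\hat\rho(\setE)\le1$, and when $\hat\rho(\setE)>1$ the periodic play above with $\varepsilon$ small enough already exhibits a play of payoff $>1$.)

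I do not expect a serious obstacle; the one point needing care is the direction $V\ge\hat\rho(\setE)$, because $V$ is a supremum over \emph{single} plays of a $\limsup$, not a supremum over all finite prefixes — this is exactly where the periodic-play argument, and hence the Berger--Wang characterization rather than the plain norm characterization, is essential, since a naive concatenation of near-optimal finite blocks could be spoiled by annihilation between blocks for non-negative matrices. One should also check the trivialities that $\setA=\{I_N\}$ is admissible (dimensions $M=N$) and that the resulting game genuinely falls under the hypothesis ``determined MMG with finite sets of non-negative rational matrices''.
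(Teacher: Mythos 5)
Your proposal is correct and follows exactly the paper's argument: set $\setA=\{Id\}$ so that Adam is trivial, observe that the resulting game is determined with value $\hat\rho(\setE)$, and invoke the undecidability of $\hat\rho(\setE)\le 1$ from \citep[Thm 2]{BT:SCL00}. The paper states the identity $V=\hat\rho(\setE)$ without proof, whereas you supply the (correct) verification via submultiplicativity for the upper bound and periodic plays with the Berger--Wang characterization for the lower bound; this is a welcome filling-in of detail rather than a different route.
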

\begin{proof}
Let $\setA=\{Id\}$ (Adam is trivial) and $\setE$ be  a finite set of non-negative matrices with rational elements. The corresponding MMG is determined with value $V=\hat\rho(\setE )$ and thus the decision problem $V\leq 1$ is undecidable due to \citep[Thm 2]{BT:SCL00}, cited above. 
\end{proof}
To prove stronger undecidability results for MMGs without direct counterparts for the JSR, we need  a couple of simulation lemmas: for arbitrary matrices and for non-negative ones.
\begin{lemma}\label{l:sim}
Given a two-counter machine $M$, one can construct  two finite sets of \emph{integer} matrices $\setA$ and $\setE$ such that the corresponding MMG is determined and its value $V$ satisfies:\\
	\emph{``if $M$ halts (starting with counters containing $0$) then $V=0$, else $V=1$.''}
\end{lemma}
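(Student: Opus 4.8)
The plan is to encode the computation of the two-counter machine $M$ into a matrix multiplication game in which Eve (the maximizer) is forced to "faithfully simulate" $M$, while Adam (the minimizer) acts as a verifier who can punish any cheating. The value should collapse to $0$ exactly when $M$ halts, and stay at $1$ when it runs forever. Concretely, I would let the state of the simulation live in a vector space $\bR^{Q}\otimes$ (counter-encoding space), where $Q$ is the finite control of $M$; the two counter values $(c_1,c_2)$ are encoded by a single integer $2^{c_1}3^{c_2}$ (the classical Minsky encoding), stored either as that integer in one coordinate or, more robustly for the norm bookkeeping, in unary-like fashion. Each instruction of $M$ (increment, decrement-or-jump-on-zero, halt) becomes a $\{0,1\}$-integer matrix that multiplies/divides the stored value by $2$ or $3$, tests it against a small pattern, and moves the control coordinate. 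Eve's set $\setE$ consists of the matrices realizing all legal instruction-executions; Adam's set $\setA$ consists of "checking" matrices plus the identity.

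The key steps, in order: (1) Build the simulation matrices so that along the unique legal run of $M$ the product $A_1E_1A_2E_2\cdots$ has all entries bounded (in fact the relevant block is a permutation-like $0/1$ matrix whenever $M$'s run is consistent), hence its norm grows subexponentially, forcing the growth rate to be $1$ if $M$ never halts, and — crucially — to be forced down to $0$ once $M$ reaches the halt instruction, by including a matrix that zeroes out the state vector upon entering the halting control coordinate (so from that point the product is the zero matrix and the payoff, being $\limsup_k\|\prod_{i\le k}A_iE_i\|^{1/k}$, equals $0$). (2) Arrange the verification so that if at some turn Eve plays a matrix not corresponding to the genuinely enabled instruction (wrong guess about whether a counter is zero, or wrong control transition), Adam has a response in $\setA$ that detects the discrepancy and injects exponential growth — typically by having a "catcher" matrix whose effect on the mismatched pattern is to produce a strictly-positive block with spectral radius $>1$, so the norm grows like $\lambda^k$ for some $\lambda>1$, giving value $\ge\lambda>1>\alpha$-able; but since we only need to separate $0$ from $1$, it suffices that cheating yields value $1$ or more while still letting the honest non-halting run give exactly $1$ — so I would instead make Adam's punishment *keep* the value at least $1$, and make Eve's honest play when $M$ halts give value $0$, with the reading: value $=0 \iff$ Eve can reach the halt coordinate $\iff M$ halts. (3) Check determinacy: the game is determined because it is essentially a reachability-type objective on a countable configuration graph with a Borel (indeed $\Pi^0_2$-style) payoff; more simply, one observes that the value equals $0$ precisely when Eve has a (finite-memory, ultimately following the unique computation of $M$) strategy forcing the product to become $0$, and equals $1$ otherwise, and both $\sup_\sigma\inf_\tau$ and $\inf_\tau\sup_\sigma$ coincide with this dichotomy by a direct case analysis. (4) Verify rationality/integrality: all matrices are $0/1$, hence integer, and the sets are finite.

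The main obstacle I anticipate is step (2): designing Adam's verification so that (a) honest play by Eve is never punished (Adam gains nothing by "checking" a correct step, so the value does not drift above $1$ on the non-halting run), yet (b) any deviation is caught *immediately and locally* with a norm blow-up, and (c) Adam cannot himself sabotage a correct halting simulation to push the value up from $0$ — i.e.\ once the configuration is honest and heading to halt, no matrix in $\setA$ can prevent the eventual zeroing. Balancing these three constraints is delicate: one standard device is a "two-lane" construction where Eve writes her move in two redundant copies and Adam's checking matrix compares the lanes, amplifying any mismatch while acting as (a permutation of) the identity on matched pairs; I would flesh out exactly this mechanism, keeping the zeroing matrix for the halt instruction so that the halting case yields the zero product and payoff $0$, and arguing (by Gelfand's formula and monotonicity of $\rho$ on non-negative matrices, as recalled in Section~\ref{sec:prelim}) that in all non-halting or cheated scenarios the $\limsup$ of $\|\cdot\|^{1/k}$ is exactly $1$.
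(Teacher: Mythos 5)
There is a genuine gap, and it is the central mechanism, not a detail: your incentive structure is inverted. You propose that Adam's punishment for cheating ``keeps the value at least $1$'' (a norm blow-up) while an honest simulation of a \emph{halting} run is driven to $0$ by a zeroing matrix at the halt instruction. But Eve is the maximizer: faced with the choice between honestly simulating a halting machine (payoff $0$) and cheating (payoff $\ge 1$ under your scheme), she will simply cheat, so the value of your game is $\ge 1$ even when $M$ halts, contradicting the statement. The construction the paper uses runs the other way around: cheating must be \emph{bad} for Eve --- Adam detects a violation and annihilates the product, forcing payoff $0$ --- and the halting case yields value $0$ because the \emph{stop} instruction is a deadlock with no legal simulating move, so Eve is eventually \emph{forced} to cheat and then punished. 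When $M$ does not halt, Eve simulates forever, the coordinates grow at most linearly in the number of steps, and the payoff is exactly $1$; Adam cannot do better because honest configurations give him nothing to punish. Your design never forces Eve into the halt coordinate and rewards exactly the behavior (cheating) that must be penalized.

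Two secondary problems compound this. First, the annihilation trick genuinely uses matrices with \emph{negative} entries (the statement only asks for integer matrices): the paper's zero-test matrix negates a counter coordinate, so an illegal zero-test plants a negative value that Adam later copies, adjusts to $-1$, and adds to the coordinate carrying the ``live'' flag, zeroing the whole vector. Your all-$\{0,1\}$ matrices cannot implement this, and indeed the paper's separate non-negative construction (Lemma~\ref{l:sim-nonneg}) only achieves a weaker threshold statement with determinacy left open. Second, the G\"odel-style encoding $2^{c_1}3^{c_2}$ in a single coordinate does not admit a linear zero-test, and division by $2$ or $3$ leaves the integers; the workable encoding stores each counter value directly in its own coordinate and increments or decrements it by adding a dedicated constant coordinate, with the zero-test enforced only a posteriori via the sign trick above.
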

\begin{lemma}\label{l:sim-nonneg}
Given a two-counter machine $M$, one can construct  two finite sets of non-negative integer matrices $\setA$ and $\setE$ such that the corresponding MMG  satisfies:\\
	\emph{``if $M$ halts
	then Adam can ensure payoff $<2$, otherwise Eve can ensure  payoff $\geq 2$.''}
\end{lemma}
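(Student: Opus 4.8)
The idea is to encode the deterministic run of the two-counter machine $M$ as an MMG in which Adam plays the role of a \emph{simulator} and Eve the role of a \emph{verifier}, arranging the norm dynamics so that a \emph{faithfully simulated halting} run can be ``frozen'' into bounded products (asymptotic growth rate $1$), whereas an infinite run, or any attempt by Adam to cheat, can be punished by Eve with growth rate at least $2$. Since $1<2$, this at once yields the dichotomy in the statement (and we need not prove determinacy here, as the statement does not ask for it). Concretely: fix $M$ with control states $q_0,\dots,q_h$ ($q_h$ halting) and two counters; $M$ is deterministic, so from $(q_0,0,0)$ there is a unique run $\gamma_0,\gamma_1,\dots$, and $M$ halts iff this run reaches $q_h$. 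A configuration $\gamma=(q,c_1,c_2)$ is encoded as a non-negative vector $v(\gamma)\in\bR^{D}$ (all matrices taken $D\times D$, a special case of the MMG format), the encoding being chosen, as in the classical counter-machine--to--matrix reductions behind \citep{BT:SCL00}, so that each instruction of $M$ corresponds to multiplication by a non-negative integer matrix, with the caveat that the delicate instructions (zero-test, decrement) are \emph{correct} only in an appropriate configuration and otherwise leave a detectable ``fault'' in the vector.

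Adam's set $\setA$ consists of: for each instruction of $M$, a matrix $A_{\mathrm{instr}}$ that performs the corresponding configuration update \emph{and doubles} a distinguished ``active'' component (so one honest round multiplies the active part of the running product by $2$); and a \emph{parking} matrix $A_{\mathrm{park}}$ that, applied to the encoding of a halting configuration, replaces the state by a fixed bounded inert vector and keeps it constant thereafter, but applied in any other configuration records a fault. Eve's set $\setE$ consists of a \emph{pass} matrix $E_{\mathrm{pass}}$ which is the identity on fault-free configuration vectors and carries the simulation forward, and a \emph{challenge} matrix $E_{\mathrm{chk}}$ which coincides with $E_{\mathrm{pass}}$ on fault-free vectors but, as soon as any fault has been recorded (illegal zero-test, illegal decrement, or illegal use of $A_{\mathrm{park}}$), reroutes the whole norm into a fresh coordinate $z$ that it then doubles every round and that no matrix of $\setA$ can shrink.

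For the two implications: if $M$ halts, Adam plays the honest instruction matrices and then, upon reaching $q_h$ at some round $T$, plays $A_{\mathrm{park}}$ forever; no fault is ever recorded, so Eve's challenges are harmless, and for $k>T$ the product $P_k$ has $\|P_k\|$ constant ($\approx 2^{T}$), hence $\|P_k\|^{1/k}\to 1<2$ against every Eve strategy. If $M$ does not halt, let Eve ``challenge at every round''. Against an Adam who plays honest instruction matrices forever, the active part is doubled each round, so $\|P_k\|\ge 2^{k}$ and the rate is $\ge 2$; against any Adam who ever stops doubling the active part — which forces him to use $A_{\mathrm{park}}$ or an illegal instruction, hence to record a fault at some round $r$ — Eve's challenges drive $z$ from round $r$ on, so $\|P_k\|\ge 2^{k-r}$ and again the rate is $\ge 2$. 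The faithfulness of the enforcement (any deviation leaves a fault, so the fault-free vectors reached from $v(\gamma_0)$ are exactly the $v(\gamma_i)$) guarantees that Adam cannot reach $v(q_h,\cdot,\cdot)$ without faults unless $M$ genuinely halts, so he has no way to park cleanly; thus Eve ensures payoff $\ge 2$.

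\textbf{Main obstacle.} The entire weight of the argument lies in the encoding of configurations and the construction of the finite set of \emph{non-negative integer} matrices that faithfully enforces the simulation: unlike in Lemma~\ref{l:sim}, no cancellation is available to annihilate a wrong move, so every illegitimate instruction must instead be made to leave a linearly detectable fault that Eve can turn into unbounded $2$-growth. The genuinely delicate points are the zero-test and the decrement — ``halving'' is not a non-negative integer operation — which must be handled, as in the classical Minsky-machine simulations, by a careful counter encoding (e.g.\ powers of distinct primes together with an auxiliary mirror coordinate) so that ``the step $\gamma_r\to\gamma_{r+1}$ was illegal'' becomes a property a single non-negative matrix can expose. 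Checking that the resulting finitely many matrices have exactly the stated behaviour in all cases is the routine but lengthy verification, deferred to the Appendix.
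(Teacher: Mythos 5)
Your reduction inverts the roles the paper uses: you make Adam (the minimizer) the simulator and Eve (the maximizer) the verifier, whereas the paper's construction has Eve simulate the machine and Adam punish her. With non-negative matrices this is not a cosmetic choice, and it is exactly where your deferred "routine verification" hides an unresolved obstacle. A non-negative linear map can only \emph{copy and add} coordinates, so a verifier can amplify a fault only if the cheat leaves a coordinate that is \emph{too large} (or at least positive where it should be zero). The deviations of a minimizing simulator -- in particular an illegal zero-branch or an illegal decrement -- manifest as a coordinate being \emph{too small} relative to the ambient growth, and "smaller than expected" cannot be converted into growth by the maximizer without subtraction. The paper's construction is built around precisely this asymmetry: Eve (who wants growth) simulates, counters are encoded as $2^{n+c}$ with a mirror coordinate $2^{n-c}$ so that honest simulation doubles everything each round, any cheat forces $x_+<2^n$, and Adam punishes by collapsing the entire vector onto that deficient coordinate, so that each inter-reset factor has norm at most $2^{f-1}$; since a halting machine forces Eve to cheat with positive frequency within a bounded horizon, the product of such deficient factors has rate $<2$. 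You give no mechanism by which Eve could detect and exploit Adam's "too small" faults, and I do not see one.

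There is a second, independent flaw in the mechanism as you describe it. The payoff is $\limsup_k\|\prod A_iE_i\|^{1/k}$, a \emph{matrix} norm, not the norm of the image of one initial configuration vector. Your $E_{\mathrm{chk}}$ "doubles the fresh coordinate $z$ every round", and your punishment argument needs every matrix of $\setA$ to preserve $z$ (otherwise Adam shrinks it and escapes punishment). But then the $(z,z)$ entry of the product $A_1E_1\cdots A_kE_k$ is already at least $2^{(\#\text{challenges})}$ \emph{regardless of faults}, so by always challenging Eve forces $\|P_k\|\ge 2^k$ and payoff $\ge 2$ even when $M$ halts and Adam simulates perfectly and parks -- the halting direction collapses. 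Any fix must tie the growth of $z$ to an actual fault value rather than to an unconditional doubling entry, and must control the full matrix norm (the paper is careful on this very point, e.g.\ when arguing that the reset factors bound $\|P_k\|$ itself and not just the orbit of $v_0$). As written, the proposal therefore has a genuine gap on both counts.
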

In both cases the construction, inspired by \cite{aldric-mean}, follows the same principle: Eve tries to  simulate the machine $M$; 
if she cheats, then Adam detects this and ``resets'' the product. 
Since the halting problem is undecidable, we obtain immediately the following two theorems.
\begin{theorem}\label{thm:undecidable}
Given a determined MMG with finite sets of matrices with integer elements
\begin{itemize}
	\item its value $V$ is not computable from the matrices;
	\item it is not computable even knowing a priori  that $V\in\{0,1\}$.
\end{itemize}
\end{theorem}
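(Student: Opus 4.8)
The plan is to obtain Theorem~\ref{thm:undecidable} as an immediate many--one reduction from the halting problem, using Lemma~\ref{l:sim} as a black box. First I would recall what that lemma gives: a \emph{computable} map $M\mapsto(\setA,\setE)$ sending any two-counter machine $M$ to finite sets of integer matrices such that the associated MMG is determined, its value $V$ lies in $\{0,1\}$, and moreover $V=0$ precisely when $M$ halts from the all-zero configuration (equivalently $V=1$ when $M$ does not halt). Everything below only uses these properties.

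For the first bullet I would argue by contradiction. Suppose there were an algorithm that, given finite sets of integer matrices defining a determined MMG, outputs its value $V$. Composing this algorithm with the computable reduction $M\mapsto(\setA,\setE)$ yields an algorithm that on input $M$ returns $V\in\{0,1\}$, and by Lemma~\ref{l:sim} the answer $V=0$ vs.\ $V=1$ decides whether $M$ halts. This contradicts the undecidability of the halting problem for two-counter machines. For the second, stronger bullet I would note that the \emph{same} reduction already respects the promise $V\in\{0,1\}$: every instance $(\setA,\setE)$ produced by Lemma~\ref{l:sim} has value in $\{0,1\}$. Hence any algorithm computing $V$ under the a priori guarantee $V\in\{0,1\}$ would still decide halting through the reduction, contradiction. (One can equivalently phrase this as: within the class of determined MMGs with integer matrices and value in $\{0,1\}$, the instances with $V=0$ and those with $V=1$ are recursively inseparable.)

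I do not expect any genuine obstacle at this level: once Lemma~\ref{l:sim} is granted, the theorem is a one-line reduction. The real difficulty is pushed entirely into Lemma~\ref{l:sim} itself (whose proof is deferred): designing the matrix gadgets so that Eve is forced to faithfully simulate $M$ while Adam can detect any cheating step and ``reset'' the running product, arranging that a non-halting computation sustains growth rate exactly $1$ whereas a halting computation drives the norm of the product to collapse with growth rate $0$. That construction, and the verification that the resulting game is determined, is where the technical work lies; the present theorem is merely its corollary together with undecidability of halting.
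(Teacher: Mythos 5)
Your proposal is correct and matches the paper exactly: the paper derives Theorem~\ref{thm:undecidable} immediately from Lemma~\ref{l:sim} together with the undecidability of the halting problem for two-counter machines, with all the technical work deferred to the lemma's construction. Your observation that the same reduction already respects the promise $V\in\{0,1\}$, giving the second bullet (and recursive inseparability), is precisely the intended reading.
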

Hence the MMG value cannot be approximated and is not computable (as function of $\setA$ and $\setE$) in the sense of computable analysis.
\begin{theorem}\label{thm:undecidable1}
Given an MMG with finite sets of non-negative matrices with integer elements, it is undecidable whether the maximal payoff that Eve can ensure is $<2$.
\end{theorem}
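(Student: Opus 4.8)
The plan is to reduce the halting problem for two-counter machines to the stated decision problem, using Lemma~\ref{l:sim-nonneg} essentially as a black box. Suppose, for contradiction, that there is an algorithm $\mathcal{D}$ that, given finite sets $\setA$ and $\setE$ of non-negative integer matrices, decides whether the maximal payoff Eve can ensure in the associated MMG is $<2$. Given a two-counter machine $M$, I would first run the construction underlying Lemma~\ref{l:sim-nonneg} to obtain, effectively from $M$, the sets $\setA_M,\setE_M$. By the lemma we then have the clean dichotomy: if $M$ halts, Adam has a strategy forcing payoff $<2$ (in fact the construction caps it by a constant strictly below $2$), so the maximal payoff Eve can ensure is $<2$; if $M$ does not halt, Eve has a strategy forcing payoff $\ge 2$, so the maximal payoff she can ensure is $\ge 2$. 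Hence $\mathcal{D}(\setA_M,\setE_M)$ accepts exactly when $M$ halts, contradicting undecidability of the halting problem. This is the whole argument modulo the lemma.

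Since that reduction is immediate, the substance is in Lemma~\ref{l:sim-nonneg}, whose proof I would organise as follows. A configuration $(q,c_1,c_2)$ of $M$ is encoded in a non-negative vector of dimension $O(|Q|)$ whose coordinates carry, multiplicatively, the data $2^{c_1}$, $2^{c_2}$ and a one-hot marker for the control state $q$. Each instruction of $M$ --- increment, decrement, zero-test, jump --- is implemented by a matched pair (an Adam move, an Eve move) of non-negative integer matrices that transforms the encoding of the current configuration into that of its successor while multiplying the total mass $\|\cdot\|$ by exactly $2$. Thus a faithful simulation of a non-halting run of $M$ produces products with $\|\prod_{i=1}^{k}A_iE_i\|^{1/k}\to 2$, and Eve, by playing this unique infinite honest run, ensures payoff $\ge 2$.

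Adam's role is to police Eve's honesty: at every turn he may either let the simulation proceed or \emph{audit} Eve's last claimed step; an audit that uncovers an inconsistency (a transition not allowed by $M$, or a false zero-test) routes the product into a component on which all subsequent products stay bounded, so the growth rate collapses to $1$, while auditing an honest step costs Eve at most a bounded factor and hence does not change the rate. Consequently, if $M$ halts then every honest run is finite, so Eve must eventually either stop simulating --- after which the mass is no longer doubled and the rate drops below $2$ --- or cheat, in which case Adam audits and again pins the rate below $2$. Combined with the previous paragraph this gives the dichotomy asserted by the lemma.

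The main obstacle, and the reason the non-negative construction is weaker than Lemma~\ref{l:sim} (which yields a $0$-versus-$1$ gap), is the non-negativity constraint itself: without signs one cannot implement genuine subtraction, so decrements and zero-tests cannot be made exact. The encoding must instead be arranged so that Eve's only way to cheat is to introduce an \emph{error term that Adam can amplify} rather than one he is forced to cancel, and the doubling gadget must be tuned so that any such error strictly prevents the rate $2$ from being attained. Making this amplification robust --- so that the two cases stay separated by the multiplicative gap, ``$<2$'' versus ``$\ge 2$'', instead of degenerating to equality --- is the delicate point of the construction; the mean-payoff-style gadgets of \cite{aldric-mean} are the natural tool for it.
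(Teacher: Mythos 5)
Your reduction is exactly the paper's: Theorem~\ref{thm:undecidable1} follows immediately from Lemma~\ref{l:sim-nonneg} together with undecidability of the halting problem, and you correctly flag the one needed uniformity point (Adam's construction caps the payoff by a constant strictly below $2$, so the lower value of the game is genuinely $<2$ rather than merely each play being $<2$). Your accompanying sketch of the lemma is in the same spirit as the paper's appendix construction (counters encoded as powers of $2$, Adam policing Eve and resetting), though the paper's punishment is not a one-shot trap collapsing the rate to $1$ but an infinitely repeated reset producing bounded-length factors each of norm at most $2^{f-1}$ — a detail internal to the lemma that does not affect your proof of the theorem, which correctly uses the lemma as a black box.
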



\subsection{Relations Between the Two Kinds of Games}\label{ss:rel}
Fortunately, as will be shown below, the subclass of MMGs with IRU-sets of non-negative matrices is much easier to solve. In this section, we relate EGs to such MMGs.

Let $\arena = (D,T,\Sigma,\Delta)$ be an arena with $D=\{d_1,\dots,d_M\}$ and $T=\{t_1,\dots,t_N\}$.
We define matrix sets $\setA,\setE$ as follows.
For each Despot's vertex $d_i \in D$,  and action $a\in\Sigma$
we define the row $c_{ia}=[c_{ia,1},\dots, c_{ia,N}]$ where $c_{ia,j}=1$ if $(d_i,a,t_j)\in\Delta$ and $c_{ia,j}=0$ otherwise.
Next we define the row set $\setA_i=\{c_{ia}\neq0\Mid a\in\Sigma\}$ (non-zero rows correspond to non-blocking actions).
Row sets  $\setA_1,\dots,\setA_M$ determine an IRU-set of matrices $\setA$.
The IRU-set   $\setE$ corresponding to Tribune's actions is defined similarly.
In the running example in Figure~\ref{fig:EG-example}, for instance, the row sets are the following:
$\setA_1 = \left\{ \left[ 1,1,0 \right]\right\},
\setA_2 =\left\{\left[0,1,0\right], \left[1,0,1\right]\right\},
\setA_3 =\left\{\left[0,1,1\right]\right\},
\setE_1 =\left\{\left[0,1,0\right], \left[1,0,0\right]\right\},
\setE_2 =\left\{\left[1,1,1\right]\right\},
\setE_3 =\left\{\left[0,1,0\right], \left[0,0,1\right]\right\}$.

Note first that there is a natural bijection between the positional strategies of Despot and the set $\setA$:
 any positional strategy $\sigma:D\to\Sigma$ corresponds to the matrix $A_\sigma\in \setA$ with $i$-th row $c_{i,\sigma(d_i)}$ for Adam.
Similarly,  a positional strategy of Tribune $\tau$ corresponds to Eve's matrix $E_\tau\in \setE$.
The following lemma generalizes this observation to any type of strategies:

\begin{lemma}
\label{lemma:f1}
Let $\arena $ be an arena and $\setA,\setE$ the corresponding IRU matrix sets.
Then for every  pair of strategies $(\sigma,\tau)$  of Despot and Tribune in  the EG on  $\arena$
there exists a pair of strategies  $(\varsigma,\theta)$ of Adam and  Eve in the MMG $(\conv(\setA),\conv(\setE))$
with exactly the same payoff.
Moreover, if $\sigma$  is positional, then $\varsigma$ is constant and permanently chooses $A_\sigma$.
The case of positional $\tau$ is similar.
\end{lemma}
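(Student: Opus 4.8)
The plan is to track, turn by turn, how many plays of the entropy game are available to People, and to encode this count as a vector that evolves by matrix multiplication. Fix strategies $\sigma,\tau$ for Despot and Tribune. For a finite play $\pi$ ending in a $D$-state, let $v_\pi \in \bR^M$ be the vector whose $i$-th coordinate is the number of extensions of $\pi$ consistent with $\sigma$ (up to the point of arriving back in $D$) that currently sit in state $d_i$; symmetrically define $w_\pi \in \bR^N$ for plays ending in a $T$-state. The key observation is that, after Despot applies his strategy at all currently-occupied $D$-states, the resulting People-counts in $T$ are obtained from $v$ by left-multiplication by a matrix built from the rows $c_{i,\text{(action chosen)}}$ — that is, by a matrix in $\setA$ when $\sigma$ is positional, and by a matrix in $\conv(\setA)$ in general, because a non-positional $\sigma$ may respond to different histories leading to $d_i$ with different actions, and the aggregate effect on the count vector is a \emph{convex combination} (more precisely a nonnegative combination summing, row-wise, to the counts) of the available rows for $d_i$. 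Here Lemma~\ref{lem:iru}(ii) is what guarantees that these aggregated matrices land in $\conv(\setA)$, which is again an IRU-set.

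Concretely, I would first make the count vectors honest: instead of raw counts, carry the count vector itself (an integer vector), so that $\|v_\pi\|$ equals the total number of plays available after the corresponding number of turns. Then I would define Adam's strategy $\varsigma$ as follows: given a history of matrices $A_1E_1\cdots A_{k-1}E_{k-1}$ in the MMG, decode it into the family of finite entropy-game plays it represents (there may be several, since People branches), look at how $\sigma$ responds at each reachable $D$-state weighted by its current count, and output the single matrix $A_k \in \conv(\setA)$ whose $i$-th row is the count-weighted average of the rows $c_{i,\sigma(\cdot)}$ over the histories currently in $d_i$ (renormalised so the bookkeeping is exact). Define $\theta$ symmetrically for Eve from $\tau$. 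One then checks by induction on $k$ that $\|A_1E_1\cdots A_kE_k \cdot \1\|$ — or rather the appropriate count vector extracted from this product — equals $|\pref_{4k}(L(\sigma,\tau))|$, the number of length-$4k$ prefixes of plays consistent with $\sigma,\tau$. Taking $\limsup_k(\cdot)^{1/k}$ gives $P(\varsigma,\theta)=P(\sigma,\tau)$. When $\sigma$ is positional, the response at $d_i$ is the same action $\sigma(d_i)$ regardless of history, so the $i$-th row is always exactly $c_{i,\sigma(d_i)}$ and $\varsigma$ is the constant strategy $A_\sigma\in\setA\subseteq\conv(\setA)$; likewise for $\tau$.

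The main obstacle is getting the bookkeeping \emph{exactly} right, i.e.\ ensuring that the norm of the matrix product literally equals the prefix count rather than merely being comparable to it. The subtlety is that a single matrix entry in the product can aggregate several distinct People-plays passing through the same state, and when Despot's (or Tribune's) strategy is non-positional one must be careful that the convex-combination coefficients reproduce the true counts and not an over- or under-count. I expect to handle this by defining $\varsigma$ and $\theta$ not just as maps into matrices but as maps that also maintain an explicit labelled-count state (a standard ``strategy with memory'' presentation), and by choosing at each step the matrix in $\conv(\setA)$ — which exists precisely because the relevant row for $d_i$ is a nonnegative combination of the rows in $\setA_i$ with coefficients given by the current per-history counts, normalised by the total count at $d_i$ — so that left-multiplication exactly transports the count vector. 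The remaining verification (compatibility with $\Delta$, the inductive identity on prefix counts, passing to the limit) is routine, and the positional specialisation is then immediate.
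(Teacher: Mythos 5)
Your proposal is correct and follows essentially the same route as the paper: the paper also tracks per-state counts of People's plays (organised as a forest of compatible plays), shows that each Despot/Tribune step transforms the count vector by a matrix whose $i$-th row is the convex combination of the rows $c_{ia}$ with coefficients $\mu_{ia}^{(n)}$ equal to the proportion of level-$2n$ histories at $d_i$ where $\sigma$ chooses $a$ (hence a matrix in $\conv(\setA)$ by Lemma~\ref{lem:iru}(ii)), and concludes $|\pref_{4n}(L(\sigma,\tau))|=\|A_0E_0\cdots A_{n-1}E_{n-1}\|$. The count-weighted-average matrices you describe are exactly these $A_n$, and the positional specialisation is handled identically.
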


Note that Lemma~\ref{lemma:f1} provides a rather weak relation between two games and does not mean,
by itself, that the two games have the same value.
However, we will show later (cf.~Lemma~\ref{lemma:f2}) that  optimal \textbf{constant} strategies
in the MMG that belong to $\setA$ and $\setE$ are in bijection with optimal positional strategies in the EG.

\section{Minimax Theorem for IRU-Sets of Matrices}\label{sec:koz}

In this section, we prove the key theorem of this article.
\begin{theorem}\label{T:minimax}
Let $\setA$ be a compact IRU-set of non-negative $(N\times M)$-matrices and
$\setB$ be a compact IRU-set of non-negative  $(M\times N)$-matrices. Then
\begin{equation}\label{E:minimax}
\min_{A \in \setA} \max_{B \in \setB} \rho(AB) = \max_{B \in \setB} \min_{A
\in \setA} \rho(AB).
\end{equation}
\end{theorem}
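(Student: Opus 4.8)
The inequality $\min_A \max_B \rho(AB) \ge \max_B \min_A \rho(AB)$ is the trivial (``easy'') direction, valid for any function of two variables; so the whole content is the reverse inequality. I would try to prove it by a fixed-point / Kakutani-style argument on the convex-compact sets $\conv(\setA)$ and $\conv(\setB)$, exploiting the independent-row-uncertainty structure together with the Perron--Frobenius theory recalled in Section~\ref{sec:prelim}. The starting observation is that by \eqref{E:minmax-conv} we lose nothing by replacing $\setA,\setB$ with their convex hulls, which are again compact IRU-sets by Lemma~\ref{lem:iru}(ii)--(iii); so from now on both sets are assumed convex, and each player optimizes over a compact convex IRU-set.

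\textbf{Step 1: a value candidate via an alternating scheme.} Fix $A_0 \in \conv(\setA)$ arbitrarily. Given $A_k$, let $B_k \in \arg\max_{B} \rho(A_k B)$ and then $A_{k+1} \in \arg\min_{A} \rho(A B_k)$. The sequence $\rho(A_{k+1}B_k) \le \rho(A_k B_k) = \rho(B_{k-1}A_k)$ is... no — the products $AB$ and $BA$ are not simultaneously controlled, so a naive monotonicity argument fails. Instead I would fix a small $\varepsilon>0$, perturb both IRU-sets so that all matrices are \emph{strictly positive} (add $\varepsilon J$ to every row, with $J$ the all-ones row; this preserves the IRU and convexity structure, keeps compactness, and by continuity of $\rho$ costs only $o(1)$ as $\varepsilon\to 0$), and work in the strictly positive regime where Perron--Frobenius gives unique positive eigenvectors. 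The point of strict positivity is that for $A\in\setA$, $B\in\setB$ strictly positive, $\rho(AB)$ is a \emph{simple} eigenvalue with a positive eigenvector, and $A \mapsto \rho(AB)$ becomes amenable to variational characterizations (Collatz--Wielandt): $\rho(AB) = \max_{x>0}\min_i (ABx)_i/x_i = \min_{x>0}\max_i (ABx)_i/x_i$.

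\textbf{Step 2: the IRU key lemma.} This is where the independent-row structure must do its work, and I expect it to be the crux. The essential feature of an IRU-set is that choosing a matrix in $\setA$ amounts to choosing, \emph{independently for each row $i$}, a row $a_i \in \setA_i$; so for a fixed target vector $y = Bx$, the quantity $\min_{A\in\setA}\max_i (Ay)_i/x_i$ decouples: Despot can minimize each $\max$-term by choosing row $i$ to minimize $a_i\cdot y$, and dually. I would prove: for any fixed strictly positive $x$,
\[
\min_{A\in\setA}\rho(AB)\ \text{and}\ \max_{B\in\setB}\rho(AB)
\]
admit Collatz--Wielandt bounds that are attained at a \emph{common} eigenvector when $A$, $B$ are chosen as simultaneous saddle points of the bilinear-in-rows expression $(i,j)\mapsto a_i\cdot\text{(stuff)}$. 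Concretely, define $\Phi(A,B,x) = \max_i (ABx)_i/x_i$ and seek $(A^\ast,B^\ast,x^\ast)$ with $A^\ast B^\ast x^\ast = \lambda^\ast x^\ast$, $x^\ast>0$, such that $A^\ast$ minimizes $\rho(\cdot\, B^\ast)$ and $B^\ast$ maximizes $\rho(A^\ast\,\cdot)$; then $\lambda^\ast$ is simultaneously $\le \min_A\max_B\rho(AB)$ and $\ge \max_B\min_A\rho(AB)$, giving \eqref{E:minimax} up to the $\varepsilon$-perturbation. Existence of such a triple is a fixed-point statement: consider the set-valued map sending $(A,B)$ to $(\arg\min_{A'}\rho(A'B),\ \arg\max_{B'}\rho(AB'))$; its values are nonempty compact convex (minimizers/maximizers of a continuous function over a compact convex set — convexity of the argmin set needs that $A\mapsto\rho(AB)$ is quasiconvex on the IRU-set, which again follows from the row-decoupling plus the Collatz--Wielandt min-formula, since a pointwise max of linear-fractional functions over independent rows is quasiconvex), and it is upper hemicontinuous by continuity of $\rho$; Kakutani's theorem then yields a fixed point $(A^\ast,B^\ast)$, and $x^\ast$ is its Perron eigenvector.

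\textbf{Step 3: remove the perturbation.} For each $\varepsilon>0$ let $\lambda^\ast(\varepsilon)$ be the common value produced by Step~2 for the $\varepsilon$-perturbed sets. Since $\min_A\max_B\rho(AB)$ and $\max_B\min_A\rho(AB)$ both depend continuously on the (compact) matrix sets — continuity of $\rho$ plus compactness, so the min and max are attained and vary continuously — and they coincide for every $\varepsilon>0$, they coincide in the limit $\varepsilon\to 0$, which is \eqref{E:minimax}.

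\textbf{Main obstacle.} The delicate point is Step~2: proving that $A\mapsto\rho(AB)$ (resp.\ $B\mapsto -\rho(AB)$) is quasiconvex on the convex IRU-set, so that Kakutani applies, and then checking that a fixed point of the best-response correspondence really is a global saddle point of $\rho(AB)$ rather than merely a local one. The quasiconvexity hinges entirely on the independent-row structure — it is false for general convex sets of matrices — so the argument must use that $\conv(\setA)$ is an IRU-set built from the convex row-sets $\conv(\setA_i)$ and exploit the Collatz--Wielandt/ Donsker--Varadhan-type representation $\rho(AB)=\inf_{x>0}\max_i(ABx)_i/x_i$, under which the value becomes an infimum over $x$ of a maximum over $i$ of functions that are \emph{linear in the $i$-th row of $A$}; I would want to push the $\min_A$ inside and use that, for fixed $x$, independent per-row minimization of a sup is itself a sup of a min. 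Getting this interchange exactly right, and matching it with the dual statement for $B$, is the technical heart of the proof.
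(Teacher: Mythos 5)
Your overall architecture (reduce to a saddle point via Lemma~\ref{T:mm}, perturb to strictly positive matrices, pass to the limit $\varepsilon\to 0$) matches the paper's, but the step that carries all the weight --- your Step 2 --- is not actually proved, and the justification you offer for it does not work as stated. To apply Kakutani you need the best-response sets $\arg\min_{A}\rho(AB)$ and $\arg\max_B\rho(AB)$ to be convex, and you propose to get this from quasiconvexity of $A\mapsto\rho(AB)$ via Collatz--Wielandt. But $\rho(M)=\inf_{x>0}\max_i(Mx)_i/x_i$ expresses $\rho$ as an \emph{infimum} over $x$ of functions convex in $M$, which yields neither convexity nor quasiconvexity; and ``pushing the $\min_A$ inside the $\inf_x\max_i$'' is itself a minimax interchange of exactly the kind the theorem asserts, so as written the argument is circular. (General sublevel sets $\{A:\rho(AB)\le\alpha\}$ are unions over $v>0$ of the convex sets $\{A: ABv\le\alpha v\}$, and there is no reason for such a union to be convex.) What \emph{is} true, and what the paper supplies, is that the set of \emph{global} minimizers is convex, because all global minimizers share a common Perron eigenvector; this follows from the hourglass alternative (Lemma~\ref{L:alternative}) --- a genuinely IRU-specific dichotomy --- combined with the strict spectral-radius bounds of Lemma~\ref{L:1}(ii),(iv). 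That lemma is the missing idea. The paper does not even route through Kakutani: it shows directly that the pointwise-optimal pair $(\tilde A,\tilde B)$ (with $\tilde B$ maximizing $\min_A\rho(AB)$ and $\tilde A=A_{\tilde B}$) satisfies $\tilde\rho v\le Aw$ for all $A$ and $w\ge Bv$ for all $B$, where $v$ is the Perron vector of $\tilde A\tilde B$ and $w=\tilde Bv$, whence both saddle inequalities. Without some form of this hourglass lemma your Step 2 has no engine.

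A secondary gap: your Step 1 reduction to convex hulls is not free. Equation~\eqref{E:minmax-conv}, applied via Lemma~\ref{lem:iru}(i) to the IRU sets $\setA B$ and $\setB A$, shows that the \emph{inner} optimizations are unchanged by convexification, but the \emph{outer} ones are not a priori: proving $\min_{\conv(\setA)}\max_{\conv(\setB)}=\max_{\conv(\setB)}\min_{\conv(\setA)}$ only sandwiches this common value between $\max_{\setB}\min_{\setA}$ and $\min_{\setA}\max_{\setB}$, which is the trivial inequality again. To conclude for the original sets you must show the saddle point can be taken in $\setA\times\setB$ (the paper's Corollary~\ref{C:minimaxconv} goes the other way, \emph{using} the theorem for not-necessarily-convex IRU sets), and this again needs the common-eigenvector property. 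The perturbation/limit step, and your (over-cautious) observation that a fixed point of the best-response correspondence is automatically a global saddle point, are fine.
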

In the rest of the article we will denote this minimax by $\minimax(\setA,\setB)$.
The study of minimax relations will be based on the following
well-known fact:
\begin{lemma}[see {\citep[Section~13.4]{von1947theory}}]\label{T:mm}
Let $f (x, y)$ be a continuous function on the product of compact spaces $X
\times Y$. Then
$$
\min_{x}\max_{y} f(x,y) \ge\max_{y}\min_{x} f(x,y).
$$
The exact equality holds if and only if there exists a saddle point, i.e.~a
point $(x_{0}, y_{0})$ satisfying the inequalities
$$
f (x_{0}, y) \le f (x_{0}, y_{0}) \le f (x, y_{0})
$$
for all $x \in X$, $y \in Y$.
\end{lemma}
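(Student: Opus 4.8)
The plan is to separate the statement into two elementary pieces: the unconditional inequality, which holds for any continuous payoff on compact domains, and the saddle-point characterisation of the equality case. First I would introduce the marginal functions $h(x)=\max_y f(x,y)$ and $g(y)=\min_x f(x,y)$. Since $f$ is continuous and $X,Y$ are compact, each section $f(x,\cdot)$ attains its maximum on $Y$, so $h$ is well defined, and symmetrically for $g$; moreover $h$ and $g$ are themselves continuous on the compact sets $X$ and $Y$, so that $\min_x h(x)$ and $\max_y g(y)$ are attained and all four outer extrema in the statement are genuine minima and maxima. For the inequality I would then use the chain $g(y)\le f(x,y)\le h(x)$, valid for every pair $(x,y)$ straight from the definitions. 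This yields $g(y)\le h(x)$ for all $x,y$; taking the maximum over $y$ on the left and the minimum over $x$ on the right gives $\max_y g(y)\le\min_x h(x)$, which is exactly $\max_y\min_x f\le\min_x\max_y f$.

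For the \emph{if} direction of the equivalence, I would assume $(x_0,y_0)$ is a saddle point. From $f(x_0,y)\le f(x_0,y_0)$ for all $y$ I conclude $h(x_0)=f(x_0,y_0)$, the value $y=y_0$ forcing equality; from $f(x_0,y_0)\le f(x,y_0)$ for all $x$ I conclude $g(y_0)=f(x_0,y_0)$ in the same way. Hence $\min_x h(x)\le h(x_0)=f(x_0,y_0)=g(y_0)\le\max_y g(y)$, which combined with the inequality already established collapses both sides to the common value $f(x_0,y_0)$.

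For the \emph{only if} direction, I would assume equality and write $v$ for the common value. Using attainment, pick $x_0$ with $h(x_0)=\min_x h(x)=v$ and $y_0$ with $g(y_0)=\max_y g(y)=v$; by definition of the marginals these give $f(x_0,y)\le v$ for all $y$ and $f(x,y_0)\ge v$ for all $x$. Evaluating the first at $y=y_0$ and the second at $x=x_0$ pins $f(x_0,y_0)=v$, and substituting back yields $f(x_0,y)\le f(x_0,y_0)\le f(x,y_0)$ for all $x,y$, i.e.\ $(x_0,y_0)$ is a saddle point.

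The only genuine obstacle is the continuity — and hence the attainment — of the marginals $h$ and $g$; every other step is a formal manipulation of the defining inequalities. I expect to dispatch this by the standard maximum-theorem argument (for instance, via uniform continuity of $f$ on the compact product $X\times Y$), and I would remark that without attainment the statement would have to be phrased with $\inf$ and $\sup$, in which case the saddle-point clause could fail.
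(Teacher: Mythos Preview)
Your argument is correct and is the standard textbook proof of this classical fact. Note, however, that the paper does not supply its own proof of this lemma: it is stated with a citation to \cite{von1947theory} and used as a black box, so there is no ``paper's proof'' to compare against. Your write-up would serve perfectly well as a self-contained justification; the only point to tighten is the continuity of the marginals $h$ and $g$, which you correctly flag and which follows at once from uniform continuity of $f$ on the compact product (or from Berge's maximum theorem).
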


We will also use two lemmas on matrices. The first one  provides spectral radius bounds and is quite standard in Perron-Frobenius theory;  as usual in this  theory it relates global characteristics of a non-negative matrix  (such as spectral radius) with its behavior on one non-negative vector. 
 \begin{lemma}\label{L:1}
Let $A$ be a non-negative $(N\times N)$-matrix; then the following properties hold:
\begin{enumerate}[\rm(i)]
\item if $Au\le\rho u$ for some vector $u>0$, then $\rho\ge0$ and $\rho(A)\le\rho$;
\item \qquad if  furthermore  $A>0$ and $Au\neq\rho u$, then $\rho(A)<\rho$;
\item if $Au\ge\rho u$ for some non-zero vector $u\ge0$ and some number $\rho\ge0$, then $\rho(A) \ge\rho$;
\item\qquad if furthermore $Au\neq\rho u$, then $\rho(A)> \rho$.
\end{enumerate}
\end{lemma}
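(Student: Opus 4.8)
The plan is to prove all four items by the standard Perron--Frobenius / Collatz--Wielandt technique, reducing everything to Gelfand's formula and the monotonicity of the spectral radius. I would treat (i)--(ii) and (iii)--(iv) as two dual pairs.

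For item (i): assume $Au\le\rho u$ with $u>0$. First, $\rho\ge0$ is immediate since $Au\ge0$ and $u>0$ force $\rho u\ge0$, hence $\rho\ge0$. Next, iterate: from $0\le A u\le\rho u$ and non-negativity of $A$ one gets $A^n u\le\rho^n u$ for all $n\ge1$ by induction. Since $u>0$, there are constants $0<c\le C$ with $c\1\le u\le C\1$ (here $\1$ is the all-ones vector), so $A^n\1\le\frac1c A^n u\le\frac{\rho^n}{c}u\le\frac{C}{c}\rho^n\1$, whence $\|A^n\|=\|A^n\1\|\le N\frac{C}{c}\rho^n$ (using that the $1$-norm of a non-negative matrix is the largest column sum, or simply $\|A^n\|\le\sum_i (A^n\1)_i$). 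Taking $n$-th roots and letting $n\to\infty$ in Gelfand's formula $\rho(A)=\lim_n\|A^n\|^{1/n}$ gives $\rho(A)\le\rho$.

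For item (iii): assume $Au\ge\rho u$ with $u\ge0$, $u\ne0$, $\rho\ge0$. By the same induction, $A^n u\ge\rho^n u$ for all $n$. Pick an index $k$ with $u_k>0$; then $(A^n u)_k\ge\rho^n u_k$, so $\|A^n\|\cdot\|u\|\ge\|A^n u\|\ge(A^nu)_k\ge\rho^n u_k$ (the first inequality is submultiplicativity of the $1$-norm together with $\|A^nu\|\le\|A^n\|\|u\|$; actually I will just use $\|A^n u\|\ge(A^n u)_k$ and $\|A^n u\|\le\|A^n\|\,\|u\|$). Hence $\|A^n\|^{1/n}\ge(\rho^n u_k/\|u\|)^{1/n}\to\rho$, and Gelfand's formula yields $\rho(A)\ge\rho$.

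For the strict refinements (ii) and (iv), I would invoke the Perron--Frobenius theorem for the positive matrix $A>0$: $\rho(A)$ is then a simple eigenvalue with a positive eigenvector, and it is the \emph{unique} eigenvalue of maximal modulus; moreover $A$ has a positive left Perron eigenvector $w>0$ with $w^{\transpose}A=\rho(A)w^{\transpose}$. For (ii): from $Au\le\rho u$, $Au\ne\rho u$, and $u>0$, pair with $w>0$: $w^{\transpose}Au=\rho(A)\,w^{\transpose}u$, while $w^{\transpose}Au\le\rho\,w^{\transpose}u$ with the inequality \emph{strict} because $w>0$ and $\rho u-Au$ is non-negative and non-zero. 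Since $w^{\transpose}u>0$, we get $\rho(A)<\rho$. For (iv): here $A$ need not be positive, so I would argue by contradiction using (iii): we already know $\rho(A)\ge\rho$; if $\rho(A)=\rho$, then $A^n u\ge\rho^n u=\rho(A)^n u$ for all $n$, and summing/averaging these inequalities one shows the sequence $A^n u/\rho(A)^n$ is coordinatewise non-decreasing and bounded (bounded because $\|A^nu\|/\rho(A)^n$ stays bounded by Gelfand applied with a Perron eigenvector bound), hence converges to some $u^\star\ge u>0$-ish limit with $Au^\star=\rho(A)u^\star$; then $A(u^\star-u)\ge\rho(A)(u^\star-u)$ but comparing norms forces $Au=\rho(A)u=\rho u$, contradicting $Au\ne\rho u$. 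The hypothesis $u\ge0$, $u\ne 0$ of (iii) is inherited.

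The main obstacle is item (iv): without positivity of $A$ one cannot use the clean left-eigenvector pairing that settles (ii), so one must either set up the monotone-bounded-convergence argument for $A^n u/\rho(A)^n$ carefully (checking the limit is finite, which needs a uniform bound of the form $A^n u\le K\rho(A)^n v$ for some fixed positive $v$ — obtainable by comparing $u$ with a strictly positive vector and using that $\rho(A+\varepsilon J)$ depends continuously on $\varepsilon$, $J$ the all-ones matrix), or pass to a strictly positive perturbation and take limits. Items (i), (ii), (iii) are routine.
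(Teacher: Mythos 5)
Your treatment of items (i)--(iii) is sound. For (i) and (iii) you iterate the hypothesis and invoke Gelfand's formula; this matches the paper's proof of (iii), while the paper gets (i) directly from a Collatz--Wielandt-type bound quoted from Horn--Johnson. For (ii) you pair with the left Perron eigenvector $w>0$ to get $\rho(A)\,w^{\transpose}u=w^{\transpose}Au<\rho\,w^{\transpose}u$; this is correct and arguably cleaner than the paper's argument, which instead squares the matrix: from $Au\le\rho u$, $Au\ne\rho u$ and $A>0$ one obtains $A^{2}u\le(\rho^{2}-\varepsilon)u$ for some $\varepsilon>0$, hence $\rho(A)\le\sqrt{\rho^{2}-\varepsilon}<\rho$ by (i).

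Item (iv), however, has a genuine gap. You read (iv) as not requiring $A>0$; under that reading the statement is simply false: take $A=\left[\begin{smallmatrix}1&1\\0&1\end{smallmatrix}\right]$, $u=(0,1)^{\transpose}$, $\rho=1$; then $Au=(1,1)^{\transpose}\ge u$ and $Au\ne u$, yet $\rho(A)=1$, not $>1$. The positivity hypothesis is meant to carry over to (iv) --- the paper's proof of (iv) opens with ``let $A>0$'', and every use of (iv) in the paper is for positive matrices. Your proposed workaround also contains a false step: the boundedness of $A^{n}u/\rho(A)^{n}$ does not follow from Gelfand's formula, which gives $\|A^{n}\|^{1/n}\to\rho(A)$ but is compatible with $\|A^{n}\|\sim n^{k}\rho(A)^{n}$ (exactly what the Jordan block above exhibits), so your monotone sequence need not converge; and perturbing to $A+\varepsilon\1$ and passing to the limit can only yield the non-strict inequality already given by (iii). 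The repair is easy: assume $A>0$ and reuse your own left-eigenvector pairing from (ii) (note $w^{\transpose}u>0$ still holds since $w>0$ and $u\ge0$ is nonzero), or follow the paper: one coordinate of $Au-\rho u$ is strictly positive, so $A(Au-\rho u)>0$, hence $A^{2}u\ge(\rho^{2}+\varepsilon)u$ for some $\varepsilon>0$, and (iii) applied to $A^{2}$ gives $\rho(A)\ge\sqrt{\rho^{2}+\varepsilon}>\rho$.
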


The next lemma concerning IRU-sets of matrices is new and can be explained as follows. For an IRU-set of matrices and two vectors $u$ and $v$
we imagine that the sets $B_{l}=\{x:~ x\le v\}$ and
$B_{u}=\{x:~ v\le x\}$ form the lower and upper bulbs of an
hourglass with the neck at the point $v$. The lemma 
asserts that either all the grains $Au$  (for all matrices $A$ in the set) fill one of the bulbs, or  there
remains at least one grain in the other bulb.
Clearly this alternative does not hold for general sets of matrices.

\begin{lemma}[hourglass alternative]\label{L:alternative}
Let $\setA$ be an IRU-set of $(N\times M)$-matrices and let
$\tilde{A}u=v$ for some matrix $\tilde{A}\in\setA$ and vectors $u,v $. Then
the following holds:
\begin{enumerate}[\rm(i)]
\item either $Au\ge v$ for all $A\in\setA$ or exists a matrix
    $\bar{A}\in\setA$ such that $\bar{A}u\le v$ and $\bar{A}u\neq v$;

\item either $Au\le v$ for all $A\in\setA$ or exists a matrix
    $\bar{A}\in\setA$ such that  $\bar{A}u\ge v$ and $\bar{A}u\neq v$.
\end{enumerate}
\end{lemma}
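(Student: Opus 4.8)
The plan is to use directly the defining feature of an IRU-set, namely that rows may be chosen independently. Write $\setA$ as the IRU-set built from row sets $\setA_1,\dots,\setA_N$, and let $\tilde a_1,\dots,\tilde a_N$ be the rows of $\tilde A$, so $\tilde a_i\in\setA_i$ and, by the hypothesis $\tilde A u=v$, we have $\tilde a_i u=v_i$ for every $i$.

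For part (i) I would argue by a simple dichotomy. If $Au\ge v$ for every $A\in\setA$, the first alternative holds and there is nothing to prove. Otherwise there is a matrix $A'\in\setA$ and an index $i_0$ with $(A'u)_{i_0}<v_{i_0}$; writing $a'_{i_0}\in\setA_{i_0}$ for the $i_0$-th row of $A'$, this says $a'_{i_0}u<v_{i_0}$. Now I construct $\bar A$ by taking its $i_0$-th row equal to $a'_{i_0}$ and its $i$-th row equal to $\tilde a_i$ for all $i\neq i_0$. Since each row of $\bar A$ lies in the corresponding $\setA_i$, the matrix $\bar A$ belongs to $\setA$ — this is exactly the IRU property. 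Then $(\bar A u)_{i_0}=a'_{i_0}u<v_{i_0}$ while $(\bar A u)_i=\tilde a_i u=v_i$ for $i\neq i_0$, so $\bar A u\le v$ and $\bar A u\neq v$, which is the second alternative. Part (ii) follows by the identical argument with all inequalities reversed: either $Au\le v$ for all $A\in\setA$, or one picks $i_0$ with $a'_{i_0}u>v_{i_0}$ and repairs $\tilde A$ in row $i_0$ to obtain $\bar A$ with $\bar A u\ge v$, $\bar A u\neq v$.

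The one point requiring care — modest as it is, this is the crux of the argument — is the verification that the "repaired" matrix $\bar A$ is a genuine element of $\setA$: this is precisely where independence of the rows is invoked, and the step breaks down for an arbitrary (non-IRU) set of matrices, as the comment preceding the lemma observes. I note also that no nonnegativity, and no compactness, of $\setA$, $u$ or $v$ is used anywhere in this reasoning, so the statement holds for arbitrary IRU-sets and arbitrary vectors, as formulated.
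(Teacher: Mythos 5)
Your proof is correct and is essentially identical to the paper's: both replace a single offending row of $\tilde{A}$ by the row of the witnessing matrix, invoking the IRU property to certify that the repaired matrix still lies in $\setA$. Your remark that neither nonnegativity nor compactness is needed is accurate and matches the generality of the statement.
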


We are ready to prove the minimax theorem.
\begin{proof}[Proof of Theorem~\ref{T:minimax}]
According to Lemma~\ref{T:mm}, the minimax equality \eqref{E:minimax} may occur if and
only if  some matrices $\tilde{A} \in \setA$ and $\tilde{B} \in \setB$
satisfy the inequalities
\begin{align}\label{E:propAB1}
\rho(\tilde{A} B) &\le\rho(\tilde{A} \tilde{B})\quad\text{ for all }B \in \setB;\\
\rho(\tilde{A}\tilde{B})&\le\rho(A\tilde{B})\quad\text{ for all }A \in \setA. \label{E:propAB2}
\end{align}

Consider first the case when all the matrices in $\setA$ and $\setB$ are \textbf{positive}.
To construct the matrices $\tilde{A} \in \setA$ and $\tilde{B} \in \setB$ we
proceed as follows. For each $B\in\setB$ let $A_{B}\in\setA$ be
a matrix that minimizes (in $A$) the quantity $\rho(AB)$.
Such a matrix $A_{B}$ exists due to compactness of the set $\setA$ and
continuity  of the function $\rho(AB)$ in $A$ and $B$. Then, for each matrix
$B\in\setB$, the relations
$$
\rho(A_{B}B)=\min_{A\in\setA}\rho(AB)\le\rho(AB)
$$
hold for all $A\in\setA$. Let $\tilde{B}$ be the matrix maximizing $\min_{A\in\setA}\rho(AB)$  over the set $\setB$, and let $\tilde{A}=A_{\tilde{B}}$.
In this case
\begin{equation}\label{E:tildeAB}
\max_{B\in\setB}\rho(A_{B}B)=\max_{B\in\setB}\min_{A\in\setA}\rho(AB)=
\min_{A\in\setA}\rho(A\tilde{B})=\rho(A_{\tilde{B}}\tilde{B})=
\rho(\tilde{A}\tilde{B}),
\end{equation}
which implies inequality \eqref{E:propAB2}
for all $A\in\setA$, and it remains to prove 
 \eqref{E:propAB1}
for all $B \in \setB$.

Let $v = (v_{1}, v_{2}, \ldots, v_{N})^\transpose$ be the positive eigenvector of the
$(N\times N)$-matrix $\tilde{A} \tilde{B}$ corresponding to the eigenvalue
$\tilde{\rho} = \rho(\tilde{A}\tilde{B})$. By denoting
$$
w = \tilde{B}v\in\bR^{M}
$$
we obtain that $\tilde{\rho}v=\tilde{A}w$. Let us show that in this case
\begin{equation}\label{E:rvaaa}
\tilde{\rho}v\le Aw\quad\text{ for all }A \in \setA.
\end{equation}
Otherwise, by
Lemma~\ref{L:alternative}(i) there would exist a matrix $\bar{A}\in\setA $ such that
$\tilde{\rho}v\ge\bar{A}w$ and $\tilde{\rho}v\neq\bar{A}w$, which
implies, by the definition  of the vector $w$, that
$\tilde{\rho}v\ge \bar{A}\tilde{B}v$ and
$\tilde{\rho}v\neq\bar{A}\tilde{B}v$. Then by Lemma~\ref{L:1}
$$
\rho(\bar{A}\tilde{B})<\tilde{\rho}=\rho(\tilde{A}\tilde{B}),
$$
which contradicts \eqref{E:propAB2}. This contradiction completes the
proof of inequality \eqref{E:rvaaa}.
Similarly, now we show that
\begin{equation}\label{E:wbbb}
w\ge Bv\quad\text{ for all }B \in \setB.
\end{equation}
 Again, assuming the contrary, by
Lemma~\ref{L:alternative}(ii) there exists a matrix $\bar{B}\in\setB$ such that
$w\le \bar{B}v$ and $w\neq \bar{B}v$. This last inequality, together
with \eqref{E:rvaaa} applied to the matrix $A_{\bar{B}}$, yields
$\tilde{\rho}v\le A_{\bar{B}}\bar{B}v$ and $\tilde{\rho}v\neq
A_{\bar{B}}\bar{B}v$. Then by Lemma~\ref{L:1}
$$
\tilde{\rho} <\rho(A_{\bar{B}}\bar{B}),
$$
which contradicts \eqref{E:tildeAB} asserting that $\tilde{\rho} =
\rho(\tilde {A} \tilde{B})$  is the maximum value of the function
$\rho(A_BB)$ over all  $B\in\setB$. This contradiction completes the proof of
 inequality \eqref{E:wbbb}.

From $\tilde{\rho}v=\tilde{A}w$ and \eqref{E:wbbb} we obtain the inequality
$\tilde{\rho}v\ge\tilde{A}Bv$ valid for all $B\in\setB$, which by
Lemma~\ref{L:1} implies the relations
$$
\rho(\tilde{A} \tilde{B}) = \tilde{\rho} \ge \rho(\tilde{A} B)
$$
valid for all $B \in \setB$, or, which is the same, inequality
\eqref{E:propAB1}.
The theorem is proved for positive matrices.

Consider now the general case of compact IRU-sets of \textbf{non-negative
matrices} $\setA$ and $\setB$. If the set $\setA$ is determined by some sets of
$M$-rows $\setA_{i}$, $i=1,2,\ldots,N$, then choose an arbitrary $\varepsilon>0$
and consider the sets of rows
$$
\setA^{(\varepsilon)}_{i}=\{a^{(\varepsilon)}\Mid
a^{(\varepsilon)}=a+\varepsilon[1,1,\ldots,1],~
a\in\setA_{i}\},
$$
where $i=1,2,\ldots,N$. In this case the IRU-set of matrices
$\setA^{(\varepsilon)}$ consists of  positive matrices
$A+\varepsilon\1$, where $A\in\setA$ and $\1$ is the matrix with all elements
equal to $1$. Define similarly the IRU-set of matrices $\setB^{(\varepsilon)}$.

By the result just proved, for each $\varepsilon>0$   the minimax equality holds
for positive matrices:
$$
\min_{A\in\setA^{(\varepsilon)}}\max_{B\in\setB^{(\varepsilon)}}\rho(AB)=
\max_{B\in\setB^{(\varepsilon)}}\min_{A\in\setA^{(\varepsilon)}}\rho(AB),
$$
which by Lemma~\ref{T:mm} is equivalent to the existence of
$\tilde{A}_{\varepsilon}\in\setA$ and $\tilde{B}_{\varepsilon}\in\setB$ such
that
\[
\rho((\tilde{A}_{\varepsilon}+\varepsilon\boldsymbol{1})(B+\varepsilon\boldsymbol{1}))\le
\rho((\tilde{A}_{\varepsilon}+\varepsilon\boldsymbol{1})(\tilde{B}_{\varepsilon}+\varepsilon\boldsymbol{1}))\le
\rho((A+\varepsilon\boldsymbol{1})(\tilde{B}_{\varepsilon}+\varepsilon\boldsymbol{1}))
\]
for all $A\in\setA$ and $B\in\setB$. Taking here $\varepsilon=\varepsilon_{n}$,
where $\{\varepsilon_{n}\}$ is an arbitrary sequence of positive numbers
converging to zero, we get
\begin{equation}\label{E:mmvare1}
\rho((\tilde{A}_{\varepsilon_{n}}+\varepsilon_{n}\boldsymbol{1})(B+\varepsilon_{n}\boldsymbol{1}))\le
\rho((\tilde{A}_{\varepsilon_{n}}+\varepsilon_{n}\boldsymbol{1})(\tilde{B}_{\varepsilon_{n}}+\varepsilon_{n}\boldsymbol{1}))\le
\rho((A+\varepsilon_{n}\boldsymbol{1})(\tilde{B}_{\varepsilon_{n}}+\varepsilon_{n}\boldsymbol{1}))
\end{equation}
for all $A\in\setA$ and $B\in\setB$.
Without loss of generality, in view of the compactness of the sets $\setA$ and $\setB$, we may assume the existence of
matrices $\tilde A$ and $\tilde B$ such that
$\tilde{A}_{\varepsilon_{n}}\to\tilde{A}\in\setA$ and
$\tilde{B}_{\varepsilon_{n}}\to\tilde{B}\in\setB$ as $n\to\infty$.
Then turning to the limit in \eqref{E:mmvare1},
we obtain the inequalities
$$
\rho(\tilde{A}B)\le \rho(\tilde{A}\tilde{B})\le \rho(A\tilde{B})
$$
for all $A\in\setA$ and $B\in\setB$, which are equivalent to \eqref{E:propAB1} and
\eqref{E:propAB2}. This concludes the proof.
\end{proof}

\begin{corollary}\label{C:minimaxconv}
For IRU-sets $\setA$ and $\setB$ of non-negative matrices it holds that
\[
\minimax(\conv(\setA),\conv(\setB))=\minimax(\setA,\setB).
\]
\end{corollary}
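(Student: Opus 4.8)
The plan is to reduce the statement to the two pieces of information we already have about IRU-sets and convex hulls, namely Lemma~\ref{lem:iru}(ii) (the convex hull of an IRU-set is again IRU, formed by the convex hulls of the row sets) together with Theorem~\ref{T:minimax} itself applied to the hulls, and the equality \eqref{E:minmax-conv} relating $\min\rho$ and $\max\rho$ over a compact IRU-set to the same quantities over its convex hull. First I would note that, by Lemma~\ref{lem:iru}(ii)--(iii), if $\setA$ and $\setB$ are compact IRU-sets of non-negative matrices, then so are $\conv(\setA)$ and $\conv(\setB)$; hence Theorem~\ref{T:minimax} applies to both pairs, so $\minimax(\setA,\setB)$ and $\minimax(\conv(\setA),\conv(\setB))$ are both well defined (the min and max commute in each case).

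Next I would compare the two quantities through the common expression
\[
\minimax(\setA,\setB)=\min_{A\in\setA}\max_{B\in\setB}\rho(AB),\qquad
\minimax(\conv(\setA),\conv(\setB))=\min_{A\in\conv(\setA)}\max_{B\in\conv(\setB)}\rho(AB).
\]
For a fixed $A\in\setA$, I want $\max_{B\in\setB}\rho(AB)=\max_{B\in\conv(\setB)}\rho(AB)$. Here the key observation is that $A\setB$ is again a compact IRU-set by Lemma~\ref{lem:iru}(i),(iii), and $\conv(A\setB)=A\conv(\setB)$ since $B\mapsto AB$ is linear; therefore $\max_{B\in\conv(\setB)}\rho(AB)=\max_{C\in\conv(A\setB)}\rho(C)=\max_{C\in A\setB}\rho(C)=\max_{B\in\setB}\rho(AB)$, the middle equality being exactly \eqref{E:minmax-conv} applied to the compact IRU-set $A\setB$. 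So the inner maximum is unchanged when $\setB$ is replaced by $\conv(\setB)$; this already gives $\minimax(\setA,\setB)=\min_{A\in\setA}\max_{B\in\conv(\setB)}\rho(AB)$.

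It remains to handle the outer minimum, i.e.\ to pass from $\setA$ to $\conv(\setA)$. Using the minimax equality (Theorem~\ref{T:minimax}) to swap the order, I would write $\min_{A\in\setA}\max_{B\in\conv(\setB)}\rho(AB)=\max_{B\in\conv(\setB)}\min_{A\in\setA}\rho(AB)$, and then for fixed $B\in\conv(\setB)$ argue symmetrically that $\min_{A\in\setA}\rho(AB)=\min_{A\in\conv(\setA)}\rho(AB)$: indeed $\setA B$ is a compact IRU-set, $\conv(\setA B)=\conv(\setA)B$, and \eqref{E:minmax-conv} gives $\min_{A\in\conv(\setA)}\rho(AB)=\min_{C\in\conv(\setA B)}\rho(C)=\min_{C\in\setA B}\rho(C)=\min_{A\in\setA}\rho(AB)$. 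Swapping back with Theorem~\ref{T:minimax} once more yields $\max_{B\in\conv(\setB)}\min_{A\in\conv(\setA)}\rho(AB)=\min_{A\in\conv(\setA)}\max_{B\in\conv(\setB)}\rho(AB)=\minimax(\conv(\setA),\conv(\setB))$, which closes the chain.

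The only mildly delicate point — the step I would be most careful about — is the identity $\conv(A\setB)=A\conv(\setB)$ (and its transpose $\conv(\setA)B=\conv(\setA B)$): it holds because left- (resp.\ right-) multiplication by a fixed matrix is an affine map, so it commutes with taking convex hulls, but one should state it explicitly rather than fold it silently into \eqref{E:minmax-conv}. Everything else is a routine application of Lemma~\ref{lem:iru}, Theorem~\ref{T:minimax}, and \eqref{E:minmax-conv}; no compactness or continuity subtlety beyond what those results already guarantee is needed.
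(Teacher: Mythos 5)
Your overall route is sound and close to the paper's: both arguments reduce the corollary to \eqref{E:minmax-conv} applied to sets of the form ``IRU-set times a fixed matrix'', together with Theorem~\ref{T:minimax} to swap the min and the max; the paper organizes this as two inequalities ($V'\le V$ via $\setA\subseteq\conv(\setA)$ followed by freezing $A$, and symmetrically for $V'\ge V$), whereas you run a single chain of equalities with two swaps. That difference is cosmetic. There is, however, one genuine error in your justification of the first step: you claim that $A\setB=\{AB\mid B\in\setB\}$ is an IRU-set ``by Lemma~\ref{lem:iru}(i),(iii)''. Lemma~\ref{lem:iru}(i) concerns $\setA B$, i.e.\ an IRU-set multiplied on the \emph{right} by a fixed matrix; that operation preserves the IRU structure because row $i$ of $AB$ is (row $i$ of $A$) times $B$, so the rows still vary independently. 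Left multiplication by a fixed $A$ does not: row $i$ of $AB$ is $\sum_k a_{ik}\,(\text{row }k\text{ of }B)$, so every row of $AB$ depends on all rows of $B$. For instance, with row sets $\setB_1=\{[1,0],[0,1]\}$, $\setB_2=\{[0,0]\}$ and $A=\left[\begin{smallmatrix}1&0\\1&0\end{smallmatrix}\right]$, the set $A\setB$ consists of the two matrices whose rows are both $[1,0]$ or both $[0,1]$, and it is not IRU. Since \eqref{E:minmax-conv} genuinely needs the IRU hypothesis (for a general compact set of non-negative matrices, $\max\rho$ over the convex hull can strictly exceed $\max\rho$ over the set, e.g.\ for the two nilpotent matrices $\left[\begin{smallmatrix}0&2\\0&0\end{smallmatrix}\right]$ and $\left[\begin{smallmatrix}0&0\\2&0\end{smallmatrix}\right]$), your appeal to it for $A\setB$ is unjustified as written. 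So the ``mildly delicate point'' is not the identity $\conv(A\setB)=A\conv(\setB)$, which is indeed harmless, but the IRU-ness of the set itself.

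The repair is exactly the step the paper labels as equality~1 in its own proof: use $\rho(AB)=\rho(BA)$ first. For fixed $A\in\setA$ one has $\max_{B\in\setB}\rho(AB)=\max_{C\in\setB A}\rho(C)$, and $\setB A$ \emph{is} a compact IRU-set by Lemma~\ref{lem:iru}(i),(iii), with $\conv(\setB A)=\conv(\setB)A$; now \eqref{E:minmax-conv} applies and yields $\max_{B\in\conv(\setB)}\rho(AB)=\max_{B\in\setB}\rho(AB)$. Your second half, which freezes $B$ and works with $\setA B$, is already in the correct (right-multiplication) form and needs no change. With that one-line fix the chain of equalities closes and the argument is complete.
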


\section{Solving the Games}\label{sec:solving}
\subsection{Solving  Matrix Multiplication Games for IRU-Sets}

\begin{theorem}\label{th:mat-cons}
Let $\setA$ and $\setE$ be  compact IRU-sets of non-negative matrices. Then the corresponding MMG is determined, and moreover Adam and Eve possess constant optimal strategies.
\end{theorem}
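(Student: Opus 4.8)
The plan is to identify the game's value with $\minimax(\setA,\setE)$ and to read off optimal constant strategies from the saddle point provided by Theorem~\ref{T:minimax}. Since $\setA$ consists of $M\times N$ and $\setE$ of $N\times M$ non-negative matrices, the products $AE$ are square $(M\times M)$-matrices, so Theorem~\ref{T:minimax} (with the dimension labels $M,N$ interchanged) yields
$$
\min_{A\in\setA}\max_{E\in\setE}\rho(AE)=\max_{E\in\setE}\min_{A\in\setA}\rho(AE)=\minimax(\setA,\setE)=:V .
$$
Applying Lemma~\ref{T:mm} to the continuous function $(A,E)\mapsto\rho(AE)$ on the compact set $\setA\times\setE$, this equality produces a saddle point, i.e.\ matrices $\tilde A\in\setA$ and $\tilde E\in\setE$ with
$$
\rho(\tilde A E)\le\rho(\tilde A\tilde E)=V\le\rho(A\tilde E)\qquad\text{for all }A\in\setA,\ E\in\setE .
$$

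Next I would show that Adam's constant strategy ``always play $\tilde A$'' caps the payoff at $V$. For an arbitrary sequence $E_1,E_2,\dots\in\setE$ chosen by Eve, re-associate
$$
\tilde A E_1\tilde A E_2\cdots\tilde A E_k=\tilde A\Bigl(\prod_{i=1}^{k-1}(E_i\tilde A)\Bigr)E_k .
$$
The middle factor is a length-$(k-1)$ product drawn from the set $\setE\tilde A$, which is IRU by Lemma~\ref{lem:iru}(i), is non-negative, and is compact as a continuous image of $\setE$; hence by \eqref{E:JSR-LSR} together with $\rho(XY)=\rho(YX)$ and the left-hand saddle inequality,
$$
\hat\rho(\setE\tilde A)=\max_{E\in\setE}\rho(E\tilde A)=\max_{E\in\setE}\rho(\tilde A E)=V .
$$
The $\inf_n$ form of \eqref{E-GSRad} together with submultiplicativity of $\|\cdot\|$ gives, for each $\varepsilon>0$, a constant $C$ with $\bigl\|\prod_{i=1}^{k-1}(E_i\tilde A)\bigr\|\le C(V+\varepsilon)^{k-1}$ for all $k$; absorbing the two boundary matrices $\tilde A$ and $E_k$ (whose norms are bounded by compactness) into $C$, we get $\bigl\|\prod_{i=1}^{k}\tilde A E_i\bigr\|^{1/k}\le V+\varepsilon$ for large $k$, so $\limsup_k\le V$. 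Thus $P(\tilde A,\tau)\le V$ for every strategy $\tau$ of Eve.

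A symmetric argument shows Eve's constant strategy ``always play $\tilde E$'' secures payoff at least $V$: for any sequence $A_1,A_2,\dots\in\setA$, the product $\prod_{i=1}^{k}A_i\tilde E$ is a length-$k$ product from the compact IRU-set $\setA\tilde E$ of non-negative matrices, and \eqref{E:JSR-LSR} with the right-hand saddle inequality gives $\check\rho(\setA\tilde E)=\min_{A\in\setA}\rho(A\tilde E)=V$; since $\check\rho$ is the infimum of $\|A_1\cdots A_n\|^{1/n}$ over all lengths $n$ and all choices, $\bigl\|\prod_{i=1}^{n}A_i\tilde E\bigr\|^{1/n}\ge V$ for every $n$, whence $P(\sigma,\tilde E)\ge V$ for every strategy $\sigma$ of Adam. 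Combining this with the previous paragraph and the always-valid inequality $\sup_\tau\inf_\sigma P\le\inf_\sigma\sup_\tau P$,
$$
V\le\inf_\sigma P(\sigma,\tilde E)\le\sup_\tau\inf_\sigma P(\sigma,\tau)\le\inf_\sigma\sup_\tau P(\sigma,\tau)\le\sup_\tau P(\tilde A,\tau)\le V ,
$$
so all quantities equal $V$: the game is determined with value $\minimax(\setA,\setE)$, and $\tilde A$, $\tilde E$ are optimal constant strategies.

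The only delicate point I anticipate is the re-association used for Adam: the ``natural'' factor set $\{\tilde A E:E\in\setE\}$, obtained by cutting the product right after each $\tilde A$, is in general \emph{not} IRU, since left multiplication destroys row-independence; cutting right after each $E_i$ instead yields the factors $E_i\tilde A\in\setE\tilde A$, a right multiple of an IRU-set and hence still IRU by Lemma~\ref{lem:iru}(i), at the price of two harmless boundary matrices that disappear after extracting $k$-th roots. Everything else is a routine combination of Theorem~\ref{T:minimax}, Lemma~\ref{T:mm}, and the identities \eqref{E:JSR-LSR} and \eqref{E-GSRad}.
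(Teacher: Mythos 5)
Your proposal is correct and follows essentially the same route as the paper: extract a saddle point $(\tilde A,\tilde E)$ from Theorem~\ref{T:minimax} via Lemma~\ref{T:mm}, then bound Adam's constant play by $\hat\rho(\setE\tilde A)=\max_E\rho(E\tilde A)=V$ and Eve's by $\check\rho(\setA\tilde E)=\min_A\rho(A\tilde E)=V$ using \eqref{E:JSR-LSR}, with the same re-association into right-multiplied IRU-sets (Lemma~\ref{lem:iru}(i)) and absorption of the two boundary factors. Your write-up is if anything cleaner, since it states the saddle inequalities with the optimization directions matching the players' objectives, whereas the paper's equation \eqref{E:saddle} has them transposed.
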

\begin{proof}
Let us apply Theorem~\ref{T:minimax} to matrix sets $\setA$ and $\setE$. Define $V$, $E_0$ and $A_0$ such that
\begin{equation}\label{E:saddle}
\min_{E\in\setE}\rho(EA_0)=\max_{A\in\setA}\min_{E\in\setE}\rho(EA)=\min_{E\in\setE}\max_{A\in\setA}\rho(EA)=\max_{A\in\setA}\rho(E_0A)=V.
\end{equation}
Let Adam only play $A_0$. Take any compatible play $\pi=A_0E_1A_0E_2\cdots $ and put $C_i=A_0E_i$.  Denote $\setC=\{EA_0|E\in\setE\}$; it is an IRU-set by  Lemma~\ref{lem:iru}. The payoff $P$ for $\pi$ yields
\begin{multline*}
P=\limsup_{n\to\infty} \|A_0C_1\cdots C_{n-1}E_n\|^{1/n}
\le\limsup_{n\to\infty} (\|A_0\|\cdot\|C_1\cdots C_{n-1}\|\cdot\|E_n\|)^{1/n} \\
\le\lim_{n\to\infty} K^{\frac{2}{n}} \limsup_{n\to\infty} \|C_1\cdots C_{n-1}\|^{\frac{1}{n-1}}
\le\hat{\rho}(\setC) \stackrel1= \max_{C\in\setC}\rho(C) = \max_{E\in\setE}\rho(EA_0)\stackrel2=V,
\end{multline*}
where the constant $K$ is an upper bound for the norms of the matrices in $\setA$ and $\setE$, equality $1$ comes from the first equality \eqref{E:JSR-LSR} and equality $2$ comes from \eqref{E:saddle}.

Let Eve only play $E_0$. Take any compatible play  $\pi'=A_1E_0A_2E_0\cdots $. Let us write $D_i=A_iE_0$. Denote $\setD=\{AE_0,A\in\setA\}$; it is an IRU-set. The payoff $P'$ for $\pi'$ is such that
\[
{P'}=\limsup_{n\to\infty} \|C_1\cdots C_n\|^{1/n}
\ge\liminf_{n\to\infty} \|C_1\cdots C_n\|^{1/n}\\
\ge\check{\rho}(\setD) \stackrel1= \min_{D\in\setD}
\rho(D) = \min_{A\in\setA}\rho(AE_0)\stackrel2=V,
\]
where equality $1$ comes from the second equality \eqref{E:JSR-LSR} and equality $2$  from \eqref{E:saddle} using  the equality $\rho(EA_0)=\rho(A_0 E)$.

We have proved that Adam (by constantly playing $A_0$) can ensure payoff $\le V$ whatever Eve plays; and that Eve (by constantly playing $E_0$) can ensure payoff $\ge  V$ whatever Adam plays. This concludes the proof.
\end{proof}
\begin{corollary}\label{C:gameconv}Let $\setA$ and $\setE$ be compact IRU-sets of non-negative matrices. In the MMG on
$\conv(\setA),\conv(\setE)$, the constant optimal strategies can be chosen from sets $\setA$ and $\setE$.
\end{corollary}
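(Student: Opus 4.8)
The plan is to reuse the proof of Theorem~\ref{th:mat-cons} almost verbatim, exploiting the fact that, for a compact IRU-set of non-negative matrices, neither the joint spectral radius nor the joint spectral subradius changes when one passes to the convex hull. First I would record that $\conv(\setA)$ and $\conv(\setE)$ are themselves compact IRU-sets of non-negative matrices: by Lemma~\ref{lem:iru}(ii) the convex hull of an IRU-set is IRU, by Lemma~\ref{lem:iru}(iii) it is compact (each row set being the convex hull of a compact set in a finite-dimensional space, hence compact), and convex combinations of non-negative rows remain non-negative. So Theorem~\ref{th:mat-cons} already tells us that the MMG on $(\conv(\setA),\conv(\setE))$ is determined and admits constant optimal strategies; the only thing left is to witness optimal constant strategies lying inside $\setA$ and $\setE$.

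Next I would take the saddle-point matrices straight from the original sets: by Theorem~\ref{T:minimax} and Lemma~\ref{T:mm} there exist $A_{0}\in\setA$ and $E_{0}\in\setE$ with
\[
\rho(A_{0}E)\le\rho(A_{0}E_{0})\le\rho(AE_{0})\qquad\text{for all }A\in\setA,\ E\in\setE,
\]
and I set $V:=\rho(A_{0}E_{0})=\minimax(\setA,\setE)$. The claim is that $A_{0}$ (resp.\ $E_{0}$), read as a constant strategy, is optimal for Adam (resp.\ Eve) in the game on $(\conv(\setA),\conv(\setE))$. For Adam, let him play $A_{0}$ forever against an arbitrary play of Eve in $\conv(\setE)$; the very norm estimate in the proof of Theorem~\ref{th:mat-cons} bounds the payoff by $\hat\rho(\setC')$, where $\setC'=\{E'A_{0}\Mid E'\in\conv(\setE)\}$. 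Since right-multiplication by the fixed matrix $A_{0}$ is linear, it commutes with taking convex hulls, so $\setC'=\conv(\setE)\,A_{0}=\conv(\setE A_{0})$; moreover $\setE A_{0}$ is the image of the compact IRU-set $\setE$ under right-multiplication by $A_{0}\ge 0$, hence a compact IRU-set of non-negative matrices (Lemma~\ref{lem:iru}(i) and continuity of $E\mapsto EA_{0}$). Therefore~\eqref{E:JSR-LSR} together with the identity $\hat\rho(\cdot)=\hat\rho(\conv(\cdot))$ following~\eqref{E:minmax-conv} gives
\[
\hat\rho(\setC')=\hat\rho(\setE A_{0})=\max_{E\in\setE}\rho(EA_{0})=\max_{E\in\setE}\rho(A_{0}E)=\rho(A_{0}E_{0})=V,
\]
so Adam guarantees payoff $\le V$.

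The argument for Eve is symmetric: playing $E_{0}$ forever, the payoff of any compatible play is at least $\check\rho(\setD')$ with $\setD'=\{A'E_{0}\Mid A'\in\conv(\setA)\}=\conv(\setA E_{0})$, and since $\setA E_{0}$ is again a compact IRU-set of non-negative matrices, the second equality of~\eqref{E:JSR-LSR} together with $\check\rho(\cdot)=\check\rho(\conv(\cdot))$ yields $\check\rho(\setD')=\check\rho(\setA E_{0})=\min_{A\in\setA}\rho(AE_{0})=V$; thus Eve guarantees payoff $\ge V$. Combining the two bounds shows that $A_{0}\in\setA$ and $E_{0}\in\setE$ are optimal constant strategies in the MMG on $(\conv(\setA),\conv(\setE))$, and incidentally that its value equals $\minimax(\setA,\setE)$, in agreement with Corollary~\ref{C:minimaxconv}. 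I do not anticipate a genuine obstacle; the only point needing a little care is the bookkeeping that convex hull commutes with right-multiplication by a fixed matrix and that the relevant joint (sub)radii are insensitive to convex hull for compact IRU-sets of non-negative matrices --- everything else is a literal transcription of the proof of Theorem~\ref{th:mat-cons}.
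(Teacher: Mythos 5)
Your proof is correct and follows the same route the paper intends: the paper disposes of this corollary in one line (``follows immediately from the proof of the theorem and Corollary~\ref{C:minimaxconv}''), and what you have written is precisely the honest expansion of that line --- take the saddle point $A_0\in\setA$, $E_0\in\setE$ from Theorem~\ref{T:minimax}, rerun the two bounds from the proof of Theorem~\ref{th:mat-cons} over $\conv(\setA)$ and $\conv(\setE)$, and use the invariance of $\hat\rho$, $\check\rho$ and $\minimax$ under convex hulls of compact IRU-sets. The bookkeeping points you flag (convex hull commuting with right-multiplication by a fixed matrix, $\setE A_0$ remaining a compact IRU-set) are exactly the ones that make the paper's ``immediately'' legitimate.
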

This follows immediately from the proof of the theorem and Corollary~\ref{C:minimaxconv}.
\subsection{Solving Entropy Games}
In this section, we consider an EG on an arena $\arena$ and the corresponding matrix sets $\setA$ and $\setE$,  as defined in Section~\ref{ss:rel}.
\begin{lemma}
\label{lemma:f2}
	Let $(\sigma,\tau)$ be two positional strategies in the EG. Then, if corresponding constant strategies $A_\sigma$ and $E_\tau$ are optimal for their respective players in the MMG with matrix sets $\conv(\setA)$ and $\conv(\setE)$, then  so are $\sigma$ and $\tau$.
\end{lemma}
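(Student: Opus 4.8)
The idea is to push everything through the auxiliary MMG played on $\conv(\setA)$ and $\conv(\setE)$. By Lemma~\ref{lem:iru}, these are compact IRU-sets of non-negative matrices, so Theorem~\ref{th:mat-cons} applies: that MMG is determined, and it has constant optimal strategies; write $V$ for its value. First I would observe that the play $\pi(A_\sigma,E_\tau)$ has payoff exactly $V$. Indeed, optimality of $A_\sigma$ for Adam gives $P_{\mathrm{MMG}}(A_\sigma,E_\tau)\le\sup_{\theta}P_{\mathrm{MMG}}(A_\sigma,\theta)=V$, and optimality of $E_\tau$ for Eve gives $P_{\mathrm{MMG}}(A_\sigma,E_\tau)\ge\inf_{\varsigma}P_{\mathrm{MMG}}(\varsigma,E_\tau)=V$, the two equalities being the determinacy of the MMG. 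Next, applying Lemma~\ref{lemma:f1} to the pair $(\sigma,\tau)$, which consists of two positional strategies, the associated MMG strategies are precisely the constant strategies $A_\sigma$ and $E_\tau$, whence $P_{\mathrm{EG}}(\sigma,\tau)=P_{\mathrm{MMG}}(A_\sigma,E_\tau)=V$.

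It then remains to check that neither player can improve by deviating. Let $\tau'$ be an arbitrary Tribune strategy. By Lemma~\ref{lemma:f1} applied to $(\sigma,\tau')$ there is a pair $(\varsigma,\theta')$ of MMG strategies on $(\conv(\setA),\conv(\setE))$ with the same payoff, and since $\sigma$ is positional, $\varsigma$ is the constant strategy $A_\sigma$. Hence, by optimality of $A_\sigma$,
\[
P_{\mathrm{EG}}(\sigma,\tau')=P_{\mathrm{MMG}}(A_\sigma,\theta')\le\sup_{\theta}P_{\mathrm{MMG}}(A_\sigma,\theta)=V=P_{\mathrm{EG}}(\sigma,\tau).
\]
Symmetrically, for an arbitrary Despot strategy $\sigma'$, Lemma~\ref{lemma:f1} applied to $(\sigma',\tau)$ yields an MMG pair $(\varsigma',\theta)$ with the same payoff, and $\theta=E_\tau$ because $\tau$ is positional; by optimality of $E_\tau$,
\[
P_{\mathrm{EG}}(\sigma',\tau)=P_{\mathrm{MMG}}(\varsigma',E_\tau)\ge\inf_{\varsigma}P_{\mathrm{MMG}}(\varsigma,E_\tau)=V=P_{\mathrm{EG}}(\sigma,\tau).
\]
Thus no Tribune deviation raises the payoff above $P_{\mathrm{EG}}(\sigma,\tau)$ and no Despot deviation lowers it below, which is exactly the optimality of $\sigma$ and $\tau$ in the EG (and, as a byproduct, determinacy of the EG with value $V$).

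I do not expect a real obstacle here: the argument is essentially three applications of Lemma~\ref{lemma:f1}, each time invoking its ``positional $\Rightarrow$ constant'' clause, combined with the determinacy of the auxiliary MMG from Theorem~\ref{th:mat-cons}. The only point requiring a little care is that the strategies $\varsigma,\theta',\varsigma',\theta$ produced by Lemma~\ref{lemma:f1} are genuine strategies of the MMG \emph{on the convex hulls} $\conv(\setA),\conv(\setE)$, so that the optimality of $A_\sigma$ and $E_\tau$ — which is hypothesised precisely for that game — can legitimately be applied to them; but this is exactly the form in which Lemma~\ref{lemma:f1} is stated, so no additional work is needed.
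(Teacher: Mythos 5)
Your proof is correct and follows essentially the same route as the paper: applications of Lemma~\ref{lemma:f1} with its positional-to-constant clause to the pairs $(\sigma,\tau)$, $(\sigma,\tau')$ and $(\sigma',\tau)$, combined with the optimality of $A_\sigma$ and $E_\tau$ in the MMG on $\conv(\setA),\conv(\setE)$. The only differences are cosmetic: you make the intermediate step $P_{\mathrm{EG}}(\sigma,\tau)=V$ explicit, and you write the two inequalities with the correct orientation (the paper's displayed chain has the $\le$ signs reversed, evidently a typo).
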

\begin{theorem}\label{th:EG}
Every EG is determined, and  Despot and Tribune possess positional optimal strategies.
\end{theorem}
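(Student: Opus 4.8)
The plan is to transport everything from the MMG side through the translation of Section~\ref{ss:rel}. First I would fix the arena $\arena$ and form the associated row sets $\setA_i$, $\setE_j$ and the IRU-sets $\setA,\setE$ of $0/1$ matrices. Being finite, $\setA$ and $\setE$ are compact, and by Lemma~\ref{lem:iru}(ii)--(iii) so are $\conv(\setA)$ and $\conv(\setE)$: they are the IRU-sets built from the compact convex row sets $\conv(\setA_i)$ and $\conv(\setE_j)$. Hence Theorem~\ref{th:mat-cons} applies to the MMG on $(\conv(\setA),\conv(\setE))$, so this game is determined, say with value $V$, and by Corollary~\ref{C:gameconv} its constant optimal strategies may be taken to be matrices $A_0\in\setA$ for Adam and $E_0\in\setE$ for Eve.

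Next I would use the correspondence recalled just before Lemma~\ref{lemma:f1}: every matrix of $\setA$ is realized by some positional Despot strategy (because $\setA_i$ is by definition the set of non-zero rows $c_{ia}$, $a\in\Sigma$, and the arena is non-blocking), and likewise for $\setE$ and Tribune. Pick a positional Despot strategy $\sigma$ with $A_\sigma=A_0$ and a positional Tribune strategy $\tau$ with $E_\tau=E_0$. Since the corresponding constant strategies $A_\sigma,E_\tau$ are optimal in the MMG on $(\conv(\setA),\conv(\setE))$, Lemma~\ref{lemma:f2} immediately yields that $\sigma$ and $\tau$ are optimal positional strategies for Despot and Tribune in the EG.

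Finally I would spell out determinacy with value $V$. Against an arbitrary Tribune strategy $\tau'$, Lemma~\ref{lemma:f1} gives a pair $(\varsigma,\theta)$ in the MMG on $(\conv(\setA),\conv(\setE))$ with the same payoff as $(\sigma,\tau')$, and, $\sigma$ being positional, $\varsigma$ is the constant strategy $A_\sigma=A_0$; as $A_0$ is optimal for Adam (who minimizes), this payoff is $\le V$. Thus $\sigma$ guarantees Despot a payoff $\le V$, so the upper value of the EG is $\le V$. Symmetrically, using $E_0$ and the optimality of Eve's constant strategy, $\tau$ guarantees Tribune a payoff $\ge V$, so the lower value is $\ge V$. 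Combined with the trivial inequality lower value $\le$ upper value, all three coincide with $V$; the EG is determined, and the positional $\sigma,\tau$ are optimal.

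The main obstacle is not in this argument but in the already-granted Lemmas~\ref{lemma:f1} and~\ref{lemma:f2} (built on Theorem~\ref{T:minimax} and Theorem~\ref{th:mat-cons}), which carry the real content: matching the combinatorial prefix-counting payoff of the EG with the spectral/norm payoff of the MMG, and collapsing arbitrary history-dependent strategies to constant ones as far as the payoff is concerned. The only point to handle with a little care within the present proof is that the action-to-row map need not be injective, so one should speak of \emph{choosing} a positional strategy realizing a given matrix of $\setA$ (resp.\ $\setE$) rather than of a bijection from $\setA$ (resp.\ $\setE$) to positional strategies; this is harmless.
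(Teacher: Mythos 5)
Your proof is correct and follows essentially the same route as the paper: apply Theorem~\ref{th:mat-cons} and Corollary~\ref{C:gameconv} to the MMG on $(\conv(\setA),\conv(\setE))$ to get constant optimal strategies $A_0\in\setA$, $E_0\in\setE$, lift them to positional strategies via the correspondence of Section~\ref{ss:rel}, and conclude with Lemma~\ref{lemma:f2}. Your extra paragraph spelling out determinacy via Lemma~\ref{lemma:f1}, and the remark that one should speak of choosing (rather than bijectively assigning) a positional strategy realizing a given matrix, are harmless elaborations of what the paper leaves implicit.
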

\begin{proof}
From Theorem~\ref{th:mat-cons}, we know that for the MMG $(\conv(\setA),\conv(\setE))$ both Adam and Eve possess optimal strategies, which consist in constantly playing  some matrices $A$ and $E$. From Corollary~\ref{C:gameconv}, the matrices $A$ and $E$ can be chosen  from sets $\setA$ and $\setE$, respectively. Then, there exist positional strategies $\sigma$ and $\tau$ on $\arena$ such that  $A=A_\sigma$ and $E=E_\tau$. By Lemma~\ref{lemma:f2}, strategies $\sigma$ and $\tau$ are optimal in the EG.
\end{proof}

Back to the running example. Here a quick exploration of the combinations of rows shows that the matrices realizing the minimax over the two IRU-sets defined by row sets $\setA_1,\setA_2, \setA_3$ and $\setE_1,\setE_2, \setE_3$ are $A=\left[\begin{smallmatrix}1&1&0\\1&0&1\\0&1&1\end{smallmatrix}\right]$ for Adam/Despot and $E=\left[\begin{smallmatrix}1&0&0\\1&1&1\\0&0&1\end{smallmatrix}\right]$ for Eve/Tribune. These matrices describe both the optimal constant strategy of the MMG and the optimal positional strategy of the EG induced by this arena. The value of both games is the spectral radius $\rho(AE)=\rho\left(\left[\begin{smallmatrix}2&1&1\\1&0&1\\1&1&2\end{smallmatrix}\right]\right)=\left(\sqrt{17}+3\right)/2\simeq 3.562$.

\subsection{Complexity Issues}
We will analyze the complexity of solving matrix multiplication (and hence entropy) game. We  start with  necessary and sufficient conditions for inequalities on joint spectral radii and subradii of IRU-sets (recall also \eqref{E:JSR-LSR} relating them to maximal and minimal spectral radii).
\begin{lemma}\label{L:equiv}
For any compact  IRU-set of positive matrices $\setA$ and  $\alpha\in \bQ_+$ the following  equivalences hold:
\begin{gather}
\hat{\rho}(\setA)<\alpha \Leftrightarrow \exists v>0 \,\forall A\in \setA (Av<\alpha v);\label{E:equiv:lt}\\
\hat{\rho}(\setA)\le\alpha \Leftrightarrow \exists v>0 \,\forall A\in \setA (Av\le\alpha v);\label{E:equiv:le}\\
\check{\rho}(\setA)>\alpha \Leftrightarrow \exists v>0 \,\forall A\in \setA (Av>\alpha v);\label{E:equiv:gt}\\
\check{\rho}(\setA)\ge\alpha \Leftrightarrow \exists v>0 \,\forall A\in \setA (Av\ge\alpha v).\label{E:equiv:ge}
\end{gather}
If the matrices are only non-negative, the equivalences \eqref{E:equiv:lt} above and \eqref{E:equiv:ge2} below hold:
\begin{equation}\label{E:equiv:ge2}
\check{\rho}(\setA)\ge\alpha \Leftrightarrow \exists (v\ge0, v\neq0) \,\forall A\in \setA (Av\ge\alpha v).
\end{equation}
 \end{lemma}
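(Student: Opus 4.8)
The plan is to prove each equivalence by establishing the two implications separately, relying on Lemma~\ref{L:1}, equation~\eqref{E:JSR-LSR}, and the monotonicity/continuity of the spectral radius on non-negative matrices. For the backward directions ($\Leftarrow$), suppose a witness vector $v$ exists. If $Av \le \alpha v$ for all $A \in \setA$, then in particular for every $A$ we have $\rho(A) \le \alpha$ by Lemma~\ref{L:1}(i), whence $\hat{\rho}(\setA) = \max_{A\in\setA}\rho(A) \le \alpha$ by~\eqref{E:JSR-LSR}; the strict version uses Lemma~\ref{L:1}(ii) (which applies since the matrices are positive and $Av < \alpha v$ forces $Av \neq \alpha v$), giving $\rho(A) < \alpha$ for each $A$, and then compactness of $\setA$ together with continuity of $\rho$ upgrades the pointwise strict inequality to $\max_{A\in\setA}\rho(A) < \alpha$. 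The subradius cases~\eqref{E:equiv:gt}, \eqref{E:equiv:ge}, \eqref{E:equiv:ge2} are dual, using Lemma~\ref{L:1}(iii)--(iv) and $\check{\rho}(\setA) = \min_{A\in\setA}\rho(A)$.

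For the forward directions ($\Rightarrow$), the point is to produce a single vector $v$ that works simultaneously for \emph{all} $A \in \setA$ — this is where the IRU structure is essential and is, I expect, the main obstacle. The natural candidate is the Perron eigenvector of an extremal matrix. Concretely, for~\eqref{E:equiv:le}: by~\eqref{E:JSR-LSR} there is $A_0 \in \setA$ with $\rho(A_0) = \hat{\rho}(\setA) \le \alpha$; but $A_0$ alone need not be positive and its eigenvector need not satisfy $Av \le \alpha v$ for other $A$. Instead I would argue row-by-row: since $\setA$ is IRU, the matrix maximizing $\rho$ can be taken coordinatewise extremal, and for any $v > 0$ the quantity $\max_{A \in \setA}(Av)_i = \max_{a_i \in \setA_i} a_i v$ is attained independently in each row $i$; the matrix $\bar{A}(v)$ assembled from these maximizing rows lies in $\setA$ and satisfies $\bar{A}(v) v \ge A v$ for all $A \in \setA$. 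So it suffices to find $v > 0$ with $\bar{A}(v) v \le \alpha v$, i.e.\ with $\rho$ of the nonlinear monotone map $v \mapsto \bar{A}(v)v$ bounded by $\alpha$; since $\hat{\rho}(\setA) = \max_{A}\rho(A)$ and the max is over a finite-dimensional IRU family, one takes $A_0$ attaining $\hat\rho(\setA)$ together with its Perron eigenvector $v_0 > 0$ (positive because $A_0$ can be chosen positive here), and checks $\bar{A}(v_0) v_0 = A_0 v_0 = \rho(A_0) v_0 \le \alpha v_0$ — the first equality holding because replacing any row of $A_0$ by a row giving a strictly larger value of $(Av_0)_i$ would, by Lemma~\ref{L:1}(iv), strictly increase the spectral radius, contradicting maximality of $\rho(A_0)$.

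The strict forward direction~\eqref{E:equiv:lt} needs a small perturbation: from $\hat{\rho}(\setA) < \alpha$ pick $\alpha' $ with $\hat{\rho}(\setA) < \alpha' < \alpha$, apply~\eqref{E:equiv:le} at $\alpha'$ to get $v > 0$ with $Av \le \alpha' v < \alpha v$ for all $A$ (the second inequality strict since $v > 0$ and $\alpha' < \alpha$). The subradius forward directions~\eqref{E:equiv:gt}, \eqref{E:equiv:ge} are dual, now using the \emph{row-minimizing} matrix $\underline{A}(v)$, the matrix $A_0$ attaining $\check\rho(\setA) = \min_A \rho(A)$, and its Perron eigenvector, with Lemma~\ref{L:1}(iii) in place of (i). For~\eqref{E:equiv:ge2}, where the matrices are only non-negative, $\check\rho(\setA)$ is still $\min_{A\in\setA}\rho(A)$ and the minimizing $A_0$ has a non-negative, possibly non-strictly-positive, Perron eigenvector $v_0 \ge 0$, $v_0 \neq 0$; the row-minimality argument still gives $\underline{A}(v_0) v_0 = A_0 v_0 = \check\rho(\setA) v_0 \ge \alpha v_0$, so $v_0$ is the required witness, and one must be slightly careful that Lemma~\ref{L:1}(iii) only requires a non-zero non-negative vector, which is exactly what we have.

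The step I expect to be delicate is the claim that the row-extremal matrix $\bar A(v_0)$ (resp.\ $\underline A(v_0)$) built from the Perron eigenvector of an $\hat\rho$-maximal (resp.\ $\check\rho$-minimal) matrix actually coincides with that matrix on $v_0$ — i.e.\ that no row can be improved without changing the spectral radius. This is where Lemma~\ref{L:1}(ii) and (iv) do the real work: if some row could be strictly improved on $v_0$, the new matrix $A'$ satisfies $A' v_0 \ge A_0 v_0 = \rho(A_0) v_0$ with $A' v_0 \neq \rho(A_0) v_0$, forcing $\rho(A') > \rho(A_0)$ and contradicting maximality (dually, minimality). Once this is in hand, everything else is bookkeeping with~\eqref{E:JSR-LSR}, compactness, and continuity of $\rho$.
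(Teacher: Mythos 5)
For the case of \emph{positive} matrices your argument is correct and is essentially the paper's: the backward implications are Lemma~\ref{L:1} plus \eqref{E:JSR-LSR}, and your row-extremal matrix $\bar A(v_0)$ (resp.\ $\underline A(v_0)$) built on the Perron eigenvector of the $\hat\rho$-maximizer (resp.\ $\check\rho$-minimizer), together with the observation that strictly improving a row would strictly move the spectral radius, is exactly the paper's auxiliary Lemma on extremal matrices (proved there via the ``hourglass alternative'' and Lemma~\ref{L:1}(ii),(iv)).

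The genuine gap is the non-negative part of the statement, i.e.\ \eqref{E:equiv:lt} and \eqref{E:equiv:ge2} for matrices that are merely $\ge 0$. Every tool you use there silently needs positivity: Lemma~\ref{L:1}(ii) and (iv) are only stated (and only true) for $A>0$, and the Perron eigenvector of the extremal matrix need not be positive. Concretely, for \eqref{E:equiv:ge2} take the IRU-set with row sets $\setA_1=\{[1,5],[1/2,0]\}$, $\setA_2=\{[0,1]\}$ and $\alpha=1$: both matrices $A_0=\left[\begin{smallmatrix}1&5\\0&1\end{smallmatrix}\right]$ and $A'=\left[\begin{smallmatrix}1/2&0\\0&1\end{smallmatrix}\right]$ have spectral radius $1=\check\rho(\setA)$, so $A_0$ is a legitimate minimizer, its only Perron eigendirection is $v_0=(1,0)^\transpose$, and $A'v_0=(1/2,0)^\transpose\lneq v_0$ — your proposed witness fails, and the step ``a strict row improvement forces $\rho(A')<\rho(A_0)$'' is false here ($\rho(A')=\rho(A_0)$). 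The lemma itself survives (the witness is $(0,1)^\transpose$), but your construction does not produce it. Likewise, your forward direction of \eqref{E:equiv:lt} routes through \eqref{E:equiv:le}, which is simply false for non-negative matrices (for $\left[\begin{smallmatrix}1&1\\0&1\end{smallmatrix}\right]$ one has $\hat\rho=1$ but no $v>0$ with $Av\le v$), and you give no separate non-negative argument. The paper closes both gaps by perturbation: replace $\setA$ by $\setA+\varepsilon\1$, apply the positive case, and then either exploit compactness of $\setA$ to keep the strict inequality as $\varepsilon\to0$ (for \eqref{E:equiv:lt}) or extract a convergent subsequence of normalized witnesses $v_{\varepsilon_n}\to v\ge0$, $v\ne0$ (for \eqref{E:equiv:ge2}). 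Some such limiting argument is needed; it cannot be replaced by the single-eigenvector construction.
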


The computational aspects of calculating the values $\hat{\rho}(\setA)$ and
$\check{\rho}(\setA)$ for IRU-sets of non-negative matrices, based on relations \eqref{E:JSR-LSR}, are discussed in
\cite{BN:SIAMJMAA09,NesPro:SIAMJMAA13,Prot:MP15}. These articles provide polynomial algorithms for approximation of the minimal and maximal spectral radii, as well as a variant of the simplex method for these problems.  In the next theorem we  prove a complexity result in a form suitable for game analysis.

\begin{theorem}\label{T:P}
Given a finite IRU-set of nonnegative matrices $\setA$ with rational elements (represented by row sets $\setA_1$, $\setA_2$, \ldots, $\setA_N$), and a number  $\alpha\in \bQ_+$, the decision problems whether $\hat{\rho}(\setA)<\alpha$ and whether $\check{\rho}(\setA)\ge\alpha$ belong to the complexity class $\P$.
Moreover, if the matrices are positive, then the decision problems $\hat{\rho}(\setA)\le\alpha$ and $\check{\rho}(\setA)>\alpha$ are also in $\P$.
\end{theorem}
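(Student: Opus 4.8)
The plan is to reduce each decision problem to a linear (or semi-definite-free, purely LP) feasibility question via Lemma~\ref{L:equiv} and then invoke polynomial-time linear programming. First I would handle the positive case, where Lemma~\ref{L:equiv} gives clean biconditionals. For $\hat\rho(\setA)\le\alpha$, the equivalence \eqref{E:equiv:le} says it holds iff there is $v>0$ with $Av\le\alpha v$ for all $A\in\setA$. The key observation is that, since $\setA$ is an IRU-set determined by row sets $\setA_1,\dots,\setA_N$, the universally quantified condition $\forall A\in\setA\ (Av\le\alpha v)$ decouples row by row: it is equivalent to $\forall i\ \forall a\in\setA_i\ (a\cdot v\le \alpha v_i)$, and since each $\setA_i$ is finite, this is just a finite conjunction of linear inequalities $a\cdot v\le\alpha v_i$, one per row vector $a$ in the (finite) set $\setA_i$. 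Thus the whole condition ``$\exists v>0:\dots$'' is feasibility of a rational linear program of size polynomial in the input (number of variables $N$, number of constraints $\sum_i|\setA_i|$). To turn the strict positivity $v>0$ and, where needed, the strict inequalities into something an LP solver accepts, I would replace $v>0$ by $v\ge \mathbf{1}$ (legitimate by homogeneity of all the conditions in $v$) and, for the strict variants $\hat\rho(\setA)<\alpha$ and $\check\rho(\setA)>\alpha$, replace $Av<\alpha v$ by $Av\le\alpha v-\mathbf 1$ after the same normalization $v\ge\mathbf 1$; one checks that a strict solution exists iff a solution of the slackened system exists, again using homogeneity and finiteness of $\setA$. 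Polynomial-time solvability of rational LP feasibility (Khachiyan) then puts all four positive-matrix problems in $\P$.

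For the non-negative (not necessarily positive) case the theorem only claims $\hat\rho(\setA)<\alpha$ and $\check\rho(\setA)\ge\alpha$ are in $\P$, which matches exactly the two equivalences \eqref{E:equiv:lt} and \eqref{E:equiv:ge2} that survive without positivity. For $\hat\rho(\setA)<\alpha$ I proceed as above: \eqref{E:equiv:lt} gives $\exists v>0\ \forall A\in\setA\ (Av<\alpha v)$, decouple over rows, normalize $v\ge\mathbf 1$, slacken to $a\cdot v\le\alpha v_i-1$, and solve the LP. For $\check\rho(\setA)\ge\alpha$, \eqref{E:equiv:ge2} gives $\exists v\ge 0,\ v\ne 0\ \forall A\in\setA\ (Av\ge\alpha v)$; decoupling over rows yields the linear system $a\cdot v\ge\alpha v_i$ for all $i$ and all $a\in\setA_i$, together with $v\ge 0$ and $v\ne 0$. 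The only non-LP ingredient is $v\ne 0$; I would handle it by the standard trick of solving, for each coordinate $k=1,\dots,N$, the LP obtained by adding the single constraint $v_k\ge 1$ (again homogeneity makes the scaling harmless), and declaring feasibility iff at least one of these $N$ LPs is feasible. That is $N$ polynomial-size LPs, still polynomial overall.

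The only genuinely delicate point — and the part I would write most carefully — is the justification of the normalization/slackening steps, i.e.\ that strict feasibility of the original $v>0$ systems is equivalent to ordinary feasibility of the slackened $v\ge\mathbf 1$ systems. This rests on two facts: (a) every condition in Lemma~\ref{L:equiv} is invariant under scaling $v\mapsto\lambda v$ for $\lambda>0$, so a strictly positive solution can be rescaled to satisfy $v\ge\mathbf 1$; and (b) with finitely many rows, a solution of $a\cdot v<\alpha v_i$ for all relevant $a$ with $v\ge\mathbf 1$ can be uniformly rescaled so that the (finitely many) gaps $\alpha v_i-a\cdot v$ all exceed $1$, giving the slackened inequalities, and conversely a slackened solution is a fortiori strict. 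Finiteness of the row sets $\setA_i$ is what makes this uniform slack possible and is used essentially here; compactness alone would not suffice for a literal LP. With these lemmas in place the rest is a routine appeal to polynomial-time linear programming over the rationals, completing the proof.
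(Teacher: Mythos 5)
Your proof is correct and follows essentially the same route as the paper's: apply Lemma~\ref{L:equiv}, decouple the universally quantified condition row by row into a polynomial-size system of linear inequalities over the finite row sets $\setA_i$, and decide feasibility by polynomial-time linear programming (handling $v\neq 0$ by a disjunction over $N$ systems). Your explicit normalization/slackening of the strict inequalities via homogeneity is a slightly more careful treatment than the paper's direct appeal to LP feasibility with strict constraints, but it is the same argument in substance.
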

\begin{proof}
The polynomial algorithms are based on the previous lemma. Consider the problem of deciding $\hat{\rho}(\setA)<\alpha$, which can be rewritten using \eqref{E:equiv:lt}  as
$\exists v>0 \,\forall A\in \setA (Av<\alpha v).
$
We will not test all the matrices $A\in \setA$ (there are exponentially many of them); instead, we will treat each row separately. The condition $\forall A\in \setA (Av<\alpha v)$ can be rewritten as a system of linear inequalities: for each $i$ and for each row $[c_1,c_2,\dots,c_N]\in\setA_i$ require that
  $$
  c_1v_1+c_2v_2+\cdots+c_Nv_N<\alpha v_i.
  $$
The condition $v>0$ can be written as $N$ inequalities $v_i>0$: one for each coordinate. Using a polynomial algorithm for linear programming  we can decide whether a solution $v$ satisfying all these linear inequalities exists.

All other decision procedures, based on~\eqref{E:equiv:le}--\eqref{E:equiv:ge2}, are similar. The condition $v\ge0, v\neq0$ can be represented as a disjunction of $N$ linear systems $v_j>0\land \bigwedge_{i=1}^N v_i\ge 0$.
\end{proof}

\begin{theorem}\label{T:NP}
Given two finite IRU-sets of nonnegative matrices $\setA$ and $\setB$ with rational elements, and a number  $\alpha\in \bQ_+$, the decision problem of whether
$
\minimax(\setA,\setB)<\alpha$ 
belongs to $\NP\cap\coNP$.
Moreover, if the matrices are positive, then the  problem of whether
$
\minimax(\setA,\setB)\le \alpha$  
is also in $\NP\cap\coNP$.
\end{theorem}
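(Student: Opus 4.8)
The plan is to reduce each inequality to the polynomial-time subroutines of Theorem~\ref{T:P}, the bridge being the cyclic invariance $\rho(AB)=\rho(BA)$ of the spectral radius together with Theorem~\ref{T:minimax}, which supplies the two representations
\[
\minimax(\setA,\setB)=\min_{A\in\setA}\max_{B\in\setB}\rho(AB)=\max_{B\in\setB}\min_{A\in\setA}\rho(AB).
\]
First I would record the key observation. Fix one matrix $A_0\in\setA$; specifying $A_0$ amounts to choosing one row out of each of the finitely many row sets $\setA_1,\dots,\setA_N$, so $A_0$ has size polynomial in the input. The set $\setB A_0=\{BA_0\Mid B\in\setB\}$ is a finite IRU-set of non-negative rational matrices by Lemma~\ref{lem:iru}(i), with row sets $\setB_iA_0:=\{bA_0\Mid b\in\setB_i\}$, and each of its entries has bit-size polynomial in the input, so $\setB A_0$ admits a polynomial-size row-set representation. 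Since $\rho(A_0B)=\rho(BA_0)$ and $\setB A_0$ is precisely the set of all products $BA_0$, relation~\eqref{E:JSR-LSR} gives $\max_{B\in\setB}\rho(A_0B)=\hat{\rho}(\setB A_0)$. Symmetrically, for a fixed $B_0\in\setB$ the set $\setA B_0=\{AB_0\Mid A\in\setA\}$ is again a polynomial-size non-negative rational IRU-set, and $\min_{A\in\setA}\rho(AB_0)=\check{\rho}(\setA B_0)$ by~\eqref{E:JSR-LSR}.

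With these translations in hand the membership proofs are short. For the $\NP$ direction of ``$\minimax(\setA,\setB)<\alpha$'': because $\setA$ is finite, $\minimax(\setA,\setB)<\alpha$ holds iff some $A_0\in\setA$ satisfies $\max_{B\in\setB}\rho(A_0B)<\alpha$, i.e.\ $\hat{\rho}(\setB A_0)<\alpha$; the algorithm guesses $A_0$, constructs the row sets of $\setB A_0$, and decides $\hat{\rho}(\setB A_0)<\alpha$ in polynomial time by Theorem~\ref{T:P}. For the $\coNP$ direction I would certify the complement: by the $\max\min$ form of Theorem~\ref{T:minimax}, $\minimax(\setA,\setB)\ge\alpha$ holds iff some $B_0\in\setB$ satisfies $\min_{A\in\setA}\rho(AB_0)\ge\alpha$, i.e.\ $\check{\rho}(\setA B_0)\ge\alpha$; guess $B_0$ and apply the polynomial-time test of Theorem~\ref{T:P} for ``$\check{\rho}\ge\alpha$''. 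Hence ``$\minimax(\setA,\setB)<\alpha$'' lies in $\NP\cap\coNP$.

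For the positive case I would argue in exactly the same way, observing that a product of two positive matrices is positive, so $\setB A_0$ and $\setA B_0$ are IRU-sets of \emph{positive} matrices whenever $\setA$ and $\setB$ are. Then $\minimax(\setA,\setB)\le\alpha$ holds iff some $A_0$ satisfies $\hat{\rho}(\setB A_0)\le\alpha$ (the $\NP$ witness), and its complement $\minimax(\setA,\setB)>\alpha$ holds iff some $B_0$ satisfies $\check{\rho}(\setA B_0)>\alpha$ (the $\coNP$ witness); both ``$\hat{\rho}\le\alpha$'' and ``$\check{\rho}>\alpha$'' for positive IRU-sets are in $\P$ by the ``moreover'' part of Theorem~\ref{T:P}, so ``$\minimax(\setA,\setB)\le\alpha$'' is in $\NP\cap\coNP$ as well. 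The one point that needs care — and is handled by the observation above — is that $\{A_0B\Mid B\in\setB\}$ is in general \emph{not} an IRU-set (its rows are coupled through the common factor $B$), so Theorem~\ref{T:P} cannot be invoked on it directly; passing to $\setB A_0$ via $\rho(A_0B)=\rho(BA_0)$ is precisely what repairs this. Everything else — the polynomial bounds on the guessed matrix and on $\setB A_0$, $\setA B_0$, and the elementary rewriting of the inner $\min$/$\max$ as $\check\rho$/$\hat\rho$ — is routine.
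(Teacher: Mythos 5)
Your proposal is correct and follows essentially the same route as the paper's proof: guess the optimal matrix for one player, use Lemma~\ref{lem:iru}(i) together with $\rho(AB)=\rho(BA)$ to get a polynomial-size IRU-set ($\setB A_0$ or $\setA B_0$), and invoke the polynomial-time tests of Theorem~\ref{T:P}, with the minimax identity of Theorem~\ref{T:minimax} supplying the $\coNP$ certificate. Your explicit remark that $A_0\setB$ is not IRU and that one must pass to $\setB A_0$ makes precise a point the paper leaves implicit, but the argument is the same.
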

\begin{proof}
Consider the problem of deciding whether $\minimax(\setA,\setB)<\alpha$, which can be rewritten as
$$
\min_{A\in\setA}\max_{B\in\setB}\rho(BA)<\alpha,
$$
 or equivalently
$$\exists A_0\in\setA (\hat\rho(\setB A_0)<\alpha).
$$
The nondeterministic polynomial algorithm proceeds as follows:
\begin{itemize}
  \item guess non-deterministically a matrix $A_0\in \setA$;
  \item compute the representation of $\setB A_0$ as an IRU-set generated by the row sets $\setC_1,\setC_2,\dots,\setC_N$;
  \item check the inequality $\hat{\rho}(\setB A_0)<\alpha$ in polynomial time using Theorem~\ref{T:P}.
\end{itemize}
We conclude that the problem $\minimax(\setA,\setB)<\alpha$ is in $\NP$.
The complementary problem $\minimax(\setA,\setB)\ge\alpha$ is also in  $\NP$, as it can be rewritten as
$$
\max_{B\in\setB}\min_{A\in\setA}\rho(AB)\ge\alpha,
$$ 
or equivalently 
$$\exists B_0\in\setB (\check\rho(\setA B_0)\ge\alpha),
$$
and decided by a non-deterministic polynomial algorithm similarly. We conclude that the two problems belong to $\NP\cap\coNP$.

For positive matrices, the proof for the other decision problem based on the second statement of  Theorem~\ref{T:P} is similar.
\end{proof}
Our main complexity result follows immediately.
\begin{theorem} \label{C:NPcoNP}Given an EG or an MMG with finite IRU-sets of non-negative matrices with rational elements and $\alpha\in \bQ_+$, the decision problem for its value: $V<\alpha$ 
is in $\NP\cap\coNP$.
\end{theorem}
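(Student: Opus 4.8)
The plan is to assemble the result directly from Theorems~\ref{th:mat-cons}, \ref{th:EG} and~\ref{T:NP} together with the reduction machinery of Section~\ref{ss:rel}. First I would treat the MMG case: given an MMG with finite IRU-sets $\setA$ and $\setE$ of non-negative rational matrices, Theorem~\ref{th:mat-cons} tells us the game is determined and that its value $V$ equals $\minimax(\setA,\setE)$ (this is exactly the quantity $V$ appearing in~\eqref{E:saddle}, since the two players possess constant optimal strategies and the saddle-point value of $\rho(EA)$ over $\setA\times\setE$ is precisely $\minimax(\setE,\setA)=\minimax(\setA,\setE)$ by the symmetry $\rho(EA)=\rho(AE)$). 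Hence the decision problem $V<\alpha$ for the MMG is literally the problem $\minimax(\setA,\setE)<\alpha$, which Theorem~\ref{T:NP} places in $\NP\cap\coNP$.

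Next I would handle the EG case by reduction. Given an EG on an arena $\arena$, Section~\ref{ss:rel} constructs in polynomial time the finite IRU-sets $\setA$ and $\setE$ of non-negative (in fact $0/1$) matrices. I must check that the value of the EG equals the value of the associated MMG. The payoff of the EG is $P(\sigma,\tau)=\limsup_n|\pref_{4n}(L(\sigma,\tau))|^{1/n}$; under positional strategies $\sigma,\tau$ this count of prefixes is governed by the matrix product $(A_\sigma E_\tau)^n$ applied to the all-ones vector, so the per-turn growth rate is $\rho(A_\sigma E_\tau)$. Lemma~\ref{lemma:f1} shows every pair of EG strategies is matched in payoff by a pair of MMG strategies over $(\conv(\setA),\conv(\setE))$, and Corollary~\ref{C:gameconv} together with Lemma~\ref{lemma:f2} shows the optimal constant MMG strategies can be taken in $\setA,\setE$ and correspond to optimal positional EG strategies. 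Combining these, the EG value equals $\minimax(\conv(\setA),\conv(\setE))$, which by Corollary~\ref{C:minimaxconv} equals $\minimax(\setA,\setE)$. So again the decision problem $V<\alpha$ reduces in polynomial time to $\minimax(\setA,\setE)<\alpha$, which is in $\NP\cap\coNP$ by Theorem~\ref{T:NP}.

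The one genuinely delicate point — and the step I expect to require the most care — is justifying the equality between the game value and $\minimax(\setA,\setE)$ cleanly, i.e. verifying that the chain Lemma~\ref{lemma:f1} $\Rightarrow$ Theorem~\ref{th:mat-cons} $\Rightarrow$ Corollary~\ref{C:gameconv} $\Rightarrow$ Lemma~\ref{lemma:f2} $\Rightarrow$ Theorem~\ref{th:EG} actually pins the value (and not merely establishes determinacy with positional optimal strategies). Concretely: Theorem~\ref{th:EG} gives optimal positional $\sigma^\ast,\tau^\ast$; the value they realize is $\rho(A_{\sigma^\ast}E_{\tau^\ast})$, and one must identify this with $\minimax(\setA,\setE)$ using that $A_{\sigma^\ast},E_{\tau^\ast}$ are precisely the saddle matrices $A_0,E_0$ from~\eqref{E:saddle} transported back through the bijection between positional strategies and matrices in $\setA,\setE$. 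Everything else — the polynomial-time computability of the reduction of Section~\ref{ss:rel}, the closure of $\NP\cap\coNP$ under polynomial-time many-one reductions — is routine.

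\begin{proof}
Both cases reduce to Theorem~\ref{T:NP}. For an MMG with finite IRU-sets $\setA,\setE$ of non-negative rational matrices, Theorem~\ref{th:mat-cons} shows it is determined and that its value is the saddle value $V$ of~\eqref{E:saddle}, namely $V=\minimax(\setE,\setA)$; since $\rho(EA)=\rho(AE)$ we have $\minimax(\setE,\setA)=\minimax(\setA,\setE)$. Thus the problem $V<\alpha$ is exactly $\minimax(\setA,\setE)<\alpha$, which lies in $\NP\cap\coNP$ by Theorem~\ref{T:NP}.

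For an EG on an arena $\arena$, the construction of Section~\ref{ss:rel} produces, in polynomial time, finite IRU-sets $\setA,\setE$ of non-negative ($0/1$) rational matrices. By Theorem~\ref{th:EG} the EG is determined with positional optimal strategies $\sigma^\ast,\tau^\ast$; the payoff they realize is the per-turn growth rate of $|\pref_{4n}(L(\sigma^\ast,\tau^\ast))|$, which equals $\rho(A_{\sigma^\ast}E_{\tau^\ast})$. By Lemma~\ref{lemma:f2} the constant strategies $A_{\sigma^\ast},E_{\tau^\ast}$ are optimal in the MMG on $(\conv(\setA),\conv(\setE))$, and by Corollary~\ref{C:gameconv} these may be taken in $\setA,\setE$; hence $\rho(A_{\sigma^\ast}E_{\tau^\ast})$ is the saddle value of $\rho(AE)$ over $\setA\times\setE$, i.e. $\minimax(\setA,\setE)$. (Equivalently, by Lemma~\ref{lemma:f1} the EG value is at most $\minimax(\conv(\setA),\conv(\setE))$ and, via the positional strategies furnished by Corollary~\ref{C:gameconv} and Lemma~\ref{lemma:f2}, at least that value; Corollary~\ref{C:minimaxconv} gives $\minimax(\conv(\setA),\conv(\setE))=\minimax(\setA,\setE)$.) Therefore the EG value $V$ satisfies $V<\alpha$ iff $\minimax(\setA,\setE)<\alpha$, and this is a polynomial-time many-one reduction to the problem of Theorem~\ref{T:NP}. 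Since $\NP\cap\coNP$ is closed under such reductions, the decision problem $V<\alpha$ for EGs is in $\NP\cap\coNP$ as well.
\end{proof}
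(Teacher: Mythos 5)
Your argument is correct and is essentially the paper's own: the paper derives Theorem~\ref{C:NPcoNP} ``immediately'' from Theorem~\ref{T:NP}, using exactly the identification of the game value with $\minimax(\setA,\setE)$ that you spell out (Theorem~\ref{th:mat-cons} for MMGs, and the chain Lemma~\ref{lemma:f1}, Corollary~\ref{C:gameconv}, Lemma~\ref{lemma:f2}, Corollary~\ref{C:minimaxconv} for EGs, plus the routine polynomial-time reduction and closure of $\NP\cap\coNP$ under it). One small caution: the step ``$V=\minimax(\setE,\setA)$, and $\minimax(\setE,\setA)=\minimax(\setA,\setE)$ because $\rho(EA)=\rho(AE)$'' is not a valid inference --- $\min_{E}\max_{A}\rho(AE)$ and $\min_{A}\max_{E}\rho(AE)$ swap which player minimizes and need not coincide for a general payoff function. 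What you actually need, and what the proof of Theorem~\ref{th:mat-cons} delivers (note that the displayed equation~\eqref{E:saddle} appears to have min and max transposed relative to the surrounding argument), is that the value of the MMG in which Adam minimizes equals $\min_{A\in\setA}\max_{E\in\setE}\rho(AE)=\minimax(\setA,\setE)$; with that reading your reduction to Theorem~\ref{T:NP} goes through exactly as written.
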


\section{Related Models}\label{sec:related}

\subsection{Weighted Entropy Games}
Up to now we have considered entropy games with \emph{simple} transitions,
but it is straightforward to add multiplicities  (weights) to them.
A \emph{weighted entropy game} is played on a \emph{weighted arena} $\arena = (D,T,\Sigma,\Delta,w)$
with a function $w:\Delta\to \bN_+$ assigning weights to transitions (informally a weight is the number of ways in which a transition can be taken).
Strategies and plays are defined as in the unweighted case.
Let $L$ be some set of (infinite) plays.
For every  $u\in\pref(L)$   we define its weight $w(u)$ as the product of weights of all the transitions taken along $u$.
We define $w_n(L)=\sum_{u\in\pref_{4n}(L)}w(u)$, and finally the payoff corresponding to strategies $\sigma$ and $\tau$ of two players is defined as:
$$
P=\limsup_{n\to\infty}\left(w_n(L(\sigma,\tau))\right)^{1/n}.
$$
Our main results on EGs (Thms~\ref{th:EG} and \ref{C:NPcoNP}) extend straightforwardly to weighted EGs.

\subsection{Mean-Payoff Games}
Well-known mean-payoff finite-state games (MPG)~\cite{EhrenfeuchtMeanPayoff} can be considered as a deterministic subclass of weighted entropy games.
A (variant of) MPG is played on arena $(D,T,\Delta,w)$ with transition relation $\Delta\subseteq D\times T\cup T\times D$ and weight function $w:\Delta\to\bN$.
The play starts in some state $d_0\in D$, and  the two players choose transitions in turn.
The resulting play is an infinite word $\gamma_{d_0}\in(D\cdot T)^\omega$.
The mean-payoff  corresponding to the play $\gamma_{d_0}=d_0,t_0,d_1,t_1,\dots$ is the limit of the average weight of transitions taken:
$$
\meanp(\gamma_{d_0}) = \limsup_{n\to\infty}\frac{1}{n}\sum_{i=1}^n (w(d_{i-1},t_{i-1})+w(t_{i-1},d_{i})).
$$
Finally, player D wants to minimize and player T to maximize the payoff
$\max_{d_0\in D} \meanp(\gamma_{d_0})$.
As proved in~\cite{EhrenfeuchtMeanPayoff}, MPGs are determined and their optimal strategies are positional.
As for complexity, \cite{ZwickMeanPayoff} shows that testing whether
the value of an  MPG  is smaller than a rational $\alpha$ is  in $\NP\cap\coNP$ and becomes polynomial for weights presented in the unary system.

An MPG $\arena=(D,T,\Delta,w)$  can be transformed into a weighted EG $\arena'=(D,T,\Sigma,\Delta',w')$ as follows.
The states of both players are the same, $\Sigma$ is large enough, and for each transition $(p,q)\in \Delta$ there is a corresponding
 transition $(p,a,q)\in\Delta'$ with some $a$ (occurring only in this transition).
Its weight is $w'(p,a,q)=2^{w(p,q)}$.
We notice that the EG obtained is deterministic: due to unique transition labels for any strategies $\sigma$ and $\tau$, the language
$L(\sigma,\tau)$ contains one play for each initial state.
Strategies and plays of both games $\arena$ and $\arena'$ are now in natural bijection and the payoff of $\arena$ equals the logarithm of the payoff of $\arena'$.

This way, we obtain the classical results that MPGs are determined and both players have optimal positional strategies.
Due to the exponential encoding of payoffs, the complexity obtained using our approach is, however, not as good as using direct algorithms, see \cite{ZwickMeanPayoff}.

\subsection{Population Dynamics}
Consider an EG with arena  $\arena = (D,T,\Sigma,\Delta)$.
It can be interpreted as the following population game between two players, Damien and Theo.
Elements of $D$ and $T$ correspond to species (forms of viruses, microorganisms, etc.).
Initially there is one (or any non-zero number of) organism(s) for each species in $D$.
At his turn Damien chooses an action $a\in\Sigma$ and applies it to each organism.
An organism of species $d$, when subject to action $a$, turns into  the set of organisms of species $\{t\Mid (d,a,t)\in\Delta\}$.
Theo  plays similarly.  The aim of Damien is to minimize the growth rate of the population, while Theo wants to maximize it.
The value of the game and the optimal (positional) strategies are the same as for the EG.

\section{Conclusions}\label{sec:concl}

We have introduced two (closely interrelated) families of games: entropy games played on finite arenas (graphs), and matrix multiplication games.
The main result is that entropy games are determined and optimal strategies are positional in EG, while MMGs
for IRU-sets of non-negative matrices are determined and optimal strategies are constant.
These results are based on a new minimax theorem on spectral radii of products of IRU-sets of matrices.
The  results obtained prove the existence of equilibria in zero-sum games
with a new type of limit payoffs, which is  neither computed on a single play of the game nor probabilistic.
On the other hand, they rely upon and generalize important results on the computability
of joint spectral radii and subradii, an important problem in switching dynamic systems.

A presumably straightforward extension would be the ``probabilization'' of our game models,
in that both Despot and Tribune would be allowed to play randomized strategies.
The minimax theorem ensures the existence of optimal pure strategies for both players.
However the entropy-based payoff of the game needs to be given a proper generalization to this probabilistic setting. We may mention that such a generalization could be seen as entropy games on  stochastic branching processes, and provide interesting links with this research domain.
Finally, both our games are turn-based games with perfect information. The first generalization to be considered is to go to concurrent games --- where perhaps some
polynomial-size memory is needed, similarly to the classic case of concurrent games played on graphs in infinite time.
The more difficult case is that of games of imperfect information:  
corresponding matrix games no longer have a simple structure (independent row uncertainty), and we conjecture that analysis of such games is non-computable.
Last but not least, potential applications sketched in the introduction should be addressed.

\bibliographystyle{plain}
\bibliography{entropy}

\begin{thebibliography}{10}

\bibitem{AsaBD15:IC}
Eugene Asarin, Nicolas Basset, and Aldric Degorre.
\newblock Entropy of regular timed languages.
\newblock {\em Inform. Comput.}, 241:142--176, 2015.

\bibitem{AsaBDDM14:CSL}
Eugene Asarin, Michel Blockelet, Aldric Degorre, C\u{a}t\u{a}lin Dima, and
  Chunyan Mu.
\newblock Asymptotic behaviour in temporal logic.
\newblock In {\em Proc. {CSL-LICS}}, pages 10:1--10:9. {ACM}, 2014.

\bibitem{BerWang:LAA92}
Marc~A. Berger and Yang Wang.
\newblock Bounded semigroups of matrices.
\newblock {\em Linear Algebra Appl.}, 166:21--27, 1992.

\bibitem{BN:SIAMJMAA09}
Vincent~D. Blondel and Yurii Nesterov.
\newblock Polynomial-time computation of the joint spectral radius for some
  sets of nonnegative matrices.
\newblock {\em SIAM J. Matrix Anal. A.}, 31(3):865--876, 2009.

\bibitem{BT:SCL00}
Vincent~D. Blondel and John~N. Tsitsiklis.
\newblock The boundedness of all products of a pair of matrices is undecidable.
\newblock {\em Syst. Control Lett.}, 41(2):135--140, 2000.

\bibitem{Chomsky58}
Noam Chomsky and George~A. Miller.
\newblock Finite state languages.
\newblock {\em Inform. Control}, 1(2):91--112, 1958.

\bibitem{Czornik:LAA05}
Adam Czornik.
\newblock On the generalized spectral subradius.
\newblock {\em Linear Algebra Appl.}, 407:242--248, 2005.

\bibitem{DaubLag:LAA92}
Ingrid Daubechies and Jeffrey~C. Lagarias.
\newblock Sets of matrices all infinite products of which converge.
\newblock {\em Linear Algebra Appl.}, 161:227--263, 1992.

\bibitem{DaubLag:LAA01}
Ingrid Daubechies and Jeffrey~C. Lagarias.
\newblock Corrigendum/addendum to \cite{DaubLag:LAA92}.
\newblock {\em Linear Algebra Appl.}, 327(1-3):69--83, 2001.

\bibitem{aldric-mean}
Aldric Degorre, Laurent Doyen, Raffaella Gentilini, Jean{-}Fran{\c{c}}ois
  Raskin, and Szymon Torunczyk.
\newblock Energy and mean-payoff games with imperfect information.
\newblock In {\em Proc.~CSL}, LNCS 6247, pages 260--274. Springer, 2010.

\bibitem{EhrenfeuchtMeanPayoff}
Andrzej Ehrenfeucht and Jan Mycielski.
\newblock Positional strategies for mean payoff games.
\newblock {\em International Journal of Game Theory}, 8(2):109--113, 1979.

\bibitem{Gurv:LAA95}
Leonid Gurvits.
\newblock Stability of discrete linear inclusion.
\newblock {\em Linear Algebra Appl.}, 231:47--85, 1995.

\bibitem{HJ2:e}
Roger~A. Horn and Charles~R. Johnson.
\newblock {\em Matrix Analysis}.
\newblock Cambridge University Press, 2013.

\bibitem{Jungers:09}
Rapha{\"e}l Jungers.
\newblock {\em The Joint Spectral Radius: Theory and Applications}.
\newblock LNCIS 385. Springer, 2009.

\bibitem{Koz:IITP13}
Victor Kozyakin.
\newblock An annotated bibliography on convergence of matrix products and the
  theory of joint/generalized spectral radius.
\newblock Preprint, Institute for Information Transmission Problems, Moscow,
  2013.
\newblock \url{http://dx.doi.org/10.13140/2.1.4257.5040}.

\bibitem{NesPro:SIAMJMAA13}
Yurii Nesterov and Vladimir~{Yu.} Protasov.
\newblock Optimizing the spectral radius.
\newblock {\em SIAM J. Matrix Anal. A.}, 34(3):999--1013, 2013.

\bibitem{Prot:MP15}
Vladimir~{Yu}. Protasov.
\newblock Spectral simplex method.
\newblock {\em Math. Program.}, pages 1--27, 2015.

\bibitem{RotaStr:IM60}
Gian-Carlo Rota and Gilbert Strang.
\newblock A note on the joint spectral radius.
\newblock {\em Nederl. Akad. Wetensch. Proc. Ser. A 63 = Indag. Math.},
  22:379--381, 1960.

\bibitem{staigerentropy}
Ludwig Staiger.
\newblock Entropy of finite-state omega-languages.
\newblock {\em Probl. Control Inform.}, 14(5):383--392, 1985.

\bibitem{Theys:PhD05}
Jacques Theys.
\newblock {\em Joint Spectral Radius: {T}heory and Approximations}.
\newblock PhD thesis, Universit{\'e} Catholique de Louvain, 2005.

\bibitem{Blondel-TsitsiklisB97}
John~N. Tsitsiklis and Vincent~D. Blondel.
\newblock The {L}yapunov exponent and joint spectral radius of pairs of
  matrices are hard --- when not impossible --- to compute and to approximate.
\newblock {\em {Math. Control Signals Systems}}, 10(1):31--40, 1997.

\bibitem{von1947theory}
John von Neumann and Oskar Morgenstern.
\newblock {\em Theory of Games and Economic Behavior}.
\newblock Princeton University Press, 1947.

\bibitem{Weih2000}
Klaus Weihrauch.
\newblock {\em Computable Analysis}.
\newblock Springer, 2000.

\bibitem{ZwickMeanPayoff}
Uri Zwick and Mike Paterson.
\newblock The complexity of mean payoff games on graphs.
\newblock {\em Theor. Comput. Sci.}, 158(1--2):343--359, 1996.

\end{thebibliography}

\newpage
\appendix
\theoremstyle{plain}%
\newtheorem{applemma}{Lemma}[section]
\newcommand{\IRU}{\mathop{IRU}}
\newcommand{\Vector}[1]{\left[
\begin{array}{c}
#1
\end{array}
\right]}
\section{Proofs}
\subsection*{Proof of Lemma~\ref{lem:iru}}
\begin{proof}
\begin{enumerate}[(i)]
\item
Let $\setA_i$ be the set of admissible $i$-th rows in $\setA$,
\[
\setR_i=\left\{\left[\sum_{k=1}^n a_k b_{kj}\right]_{1\leq j\leq n}\MID a\in \setA_i\right\},
\]
and $\setR$ be the IRU-set made from sets $\setR_i$. One has that $\setA B=\setR$:
\begin{itemize}
\item if $M\in\setR$ then let $a^{(i)}\in\setA_i$ be such that the $i$-th row of $M$ is $\left[\sum_{k=1}^n a^{(i)}_k b_{kj}\right]_{1\leq j\leq n}$, then $M=AB$ where $A$ is the matrix made with rows $a_i$;
\item conversely, if $A\in\setA$ and $a^{(i)}$ is the $i$-th row of $A$, then the $i$-th row of $AB$ equals  $\left[\sum_{k=1}^n a^{(i)}_k b_{kj}\right]_{1\leq j\leq n}$ and belongs to $\setR_i$.
\end{itemize}
\item
The easy direction is $\subseteq$. Let $M$ be a matrix of $\conv(\setA)$. Then, there exist matrices $M_1,\dots,M_k\in\setA$ and real numbers $\lambda_1,\dots,\lambda_k$ such that
\[
M=\sum_{i=1}^k \lambda_iM_i.
\]
Let $j$ be an integer in $\{1,\dots,n\}$. For all $i\in\{1,\dots,n\}$, there exists a vector $v_i\in \setA_j$ such that row $j$ of $M_i$ is $v_i$. Then, row $j$ of $M$ being $\sum_{i=1}^k \lambda_iv_i$, it belongs to $\conv{\setA_j}$.

For the direction $\supseteq$, let $M$ be a matrix of the IRU-set formed by $\conv(\setA_1),\dots,\conv(\setA_n)$. Let $u_1,\dots,u_n$ be the rows of the matrix $M$. By definition of $M$, there are integers $k_i$ for $i\in\{1,\dots,n\}$, real numbers $\lambda^i_j\in[0,1]$ and vectors $v^i_j\in \setA_i$ for $i\in\{1,\dots,n\}$ and $j\in\{1,\dots,k_i\}$ such that
\[
u_i=\sum_{j=1}^{k_i} \lambda^i_j v^i_j
\text{ and }
\sum_{j=1}^{k_i} \lambda^i_j = 1.
\]
Then, for all $i\in\{1,\dots,n\}$, one has:
$$
u_i=\sum_{j_i=1}^{k_i} \lambda^i_{j_i} v^i_{j_i}
= \left(\prod_{l=1}^{i-1}  \sum_{j_l=1}^{k_l} \lambda^l_{j_l}  \right) \left( \sum_{j_i=1}^{k_i} \lambda^i_{j_i} v^i_{j_i} \right)  \left(\prod_{l=i+1}^{n}  \sum_{j_l=1}^{k_l} \lambda^l_{j_l}  \right)\\
= \sum_{j_1=1}^{k_1}\cdots\sum_{j_n=1}^{k_n} \left(\prod_{l=1}^n \lambda^l_{j_l}\right) v^i_{j_i}.
$$
Hence
\[
M=\Vector{u_1\\\vdots\\u_n}
= \sum_{j_1=1}^{k_1}\cdots\sum_{j_n=1}^{k_n} \left(\prod_{l=1}^n \lambda^l_{j_l}\right) \Vector{{v^1_{j_1}}\\\vdots\\{v^n_{j_n}}},
\]
each matrix in the sum being in $\setA$. The proof is finished stating that
\[
\sum_{j_1=1}^{k_1}\cdots\sum_{j_n=1}^{k_n} \prod_{l=1}^n \lambda^l_{j_l}= \prod_{l=1}^{n} \sum_{j_l=1}^{k_l} \lambda^l_{j_l} = 1.
\]
\item Immediate from the characterization of compact sets (of finite dimension) as bounded and closed.	\qedhere
\end{enumerate}
\end{proof}

\subsection*{Proof of Lemmas~\ref{l:sim} and \ref{l:sim-nonneg}}

In both lemmas, we announce that it is possible to reduce the halting problem of a 2-counter Minsky machine (2CMM) to the threshold problem for MMG.

So let us first remind the reader about 2CMMs. Such a machine can be defined as a set of instructions, indexed by a finite set of states $Q$, with $q_0\in Q$ marked as initial, operating on two non-negative integer counters $x$ and $y$. There are three types of instructions ($q_{\dots}$ are states in $Q$, and $c$ is either $x$ or $y$):
\begin{enumerate}
 \item $q_i$: increment $c$ then execute $q_j$;
 \item $q_i$: if $c=0$ execute $q_j$, else decrement $c$ then execute $q_k$;
 \item $q_i$: stop.
\end{enumerate}
The computation starts from instruction $q_0$, executing it and thus triggering a sequence of instructions, which may be finite, if it eventually reaches a \emph{stop} instruction, or otherwise infinite. Whether or not the execution will be finite is undecidable.

Obviously, both reductions consist in encoding any 2CMM into an MMG, the payoffs of which depend on whether the machine halts or not.


Now let us describe the encoding used in Lemma~\ref{l:sim}. Here the 2CMM is translated into two  sets $\setA$ and $\setE$ of square matrices of dimension $|Q|+5$. States of the 2CMM (discrete location and counter values) are encoded, along with some other information, as (row) vectors of the space on which these matrices operate. The $|Q|$ first coordinates of such a vector, labelled with $q_0,\dots,q_{|Q|-1}$, take a non-zero value only for the current state of the simulation. The two next coordinates $x$ and $y$ represent the two counters. Finally, there are three additional coordinates: $One, E$ and $Neg$ (the role of which will be explained later on).

Eve's matrices allow her to simulate the machine execution (as long as it goes on). The set $\setE$ consists in exactly one matrix per transition of the 2CMM (warning: instruction 2 consists in two different transitions, depending on the test $c=0$, while instruction 3 consists in no transition, i.e., a state with \emph{stop} instruction is a deadlock state). For the sake of presentation, we describe them below as sets of assignments of variables, but it is easy to see that all assignments actually are linear operations:
\begin{itemize}
 \item matrices $I_{qq'c}$ (as \emph{Increment}):
 $q := q - \mathop{One}$; $q' := q' + \mathop{One}$; $c := c + \mathop{One}$;
  \item matrices $K_{qq'c}$ (as \emph{Keep} current counter value):
 $q := q - \mathop{One}$; $q' := q' + \mathop{One}$; $c := -c$;
 \item matrices $D_{qq'c}$ (as \emph{Decrement}):  $q := q - \mathop{One}$; $q' := q' + \mathop{One}$; $c := c - \mathop{One}$.
\end{itemize}
Notice that matrix $K_{qq'c}$ should be normally applied when $c=0$, and thus the operation $c=-c$ does not harm. On the contrary, if it is applied illegally, for a positive counter value, then it results in a negative $c$.

Adam has five kinds of matrices, which he can use to detect whenever Eve does not faithfully simulate the machine, and then punish her by forcing a payoff of $0$. Here is the set  $\setA$:
\begin{itemize}
 \item the matrix   $\mathop{Init}$ (initialize the 2CMM):
 $q_0 := E$; $q_{i\neq 0} := 0$; $x := 0$; $y := 0$; $\mathop{One} := E$; $\mathop{Neg} := 0$
 \item the identity matrix  $\mathop{Id}$ (do nothing and just let Eve continue playing);
 \item the matrices $F_c$ (\emph{flash} and take a picture of coordinate $c$) for $c$ corresponding to a state or a counter:
 $\mathop{Neg} := c$;
 \item the matrix $A$ (\emph{adjust} the value of $\mathop{Neg}$):
 $\mathop{Neg} := \mathop{Neg} + \mathop{One}$;
 \item the matrix $P$ (\emph{punish} Eve by assigning $0$ to $E$):
 $E = E + \mathop{Neg}$.
\end{itemize}

Now, in order to prove Lemma~\ref{l:sim}, it suffices to prove the two following sublemmas:
\begin{applemma}\label{l:sim-non-halting}
 The MMG obtained by the translation above from a non-halting 2CMM is determined with value 1, i.e.~it has the following properties:
 \begin{enumerate}
  \item there exists a strategy of Adam $\sigma_0$ such that for any strategy $\tau$ of Eve, $P(\sigma_0, \tau)\leq 1$;
  \item there exists a strategy of Eve $\tau_0$ such that for any strategy $\sigma$ of Adam, $P(\sigma,\tau_0)\geq 1$.
 \end{enumerate}
\end{applemma}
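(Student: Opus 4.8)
The plan is to establish the two clauses separately, in each case by exhibiting the required strategy explicitly and then estimating the growth rate $\limsup_{k\to\infty}\|\prod_{i=1}^k A_iE_i\|^{1/k}$ of the resulting matrix product.

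\emph{Clause~1 (Adam forces payoff $\le 1$).} The plan is to let $\sigma_0$ play $\mathop{Init}$ on the first move; then play $\mathop{Id}$ as long as the sequence of matrices Eve has played so far is a faithful simulation of $M$ started in $q_0$ — which Adam can decide, since he can reconstruct the current vector from the history of the play; and on the very first move where Eve fires a transition matrix whose guard is violated, switch to the following \emph{punish} routine and then play $\mathop{Id}$ forever. The first technical step is to check that any such illegal Eve move drives exactly one state-coordinate or counter-coordinate, say $c$, to a strictly negative integer $-m$ with $m\ge 1$: firing a transition from a wrong source state sets that $q$-coordinate to $-1$, decrementing a zero counter sets that counter to $-1$, and taking the zero-test branch on a positive counter $c$ sets $c$ to $-c$. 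Adam then plays $F_c$ (so $\mathop{Neg}$ becomes $-m$), plays the matrix $A$ exactly $m-1$ times (so $\mathop{Neg}$ becomes $-1$), plays $P$ (so $E$ becomes $E-1=0$, since $E$ has stayed equal to $1$ up to the cheat), and plays $\mathop{Init}$ once more. Here one checks that Eve's moves interleaved in the punish phase cannot interfere, because transition matrices touch only state- and counter-coordinates and leave $\mathop{One}$, $\mathop{Neg}$, $E$ intact; and that after this last $\mathop{Init}$ the product is the zero matrix, since after the \emph{first} $\mathop{Init}$ every row of the product other than the $E$-row is already zero, and the second $\mathop{Init}$ (applied on the right) sends the $E$-row to zero as well because its $E$-entry is now $0$. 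Consequently, against $\sigma_0$ the product $\prod_{i=1}^k A_iE_i$ is eventually the zero matrix if Eve ever cheats, and otherwise — Eve faithfully simulating a machine that never halts — all its entries are bounded by $O(\mathrm{poly}(k))$, because each faithful step changes only three coordinates by $\pm\mathop{One}=\pm1$ and the counters grow at most linearly in the number of turns. In both cases $\limsup_k\|\prod_{i=1}^k A_iE_i\|^{1/k}\le 1$.

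\emph{Clause~2 (Eve forces payoff $\ge 1$).} The plan is to let $\tau_0$ always play the transition matrix of the next instruction in the faithful simulation of $M$ from the current configuration, restarting from $q_0$ whenever Adam plays $\mathop{Init}$; since $M$ never halts a legal transition is always available, so $\tau_0$ is well defined. The claim is that against $\tau_0$ the matrix $\prod_{i=1}^k A_iE_i$ is always a nonzero integer matrix, hence $\|\prod_{i=1}^k A_iE_i\|$ is bounded below by a positive constant and $\limsup_k\|\cdot\|^{1/k}\ge 1$. To see this, note that faithful play never produces a negative state- or counter-coordinate, so every value that $F_c$ can flash into $\mathop{Neg}$ is $\ge 0$, every $A$ keeps $\mathop{Neg}\ge0$, hence every $P$ leaves $E$ unchanged or larger and every $\mathop{Init}$ reproduces the initial configuration scaled by the current $E\ge 1$; tracking the $E$-coordinate and the $\mathop{One}$-coordinate of the $E$-row of the product (and, before Adam's first $\mathop{Init}$, observing that the $\mathop{One}$-column of the product stays equal to the unit vector at the $\mathop{One}$-coordinate) shows that the product always has a nonzero entry.

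\emph{Main obstacle.} The conceptual skeleton above is short, but the proof rests on a somewhat delicate linear-algebra bookkeeping that must be carried out directly on the linear assignments defining the matrices: that $\setE$ offers Eve no ``silent'' deviation (the only way to deviate is to fire a transition of $M$ whose guard fails, and each such deviation produces a negative state- or counter-coordinate that Adam can name and flash); that the opponent's interleaved moves never disturb the $\mathop{One}$, $\mathop{Neg}$, $E$ coordinates during the punish phase; and that the punish routine genuinely collapses the whole product to $0$, while on the other hand faithful play keeps it nonzero and of polynomial size. Getting this case analysis over the three instruction types exactly right is the main thing to check.
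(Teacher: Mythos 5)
Your proof is correct, but it splits from the paper's in an instructive way on clause~1. For clause~1 the paper's Adam does something much lazier: $\sigma_0$ is ``always play $\mathop{Id}$'', so the payoff against any Eve is exactly $\hat\rho(\setE)$, and this is $\le 1$ for purely structural reasons --- every matrix in $\setE$ modifies only state- and counter-coordinates, and only by adding or subtracting the (unchanged) value of $\mathop{One}$, so any product $E_1\cdots E_n$ has entries of size $O(n)$ \emph{whether or not Eve simulates faithfully}. No cheat detection is needed for the non-halting lemma; your monitor--flash--adjust--punish--$\mathop{Init}$ routine is precisely the machinery the paper reserves for Lemma~\ref{l:sim-halting}, where Adam genuinely must collapse the product to $0$. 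Your version of clause~1 is still valid (cheating yields the zero product, faithful play yields polynomial growth, both give $\limsup\|\cdot\|^{1/k}\le 1$), it just proves more than is required here and makes you carry the entire non-interference analysis of the punish phase into this lemma. For clause~2 your argument is essentially the paper's: Eve replays the infinite run, restarting at each $\mathop{Init}$, and the point is that faithful play keeps all flashed coordinates non-negative, so $\mathop{Neg}\ge 0$, $P$ never decreases $E$, and the product never dies. One small caution: since the matrices in $\setA\cup\setE$ have negative entries, arguing directly about rows and columns of the product matrix (as you sketch after the first $\mathop{Init}$) invites cancellation bookkeeping; the paper instead fixes the vector $v_0=(q_0=1,\dots,E=1,\mathop{One}=1,\mathop{Neg}=0)$ and proves by induction that $v_0A_1E_1\cdots A_nE_n$ is non-negative with $E$-coordinate $\ge 1$, which immediately bounds $\|\prod A_iE_i\|$ below by $1/\|v_0\|$ and is the cleanest way to finish your clause~2.
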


\begin{applemma}\label{l:sim-halting}
 The MMG obtained by the translation above from a halting 2CMM is determined with value 0, i.e.~it has the following properties:
 \begin{enumerate}
  \item there exists a strategy of Adam $\sigma_0$ such that for any strategy $\tau$ of Eve, $P(\sigma_0, \tau)\leq 0$; \item there exists a strategy of Eve $\tau_0$ such that for any strategy $\sigma$ of Adam, $P(\sigma,\tau_0)\geq 0$.
 \end{enumerate}
\end{applemma}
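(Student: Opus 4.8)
The plan is to establish the two clauses separately. The second one --- that Eve can force payoff $\ge 0$ --- needs no property of $M$: the payoff is $\limsup_{k\to\infty}\bigl\|\prod_{i=1}^{k}A_iE_i\bigr\|^{1/k}$, a limit superior of non-negative reals, hence $\ge 0$ for \emph{every} pair of strategies, so any $\tau_0$ (e.g.\ a constant strategy) works for clause~2.

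For clause~1 I would let $\sigma_0$ be the strategy whose first move is $\mathop{Init}$. Writing $e_j$ for the $j$-th standard basis vector, the matrix $\mathop{Init}$ has rank one, $\mathop{Init}=e_{E}\,r^{\transpose}$ with $r=e_{q_0}+e_{\mathop{One}}+e_{E}$; so after Adam's first move the product always equals $e_{E}\,v^{\transpose}$, where $v$ is obtained from the ``initial configuration'' $r$ (state $q_0$, counters $0$, $\mathop{One}=E=1$, $\mathop{Neg}=0$) by right-multiplying by $E_1,A_2,E_2,\dots$ in turn, and $\bigl\|\prod_{i=1}^{k}A_iE_i\bigr\|=\|v\|$. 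Adam then reads $v$ off the current product and plays as follows. While the matrices written by Eve form an honest prefix of the \emph{unique} run of $M$ and that run has not yet halted, Adam plays $\mathop{Id}$; one checks by induction that then $v$ carries exactly one state coordinate equal to $1$, non-negative counter coordinates equal to $M$'s counters, $\mathop{One}=E=1$ and $\mathop{Neg}=0$. The moment Eve deviates, some state- or counter-coordinate of $v$ becomes negative --- that is the only effect a ``$q:=q-\mathop{One}$'', ``$c:=-c$'' or ``$c:=c-\mathop{One}$'' can have on an honest configuration --- whereupon Adam enters ``punish mode'': he plays $F_c$ on the offending coordinate $c$ (so $\mathop{Neg}$ takes $c$'s negative value), then $A$ a bounded number of times so that $\mathop{Neg}=-E$, then $P$ (so $E:=0$), then $\mathop{Init}$ once more --- which, since now $E=0$, zeroes every coordinate of $v$, i.e.\ makes the product the zero matrix, and it stays $0$ forever. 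Since, by hypothesis, the run of $M$ is finite and hence ends at a \emph{stop} state from which Eve has no legal move, Eve is forced to deviate after finitely many turns; so, whether she deviates early or is cornered into it, the product is eventually $0$ along every play compatible with $\sigma_0$, giving $P(\sigma_0,\tau)=\limsup_{k}0^{1/k}=0$.

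The step I expect to be the main obstacle is showing that Eve cannot sabotage punish mode, since she keeps interleaving her own moves during it. The construction is tailored for this: inspecting the three families $I_{qq'c}$, $K_{qq'c}$, $D_{qq'c}$ one sees that every matrix of $\setE$ acts as the identity on the coordinates $\mathop{One}$, $E$ and $\mathop{Neg}$; so between Adam's punish moves Eve may scramble state and counter coordinates but can neither change the value $F_c$ captured in $\mathop{Neg}$, nor the value of $E$ before $P$ is applied, nor $\mathop{One}$ used by the $A$-steps; therefore the deterministic sequence $F_c,A,\dots,A,P,\mathop{Init}$ drives the product to $0$ exactly as intended (and once it is $0$ nothing resurrects it). Proving carefully the honest-phase invariant and this robustness is the bulk of the work; the rest is formal. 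In particular, determinacy with value $0$ follows at once: clause~1 gives $\inf_\sigma\sup_\tau P\le 0$, the trivial bound $P\ge 0$ gives $\inf_\sigma\sup_\tau P\ge 0$ and $\sup_\tau\inf_\sigma P\ge 0$, and with the general inequality $\sup_\tau\inf_\sigma P\le\inf_\sigma\sup_\tau P$ this pins both quantities to $0$.
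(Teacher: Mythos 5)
Your proof is correct and follows essentially the same route as the paper's: clause~2 is the trivial non-negativity of the payoff, and clause~1 uses the same punishing strategy $\mathop{Init},\mathop{Id},\dots,F_c,A,\dots,A,P,\mathop{Init}$, with the same key observation that Eve's matrices fix $\mathop{One}$, $E$ and $\mathop{Neg}$ so she cannot disrupt the punishment. Your rank-one remark about $\mathop{Init}$ is a slightly cleaner justification of the step where the paper concludes that the whole product matrix $\Omega$ (and not merely its action on $v_0$) is zero.
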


\begin{proof}[Proof sketch of Lemma~\ref{l:sim-non-halting}]
\begin{enumerate}
 \item  Let the strategy $\sigma_0$ consist in always playing identity.
     Then $\max_\tau P(\sigma_0,\tau)=\hat\rho(\setE)$. It is easy to see
     that applying a matrix of $\setE$ to a vector only changes the value
     of coordinates labelled by a state or by a counter, and only modifies
     it by adding or removing the value of coordinate $\mathop{One}$ (its
     value is left unchanged). Thus, for any vector $v$ and any
     sequence of matrices of $\setE$: $E_1, \dots, E_n$, all coordinates of
     vector $vE_1\cdots E_n$ are bounded in absolute value by $n\cdot
     v_{\mathop{One}}$, which means that $\|E_1\cdots E_n\|\leq k$, and
     thus $\hat\rho(\setE) \leq 1$.

\item
 Assume $\tau_0$ is as follows: Eve always stores in a variable $t$ the last time Adam played $\mathop{Init}$ (initially $t$ = 0). Then at turn $i$, she plays the matrix that corresponds to the $(i-t)$-th transition of the execution of the 2CMM.

We fix a non-negative vector
\[
v_0 = (q_0=1, q_{i\neq 0}=0, x=0, y=0, E=1,
\mathop{One}=1, \mathop{Neg}=0)
\]
and prove by induction the following invariant on the vector  $v_n =
v_0A_1E_1A_2E_2\cdots A_nE_n$:
\begin{itemize}
 \item all coordinates of $v_n$ are non-negative;
 \item ${v_n}^E \geq 1$.
\end{itemize}

Indeed applying a matrix of $\setA$ while respecting the rules of the 2CMM ensures that state and counter coordinates remain non-negative, while $\mathop{One}, E$ and $\mathop{Neg}$ remain unchanged.

On the other hand Adam's matrices are all non-negative, implying the first bullet of the invariant, and can only modify $E$ by adding the value of $\mathop{Neg}$, which, by the invariant, was non-negative at the previous step.

The above proves the invariant which implies the second item of the lemma. \qedhere

\end{enumerate}

\end{proof}	
\begin{proof}[Proof sketch of Lemma~\ref{l:sim-halting}]
\begin{enumerate}
 \item For $\sigma_0$, we consider the following strategy:
 \begin{itemize}
  \item first play $\mathop{Init}$: Adam initializes a simulation of the
      2CMM in his private memory in the form of a vector $v_0 = (q_0=1,
      q_{i\neq 0}=0, x=0, y=0, E=1, \mathop{One}=1, \mathop{Neg}=0)$, on
      which all subsequent matrices will be applied (yielding
      $v_1,v_2,\dots$);
  \item then play $\mathop{Id}$ as long as Eve plays valid transitions of the 2CMM;
  \item play $F_c$ as soon as Eve plays an invalid move (if Eve lied on a counter value, then $c$ is the name of this counter; if Eve lied on current state, then $c$ is the name of this state; in both cases this corresponds to a negative coordinate);
  \item play $A$ until ${v_n}^{\mathop{Neg}} = -1$;
  \item finally play $P$ (nulling $E$) and a last time $\mathop{Init}$ (nulling the whole vector).
 \end{itemize}

 Explanation: the invariant from Lemma~\ref{l:sim-non-halting} holds as
 long as Eve simulates the 2CMM. When she stops simulating, the value of
 $E$ is still $1$ but some coordinate $c$ is negative. The ending sequence
 $F_cA\cdots AP\mathop{Init}$ forces the final vector to be $0$, no move
 of Eve can then prevent this from happening, as she cannot modify
 $\mathop{One}, E$ or $\mathop{Neg}$.

 Now remark that, for any vector $v$ such that  $v^E=1$, it holds that $v\cdot \mathop{Init} = v_0$ and thus $v\cdot\Omega = 0$ where $\Omega$ is the product matrix for the whole play until Adam plays a last time $\mathop{Init}$. This proves that as much yields for any initial vector, thus $\Omega = 0$ and therefore $P(\sigma_0, \tau) = 0$.

\item  The payoff function of an MMG is always non-negative. \qedhere
\end{enumerate}
\end{proof}	

\begin{proof}[Proof of Lemma~\ref{l:sim}]
It directly follows from  Lemmas~\ref{l:sim-non-halting} and \ref{l:sim-halting}.
\end{proof}	
\begin{proof}[Proof of Lemma~\ref{l:sim-nonneg}]

\emph{General idea:} Here, in order to use only non-negative matrices,  we introduce a slightly different construction. Indeed, previous encoding relied on the fact that when Eve cheats, a negative coordinate appears that Adam can use to punish her (by nulling the product matrix with clever additions using the negative integers). Now there is no hope to create a negative element in the product matrix (since the matrices of the MMG are non-negative), so the punishment will be less drastic: Adam will just try to obtain a product that grows more slowly than the product of matrices corresponding to a faithful simulation. For this Adam needs to reset the game infinitely often, as to force Eve to cheat as many times, within a bounded horizon.

The encoding uses the following idea: a counter of value $k$ is encoded at time $n$ by a coordinate of value $2^{n+k}$. This way, a counter decrement consists in keeping its coordinate value unchanged, while a counter increment consists in multiplying its coordinate by 4. A counter stalling at a given step still sees its coordinate multiplied by 2.

\emph{The vector space:}
Matrices are square of dimension $|Q|+4$. They act on vectors such that their $|Q|$ first coordinates represent the states of the 2CMM (positive value only in the coordinate corresponding to the active state of the simulation) and the four other coordinates $x_+, x_-, y_+, y_-$, are the two counters and their opposite (i.e.~their value will be $2^{n-c}$ instead of $2^{n+c}$).

\emph{Eve's matrices:}
In this game too, Eve tries to faithfully simulate the 2CMM. Her matrices are the following:
\begin{itemize}
 \item matrices $I_{qq'x}$:
 $q' := 2 q$; $q := 0$; $x_+ := 4 x_+$;  $y_+ := 2 y_+$;  $y_- := 2 y_-$;
  \item matrices $K_{qq'x}$:
 $q' := 2 q$; $q := 0$;  $(x_+,x_-) := 2 (x_-, x_+)$; $(y_+, y_-) := 2 (y_-, y_+)$;
\item matrices $D_{qq'x}$:
 $q' := 2 q$; $q := 0$; $x_- := 4 x_-$;  $y_+ := 2 y_+$;  $y_- := 2 y_-$;
 \item matrices $I_{qq'y}$, $K_{qq'y}$ and $D_{qq'y}$ are defined likewise.
\end{itemize}

Notice that the coordinate inversion of the previous construction, for the case of a successful $x=0$ test, is now translated as a coordinate swap between $x_+$ and $x_-$. Thus when Eve cheats on a counter value, be it one way or the other, $x_+$ has a value smaller than $2^n$.

\emph{Adam's matrices:}
\begin{itemize}
 \item matrix $\mathop{Id}$;
 \item matrices $P_x$ (and $P_y$): $q_0 := x_+$; $q_{i\neq 0} := 0$; $x_-=x_+$; $y_+=x_+$; $y_-=x_+$;
 \item matrices $P_q$: let $s=\sum_{q\in Q} q$ then $q_0 := s$; $q_{i\neq 0} := 0$; $x_+=s$;$x_-=s$; $y_+=s$; $y_-=s$.
\end{itemize}
Both $P_{\{x,y\}}$ and $P_q$ reset the simulation, forcing the copied value as the new norm for the product matrix.

Adam's strategy consists in playing $\mathop{Id}$ most of the time; playing $P_q$ whenever Eve cheats on the state, implying a null product; and playing $P_c$ whenever Eve cheats on a counter $c\in\{x,y\}$ value, implying a factor of norm $\leq 2^{f - 1}$ since the last time when Adam played a $P$ matrix (factor of length $f$).

Since Eve needs to cheat with a positive frequency if the run of the 2CMM is finite, then the final payoff will be  $<2$ (from the product of such factors, which are of bounded length).

If the run is infinite, whether Adam plays $\mathop{Id}$ or a $P$, there will be some coordinate that remains of magnitude $\geq 2^n$.

Details of the proof are similar to those of Lemma~\ref{l:sim}. We prove determinacy, with a value of 2, in the case when the 2CMM has an infinite run. For the other case we prove that Adam can ensure a payoff $<2$, but determinacy remains an open problem.
\end{proof}

\subsection*{Proof of Lemmas~\ref{lemma:f1}--\ref{L:equiv}}

\begin{proof}[Proof of Lemma~\ref{lemma:f1}]
Assume $D = \{d_1,\ldots, d_M\}$ and $T = \{t_1,\ldots, t_N\}$. Given
arbitrary strategies  $(\sigma,\tau)$ for the two players in the EG, let us
represent the set of all compatible plays as a forest. Its nodes  are labeled
by elements of $D$ on even levels and elements of $T$ on odd levels, and its
edges are labeled by symbols in $\Sigma$. The label of a node $q$ is denoted
$\ell(q)$; the sequence of labels on the path reaching $q$ from the
appropriate root in the forest is referred to as its address $\alpha(q)$. The
forest $\forest$ is  defined inductively as follows:
\begin{itemize}
	\item $\forest$ has $M$ root nodes labeled by $d_1,\dots,d_M$;
	\item all the outgoing edges of a node $q$ labeled $d\in D$ carry the symbol $a = \sigma(\alpha(q))$
	and the sons of the  node $q$ correspond to (and are labeled by) the elements of $\{t \mid (d,a,t) \in \Delta \}$;
	\item all the outgoing edges of a node $q$ labeled $t\in T$ carry the symbol $b = \tau(\alpha(q))$
	and the sons of the  node $q$ correspond to (and are labeled by) the elements of $\{d \mid (t,a,d) \in \Delta \}$.
\end{itemize}
The payoff of the EG can be characterized in terms of the growth rate of this
forest:
\[
P(\sigma,\tau)=\limsup_{n \to \infty} |\forest_{2n}|^{1/n},
\]
where $\forest_k$ denotes the set of nodes of $\forest$ at the level $k$.
Indeed $L(\sigma,\tau)$ is the set of labels of infinite paths of $\forest$,
hence $\pref(L(\sigma,\tau))$ is the set of addresses of nodes in $\forest$
(we use the fact that our strategies are required to be non-blocking). To
words of length $4n$ in $\pref(L(\sigma,\tau))$ correspond addresses of nodes
of level $2n$, and thus
\[
\limsup_{n \to \infty} |\pref_{4n}(L(\sigma,\tau))|^{1/n}=\limsup_{n \to \infty} |\forest_{2n}|^{1/n}
\] as required.

Let us characterize the number of nodes $|\forest_{2n}|$ in terms of
matrices. Let the vector $x^{(n)}=(x^{(n)}_1,\dots,x^{(n)}_j)$ be such that
$x^{(n)}_i$ is the number of nodes labeled by $d_i$ on $2n$-th level of
$\forest$; similarly let   $y^{(n)}=(y^{(n)}_1,\dots,y^{(n)}_N)$ be such that
$y^{(n)}_j$ is the number of nodes labeled by $t_j$ on $(2n+1)$-th level of
$\forest$. To relate $y^{(n)}$ to $x^{(n)}$ we observe that
\[
y^{(n)}_j=\sum_{i=1}^M\sum_{a\in\Sigma}\left|\big\{q\in \forest_{2n}\MID  \ell(q)=d_i \land \sigma(\alpha(q))=a\big\}\right|c_{ia,j}.
\]
Indeed, every node on level $2n$ with label $d_i$ and action $a$ generates on
the next level a node with label $t_j$ whenever $c_{ia,j}=1$. Summing up on
all $i$, $a$ and $q$ we obtain the quantity  $y^{(n)}_j$.
The expression for $y$ can be rewritten as
\begin{equation}\label{E:musum}
y^{(n)}_j=\sum_{i=1}^M x^{(n)}_i\sum_{a\in\Sigma}\mu_{ia}c_{ia,j}
\end{equation}
with $\mu_{ia}^{(n)}=\left|\big\{q\in \forest_{2n}\MID\ell(q)=d_i \land
\sigma(\alpha(q))=a\big\}\right|/x^{(n)}_i$ (whenever $x^{(n)}_i=0$,
coefficients $\mu_{ia}^{(n)}$ can be chosen arbitrarily, only respecting
conditions \eqref{E:mu} below). Intuitively, $\mu_{ia}^{(n)}$ is the
proportion among the states $d_i$ on level $2n$, of those for which Despot
takes the action $a$.
In matrix form \eqref{E:musum} can be rewritten as $y^{(n)}=x^{(n)} A_n$ with
$A_{n,ij}=\sum_{a\in\Sigma}\mu_{ia}^{(n)}c_{ia,j}$. We notice that
\begin{equation}\label{E:mu}
	\mu_{ia}^{(n)}\ge 0 \text{ and } \sum_{a\in\Sigma}\mu_{ia}^{(n)}=1,
\end{equation}
thus $i$-th row of $A_n$ belongs to $\conv(\setA_i)$, hence
$A_n\in\conv(\setA)$. Similarly, $x^{(n+1)}=y^{(n)}E_n$ for some
$E_n\in\conv(\setE)$. Initially $x^{(0)}=(1,\dots,1)$, and clearly
$|\forest_n|=x^{(n)}\cdot (1,\dots,1)^\transpose$, hence
\[
|\forest_{2n}|= (1,\dots,1)A_0E_0A_1E_1\cdots A_{n-1}E_{n-1} (1,\dots,1)^\transpose=\|A_0E_0A_1E_1\cdots A_{n-1}E_{n-1}\|.
\]
Taking  in the MMG over $(\conv(\setA)),\conv(\setE))$ the strategies
$\varsigma$ and $\theta$, which choose matrices $A_0,E_0,A_1,E_1,\dots$ we
obtain the required:
\[
P_{\text{EG}}(\sigma,\tau)=\limsup_{n \to \infty} |\forest_{2n}|^{1/n}=\limsup_{n \to \infty} \|A_0E_0A_1E_1\cdots A_{n-1}E_{n-1}\|^{1/n}=P_{\text{MMG}}(\varsigma,\theta).
\]

It is easy to see that for positional $\sigma$ our construction gives
$A_n=A_\sigma$ for all $n$.
\end{proof}

\begin{proof}[Proof of Lemma~\ref{L:1}]
As stated in \cite[Corollary~8.1.29]{HJ2:e}, for any nonnegative matrix $A$ and $u>0$%
\begin{equation}\label{E:horn29}
\alpha u\le Au\le \beta u \Rightarrow \alpha\le \rho(A)\le \beta,
\end{equation}
our statement (i) is now immediate. Let us prove the three remaining assertions.
\begin{enumerate}[(i)]\addtocounter{enumi}{1}
\item Let  $Au\le\rho u$ for $u>0$ with $A>0$ and $Au\neq\rho u$.
Then at least one coordinate of the vector $Au-\rho u\le 0$ is strictly negative.
Therefore the condition $A>0$ implies strict negativity of all coordinates of the vector $A(Au-\rho u)$.
Then there exists $\varepsilon>0$ such that $A(Au-\rho u)\le -\varepsilon u$ and therefore $A^{2}u=A(Au-\rho u)+\rho Au\le (\rho^{2}-\varepsilon)u$.
Then, by \eqref{E:horn29}, we get   $\rho(A^{2})\le\rho^{2}-\varepsilon$, and thus $\rho(A)\le\sqrt{\rho^{2}-\varepsilon}<\rho$, q.e.d.

\item The condition $Au\ge\rho u$ with non-zero $u\ge0$ implies $A^{n}u\ge \rho^{n}u$ for any $n\ge1$.
Then $\|A^{n}\|\cdot\|u\|\ge\|A^{n}u\|\ge \rho^{n}\|u\|$.
Therefore $\|A^{n}\|\ge \rho^{n}$, and by Gelfand's formula $\rho(A)\ge\rho$, q.e.d.

\item Now let $A>0$ and $Au\neq\rho u$.
Then at least one coordinate of the vector $Au-\rho u\ge 0$ is strictly positive.
Therefore the condition $A>0$ implies strict positivity of all the coordinates of the vector $A(Au-\rho u)$.
Then there exists $\varepsilon>0$ such that $A(Au-\rho u)\ge \varepsilon u$ and therefore $A^{2}u=A(Au-\rho u)+\rho Au\ge (\rho^{2}+\varepsilon)u$.
This, by (iii) applied to the matrix $A^{2}$, implies  $\rho(A^{2})\ge\rho^{2}+\varepsilon$, and thus $\rho(A)\ge\sqrt{\rho^{2}+\varepsilon}>\rho$, q.e.d. \qedhere
\end{enumerate}
\end{proof}

\begin{proof}[Proof of Lemma~\ref{L:alternative}]
To prove (i), we represent the vectors $u$ and $v$ in coordinate form:
\[
u=(u_{1},u_{2},\ldots,u_{M})^{\transpose},\quad
v=(v_{1},v_{2},\ldots,v_{N})^{\transpose}.
\]
Suppose that for some matrix $A=(a_{ij})\in\setA$ the inequality $Au\ge v$
fails. Then
\[
a_{i1}u_{1}+a_{i2}u_{2}+\cdots+a_{iM}u_{M}<v_{i}
\]
for some $i\in\{1,2,\ldots,N\}$; we may assume $i=1$ without loss of
generality. In this case, the matrix
\[
\bar{A}=\left[\begin{array}{cccccc}
a_{11}&a_{12}&\cdots&a_{1M}\\
\tilde{a}_{21}&\tilde{a}_{22}&\cdots&\tilde{a}_{2M}\\
\cdots&\cdots&\cdots&\cdots\\
\tilde{a}_{N1}&\tilde{a}_{N2}&\cdots&\tilde{a}_{NM}
\end{array}\right],
\]
obtained from the matrix $\tilde{A}=(\tilde{a}_{ij})$
replacing the first row by $a_{1}=[a_{11},a_{12},\ldots,a_{1M}]$, yields the
inequalities
$ a_{11}u_{1}+a_{12}u_{2}+\cdots+a_{1M}u_{M}<v_{1}$; and
$\tilde{a}_{i1}u_{1}+\tilde{a}_{i2}u_{2}+\cdots+\tilde{a}_{iM}u_{M}=v_{i}$
for $i=2,3,\ldots,N. $ Consequently, $\bar{A}u\le v$ and $\bar{A}u\neq v$,
which completes the proof of the first statement of the lemma.
The proof of statement (ii) is similar.
\end{proof}

\begin{proof}[Proof of Corollary~\ref{C:minimaxconv}]
We denote  $V=\minimax(\setA,\setB)$ and
$V'=\minimax(\conv(\setA),\conv(\setB))$. Then
\[
V'\stackrel1=\min_{A\in\conv(\setA)}\max_{B\in\conv(\setB)}\rho(BA)\stackrel2\le
\min_{A\in\setA}\max_{B\in\conv(\setB)}\rho(BA)\stackrel3=\min_{A\in\setA}\max_{B\in\setB}\rho(BA)=V,
\]
where 1 follows from the equality $\rho(AB)=\rho(BA)$, 2 from the inclusion
$\setA\subseteq \conv(\setA)$, 3 from Lemma~\ref{lem:iru} and
equalities~\eqref{E:minmax-conv}. Symmetrically,
\[
V'=\max_{B\in\conv(\setB)}\min_{A\in\conv(\setA)}\rho(AB)\ge
\max_{B\in\setB}\min_{A\in\conv(\setA)}\rho(AB)=\max_{B\in\setB}\min_{A\in\setA}\rho(AB)=V.\qedhere
\]
\end{proof}

\begin{proof}[Proof of Lemma~\ref{lemma:f2}]
	Let $\sigma'$ and $\tau'$  be arbitrary strategies in the EG, then   by Lemma~\ref{lemma:f1}  for the strategy pair $(\sigma',\tau)$ there is a corresponding pair $(\varsigma',E_\tau)$ with some strategy $\varsigma'$ having the same value in the MMG.  Symmetrically for the pair $(\sigma,\tau')$ there is a corresponding pair $(A_\sigma,\theta')$. We have:
	\[
	P(\sigma',\tau) = P(\varsigma',E_\tau) \le P(A_\sigma,E_\tau) = P(\sigma,\tau) =  P(A_\sigma,E_\tau) \le P(A_\sigma,\theta') = P(\sigma,\tau'),
	\]
	where the equalities come from Lemma~\ref{lemma:f1} and the inequalities from the optimality of $E_\tau$ and $A_\sigma$, respectively. Thus $\sigma$ and $\tau$ are optimal.
\end{proof}

We will need the following result in order to prove  Lemma~\ref{L:equiv}.
\begin{applemma}\label{L:mainmin}
Let $\setA$ be a compact IRU-set of positive $(N\times N)$-matrices.
\begin{enumerate}[\rm(i)]
\item If $\tilde{A}\in\setA$ is a matrix satisfying $\rho(\tilde{A}) =
    \check{\rho}(\setA)$ and $\tilde{v}$ is its positive eigenvector corresponding
    to the eigenvalue $\rho(\tilde{A})$, then $A\tilde{v}\ge
    \check{\rho}(\setA)\tilde{v}$ for all $A\in\setA$.
\item If $\tilde{A}\in\setA$ is a matrix satisfying $\rho(\tilde{A})
    = \hat{\rho}(\setA)$ and $\tilde{v}$ is its positive eigenvector corresponding
    to the eigenvalue $\rho(\tilde{A})$, then $A\tilde{v}\le
    \hat{\rho}(\setA)\tilde{v}$ for all $A\in\setA$.
\end{enumerate}
\end{applemma}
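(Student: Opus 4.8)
The plan is to obtain both statements as short consequences of the hourglass alternative (Lemma~\ref{L:alternative}) combined with the spectral-radius bounds of Lemma~\ref{L:1}, using the identities $\check{\rho}(\setA)=\min_{A\in\setA}\rho(A)$ and $\hat{\rho}(\setA)=\max_{A\in\setA}\rho(A)$ from \eqref{E:JSR-LSR} (the minimum and maximum being attained by compactness of $\setA$ and continuity of $\rho$, and $\tilde v$ being positive by Perron--Frobenius since the matrices are positive). No genuinely hard estimate is needed once the hourglass alternative is available; the lemma is essentially the repackaging of Lemmas~\ref{L:alternative} and~\ref{L:1} in the form later needed for Lemma~\ref{L:equiv}.

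For statement (i), I would set $u=\tilde v$, $\rho_0=\check{\rho}(\setA)$ and $v=\rho_0 u$, so that $\tilde A u=\rho(\tilde A)u=\rho_0 u=v$ and the hypothesis of Lemma~\ref{L:alternative}(i) is met. That lemma yields a dichotomy: either $Au\ge v$ for every $A\in\setA$ --- which is exactly the assertion $A\tilde v\ge\check{\rho}(\setA)\tilde v$ --- or there is a matrix $\bar A\in\setA$ with $\bar A u\le v$ and $\bar A u\neq v$. In the second case, since $u>0$ and $\rho_0\ge 0$, parts (i)--(ii) of Lemma~\ref{L:1} (using that $\bar A>0$) give $\rho(\bar A)<\rho_0=\check{\rho}(\setA)=\min_{A\in\setA}\rho(A)$, a contradiction. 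Hence the first alternative must hold, proving (i).

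For statement (ii) the argument is the mirror image: put $u=\tilde v$, $\rho_0=\hat{\rho}(\setA)$, $v=\rho_0 u=\tilde A u$, and invoke Lemma~\ref{L:alternative}(ii). Either $Au\le v$ for all $A\in\setA$, which is the claimed inequality $A\tilde v\le\hat{\rho}(\setA)\tilde v$, or some $\bar A\in\setA$ satisfies $\bar A u\ge v$ and $\bar A u\neq v$; then parts (iii)--(iv) of Lemma~\ref{L:1} give $\rho(\bar A)>\rho_0=\hat{\rho}(\setA)=\max_{A\in\setA}\rho(A)$, which is impossible. The only point worth flagging is that strict positivity of the matrices is used twice --- once to guarantee the positive eigenvector $\tilde v$ via Perron--Frobenius, and once to apply the strict versions of the Perron--Frobenius bounds in Lemma~\ref{L:1} --- but since the lemma is stated for positive IRU-sets this causes no difficulty, and I expect no real obstacle in carrying the argument through.
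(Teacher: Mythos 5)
Your proof is correct and follows essentially the same route as the paper's: apply the hourglass alternative (Lemma~\ref{L:alternative}) to $\tilde A\tilde v=\check{\rho}(\setA)\tilde v$ (resp.\ $\hat{\rho}(\setA)\tilde v$) and rule out the second branch of the dichotomy via the strict spectral-radius bounds of Lemma~\ref{L:1}, contradicting minimality (resp.\ maximality) of $\rho(\tilde A)$ over $\setA$ as given by \eqref{E:JSR-LSR}. Your explicit flagging of where positivity is used is a welcome clarification but not a departure from the paper's argument.
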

\begin{proof}[Proof of Lemma~\ref{L:mainmin}]
To prove (i) let us note that
$\tilde{A}\tilde{v}=\check{\rho}(\setA)\tilde{v}$. Then by
Lemma~\ref{L:alternative}(i) either
$A\tilde{v}\ge\check{\rho}(\setA)\tilde{v}$ for all $A\in\setA$ or there
exists a matrix $\bar{A}\in\setA$ such that $\bar{A}\tilde{v}\le
\check{\rho}(\setA)\tilde{v}$ and $\bar{A}\tilde{v}\neq
\check{\rho}(\setA)\tilde{v}$. In the latter case, by Lemma~\ref{L:1}
 the inequality $\rho(\bar{A})<\check{\rho}(\setA)$ would hold,  which
contradicts to the definition of $\check{\rho}(\setA)$. Hence, the inequality
$A\tilde{v}\ge\check{\rho}(\setA)\tilde{v}$ holds for all $A\in\setA$, q.e.d.
Assertion (ii) is proved similarly.
\end{proof}

\begin{proof}[Proof of Lemma~\ref{L:equiv}]
  For positive matrices, implications $\Leftarrow$ follow from Lemma~\ref{L:1}. As for $\Rightarrow$, it suffices to take $v$ the eigenvector (corresponding to the spectral radius) of the matrix $\tilde{A}\in\setA$ with the largest (smallest) spectral radius, and to apply Lemma~\ref{L:mainmin}.

As for non-negative matrices, we have four implications to prove:
\begin{description}
  \item[\eqref{E:equiv:lt}, $\Rightarrow$] Denote, for any $\varepsilon>0$,
      $\setA_{\varepsilon}=\{A+\varepsilon\1\Mid A\in\setA\}$. If
      $\hat{\rho}(\setA)<\alpha$ then due to compactness of the set $\setA$
      there exists $\varepsilon>0$ such that
      $\hat{\rho}(\setA_{\varepsilon})
      =\hat{\rho}(\setA+\varepsilon\1)<\alpha$. Then by  \eqref{E:equiv:lt}
      (already proved for positive matrices), there exists $v>0$ such that
      $(A+\varepsilon\1)v<\alpha v$ for all $A\in\setA$. Since
      $Av\le(A+\varepsilon\1)v$, then $Av<\alpha v$ for all $A\in\setA$,
      q.e.d.
\item[\eqref{E:equiv:lt}, $\Leftarrow$] Suppose there exists $v>0$ such
    that $Av<\alpha v$ for all $A\in \setA$. Then due to compactness of the
    set $\setA$ there exists $\varepsilon>0$ such that
    $(A+\varepsilon\1)v<\alpha v$ for all $A\in\setA$. Therefore by
    \eqref{E:equiv:lt} (for positive matrices)
    $\hat{\rho}(\setA+\varepsilon\1)<\alpha$, and hence by the monotonicity
    of the spectral radius we obtain $\hat{\rho}(\setA)<\alpha$, q.e.d.
\item[\eqref{E:equiv:ge2}, $\Rightarrow$] Let
    $\check{\rho}(\setA)\ge\alpha$, then by the monotonicity of the
    spectral radius it holds that
    $\check{\rho}(\setA+\varepsilon\1)\ge\alpha$ for any $\varepsilon>0$.
    Then by  \eqref{E:equiv:ge} (for positive matrices) for any
    $\varepsilon>0$ exists a vector $v_{\varepsilon}>0$ such that
    $\|v_{\varepsilon}\|=1$ and
\begin{equation}\label{E:Ae}
(A+\varepsilon\1)v_{\varepsilon}\ge\alpha v_{\varepsilon}
\end{equation}
for all $A\in\setA$. Choose  a sequence $\varepsilon_{n}\to0$ for which the
corresponding vectors $v_{\varepsilon_{n}}$ converge to some vector $v\ge
0$ (let us point out that $\|v\|=1$ and so it is non-zero). Then passing to
the limit in \eqref{E:Ae} we obtain $Av\ge\alpha v$ for all $A\in\setA$,
q.e.d.
\item[\eqref{E:equiv:ge2}, $\Leftarrow$] Suppose there exists a non-zero
    vector $v\ge0$ such that $Av\ge\alpha v$ for all $A\in \setA$. Then by
    Lemma~\ref{L:1}, $\rho(A)\ge\alpha$ for all $A\in \setA$ and hence
    $\check{\rho}(\setA)\ge\alpha$, q.e.d.\qedhere
\end{description}
 \end{proof}

\end{document}